\date{}
\newtheorem{proposition}{Proposition}
\newtheorem{property}{Property}
\newtheorem{numScheme}{Numerical Scheme}
\begin{document}

\begin{frontmatter}
\title{An asymptotic preserving well-balanced scheme for the isothermal fluid equations in low-temperature plasma applications}


\author[cmap,lpp]{A.~Alvarez Laguna} 
\ead{alejandro.alvarez-laguna@polytechnique.edu}
\author[cmap]{T.~Pichard}
\ead{teddy.pichard@polytechnique.edu}
\author[vki]{T.~Magin}
\ead{magin@vki.ac.be}
\author[lpp]{P.~Chabert}
\ead{pascal.chabert@lpp.polytechnique.fr}
\author[lpp]{A.~Bourdon}
\ead{anne.bourdon@lpp.polytechnique.fr}
\author[cmap]{M.~Massot}
\ead{marc.massot@polytechnique.edu}

\address[cmap]{Centre de Math\'ematiques Appliqu\'ees,Ecole Polytechnique, Route de Saclay, 91128 Palaiseau Cedex, France}
\address[lpp]{Laboratoire de Physique des Plasmas, CNRS, Sorbonne Universit\'e, Univ.~Paris Sud, Ecole Polytechnique, F-91128, Palaiseau, France}
\address[vki]{von Karman Institute for Fluid Dynamics, Waterloosesteenweg 72, 1640 Sint Genesius Rode, Belgium}

\begin{abstract}

 We present a novel numerical scheme for the efficient and accurate solution of the isothermal two-fluid (electron + ion) equations coupled to Poisson's equation for low-temperature plasmas. The model considers electrons and ions as separate fluids, comprising the electron inertia and charge separation. The discretization of this system with standard explicit schemes is constrained by very restrictive time steps and cell sizes related to the resolution of the Debye length, electron plasma frequency, and electron sound waves. Both sheath and  electron inertia are fundamental to fully explain the physics in low-pressure and low-temperature plasmas. However, most of the phenomena of interest for fluid models occur at speeds much slower than the electron thermal speed and are quasi-neutral, except in small charged regions. A numerical method that is able to simulate efficiently and accurately all these regimes is a challenge due to the multiscale character of the problem. In this work, we present a scheme based on the Lagrange-projection operator splitting that preserves the asymptotic regime where the plasma is quasi-neutral with massless electrons. As a result, the quasi-neutral regime is treated without the need of an implicit solver nor the resolution of the Debye length and electron plasma frequency. Additionally, the scheme proves to accurately represent the dynamics of the electrons both at low speeds and when the electron speed is comparable to the thermal speed. In addition, a well-balanced treatment of the ion source terms is proposed in order to tackle problems where the ion temperature is very low compared to the electron temperature. The scheme significantly improves the accuracy both in the quasi-neutral limit and in the presence of plasma sheaths when the Debye length is resolved. In order to assess the performance of the scheme in low-temperature plasmas conditions, we propose two specifically designed test-cases: a quasi-neutral two-stream periodic perturbation with analytical solution and a low-temperature discharge that includes sheaths. The numerical strategy, its accuracy, and computational efficiency are assessed on these two discriminating configurations. 
\end{abstract}
\begin{keyword}
Low-temperature plasmas \sep Finite Volume Method \sep  Asymptotic-preserving scheme \sep Well-balanced scheme \sep Multi-fluid model
\end{keyword}
\end{frontmatter}

\section{Introduction}

Low-temperature plasmas are fundamental to a wide range of technological applications -- from lighting to semiconductor manufacturing and electric propulsion. These plasmas are created by an electrical discharge in a gas. The physical conditions where they occur are particularly diverse: pressures varying from below a millitorr to few hundred atmospheres, different excitation methods (e.g., inductively coupled plasma (ICP), capacitively coupled plasma (CCP), dielectric barrier discharge (DBD), magnetron), diverse geometries, and power ranges (see, e.g., \cite{Chabert11,Lieberman94}). For that reason, a number of different descriptions are used for their study, from kinetic approaches, e.g., Particle-In-Cell/Monte Carlo Collision (PIC-MCC) and Direct Simulation Monte Carlo (DSMC) method, to fluid and global models (see \cite{Bogaerts17,Alves17} for a review on the modeling and numerical approaches of low-temperature plasmas). 

In atmospheric pressure discharges, the drift-diffusion approximation (e.g., \cite{Colella99,Becker13,Markosyan15,Hagelaar00, Duarte12, Bourdon16,Viegas18}) is a valid simplification. However, under low-pressure conditions, the momentum transfer between species is smaller since the number of collisions decreases for decreasing pressures. This results in non-equilibrium conditions that may lead to a differential motion between the heavy species and electrons. This differential motion is responsible for plasma instabilities such as the two-stream instability \cite{Chen84} or the electron drift instability that is observed to cause anomalous transport in Hall thrusters \cite{Adam04,Adam08,Ducrocq06, Heron13,Boeuf14,Lafleur16a,Boeuf18}. Even though the electron inertial term is small in the electron momentum equations, these instabilities cannot be explained without it. Similarly, 
electron inertia cannot be completely neglected to fully explain the physics of sheaths \cite{Duarte11}, in electron sheaths  and presheaths \cite{Scheiner15, Borgohain17}, in plasma sheath instabilities \cite{Stenzel11}, Langmuir probes \cite{Guittienne18}, in magnetized plasmas \cite{Andres14, Jana17,Srinivasan18}, and Hall thrusters \cite{Ahedo02,Ahedo05}.

For that reason, in low-pressure discharges, alternative models to the drift-diffusion approach such as kinetic and hybrid models, the plasma moment equations or the multicomponent non-equilibrium one velocity fluid model \cite{Magin09,Wargnier18c} are needed. Kinetic simulations provide a very accurate description of the state of the plasma, but are computationally very expensive. On the other hand, hybrid models \cite{Kushner09, Hara12, Kolobov18} are a cheaper alternative that combines the fluid and the kinetic description. However, both methods need to resolve the Debye length and the electron plasma frequency if the time integration is explicit and the quasi-neutrality hypothesis is not assumed.

In Table \ref{tableScales}, we present characteristic conditions of a typical low-temperature low-pressure Argon plasma RF discharge (from \cite{Chabert11}). We first note that under these pressure conditions, the Knudsen number is of order one for electrons and ten times smaller for ions. The kinetic phenomena is important and transport models are needed to provide a closure for the plasma fluid equations. The second important feature of Table \ref{tableScales} is related to the smallness of the electron-to-ion mass ratio and the normalized Debye length. Most of the explicit methods (PIC methods, hybrid methods or deterministic discretization of the multi-fluid equations) require to resolve these scales in order to guarantee the stability of the scheme. However, if we consider an explicit 1D simulation of the discharge of Table \ref{tableScales} that resolves the Debye length and the plasma frequency with ten spatial and temporal points, the numerical set-up would need around $10^4$ spatial points (in 1D) and $10^6$ time steps to simulate one transit time of an ion acoustic wave. Therefore, this set-up involves a significant computational cost for a 1D simulation. Implicit methods can guarantee the numerical stability with larger time-steps. However, the Newton method to solve the linear system can become also very expensive.


In the present work, we consider the isothermal plasma fluid equations under collisionless conditions. This system of equations contains the numerical difficulties associated to the low-temperature moment plasma equations in electrostatic conditions. The model considers the first two moments of Vlasov's equation for electrons and ions while assuming both temperatures to be constant. The dynamics of the two fluids are coupled through the electric potential, which satisfies a Poisson equation. These equations are widely used in the sheath theory \cite{Tonks29,Bohm49,Riemann05} and in the study of plasma waves \cite{Chen84}. Although kinetic phenomena play an important role in the plasma sheath \cite{Scheiner15}, a fluid model that does not assume quasi-neutrality can potentially capture the interaction between the macroscopic scales and the sheaths.



The most general form of the multi-fluid plasma equations considers the electron inertial term in the electron momentum equation and charge separation effects. This is equivalent to considering a finite electron mass and a finite Debye length. As mentioned before, this allows for representing plasma instabilities such as the two-stream or the drift waves and charged regions of the plasma such as the sheaths. Nevertheless, both the electron mass and the Debye length are two very small parameters as compared to the ion and macroscopic scales. Consequently, these small scales impose very restrictive numerical constraints related to the resolution of the Debye length, electron plasma waves, and electron acoustic waves. Consequently, due to these requirements, the time-steps and mesh sizes are not significantly more advantageous than in the kinetic or hybrid approach. Moreover, the multiscale character of the problem can lead to very large discretization errors when the small scales are not properly resolved (see~\cite{AlvarezLaguna18}).

\begin{table}[!htb] 
\caption{Characteristic values of an Argon RF discharge at $1 $ Pa \cite{Chabert11}.}\label{tableScales} 
\begin{center}
\resizebox{\columnwidth}{!}{
\begin{tabular}{l | c | c c || l | c | c}
  \multicolumn{4}{c ||}{Dimensional quantities} & \multicolumn{3}{c}{Dimensionless quantities} \\
  \hline\\[-0.34cm]
  \hline\\[-0.37cm]
  Neutral density &$\nneut$ & $1.25\times 10^{20}$ & $\text{m}^{-3}$ & Electron-to-ion mass ratio & $\varepsilon = m_\elec/m_\ion$ & $1.36\times10^{-5}$  \\
  Electron density &$n_{\elec,\ion}$ & $10^{16}$ & $\text{m}^{-3}$ & Ion-to-electron temperature ration & $\kappa = T_\ion/T_\elec$ & $0.025$\\
   Neutral and ion temperature &$T_{\neut,\ion}$ & $0.05$ & $\text{eV}$ & Normalized Debye length & $\lambda = \debye/l$ & $3.5\times10^{-3}$  \\
  Electron temperature & $T_{\elec}$ & $2$ & $\text{eV}$ & Ionization level & $n_{\elec, \ion}/\nneut$& $8\times10^{-5}$\\
  Distance between plates & $l$ & $3$ & $\text{cm}$ & Electron Knudsen number& $\text{Kn}_\elec$ & $1.7$ \\
  Ion-neutral collisional cross section & $\sigma_{\ion\neut}$ & $10^{-18}$ & $\text{m}^2$ & Ion Knudsen number& $\text{Kn}_\ion$ & $0.17$\\
  Electron-neutral collisional cross section & $\sigma_{\elec\neut}$ & $10^{-19}$& $\text{m}^2$ & Normalized ionization rate & $\Damk$  & $0.0139$\\
  Ionization constant & $K_{ion}$ & $8.16\times10^{-18}$& $\text{m}^3\text{s}^{-1}$ & Normalized electron collision rate &$\nueb $ & 153.8 \\
  Ionization potential & $\varepsilon_{ion}$ & $17.44$ & $\text{eV}$ &Normalized ion collision rate & $\nuib $& $0.94$ \\
  Electron plasma period & $\omega^{-1}_{p\elec}$ & $1.77\times 10^{-10}$ & s & Normalized  plasma period &  $\bar\omega^{-1}_{p\elec} =  \omega^{-1}_{p\elec}u_B/l$& $1.29\times 10^{-5}$
  \end{tabular}
  }
\end{center}
\end{table}

Numerical methods for the ideal multi-fluid equations coupled to Maxwell's equations have been proposed by a number of authors \cite{Sousa16,Shumlak11,Shumlak03,Loverich11,Hakim06,AlvarezLaguna17c,AlvarezLaguna17b, Alonso19}, for the study of plasma sheaths \cite{Cagas17}, and for the study of plasma expansion in vacuum with the isentropic electrostatic approximation \cite{Crispel04,Crispel05,Crispel07,Degond08,Degond11}. The main difficulties of the low-temperature multi-fluid plasma model and the solutions proposed in the current literature are summarized in the following. First, the stability of explicit schemes for the multi-fluid equations solving for the Poisson equation requires the resolution of the electron plasma wave frequency \cite{Degond11}. Alvarez Laguna et al.~\cite{AlvarezLaguna17c,AlvarezLaguna17b} have proposed implicit time integration for the previous set of equations coupled to full Maxwell's equations. However, the inversion of the matrix is computationally costly, which can be improved by the use of GPUs \cite{Alonso19}. Alternatively, a more advantageous approach is the asymptotic-preserving (AP) scheme proposed such as the one proposed by Degond et al.~\cite{Crispel07,Degond08,Degond11}. AP schemes \cite{Jin99} preserve the quasi-neutral asymptotic limit without the resolution of the Debye length nor the plasma frequency. These schemes do not need an implicit solver, which results in a significant advantage from the computational point of view. However, the AP scheme from \cite{Crispel07,Degond08} is not able to tackle the asymptotic limit of the small electron-to-ion mass ratio, finding large errors in the velocity of the electrons even with high order schemes. Additionally, the AP scheme proposed in \cite{Crispel07} demands the solution of a second order equation for the electric potential that requires the storage of the solution in two different time-steps, which increases the computational stencil.

The second difficulty of the multi-fluid equations is due to the small mass of the electrons, which results in a very large electron speed of sound. However, in most of the cases, the electron fluid travels at much smaller speeds than the electron thermal speed. This corresponds to a low-Mach regime for electrons, which is known to give numerical problems due to an excessive numerical dissipation that restricts the time step of compressible solvers \cite{VKILS99}. Different strategies are proposed in order to tackle the low-Mach regime in compressible solvers. One method are the so-called all-regime flux-splitting methods such as the AUSM$^+$-up \cite{Liou06} (applied to the multi-fluid equations in \cite{AlvarezLaguna17c,AlvarezLaguna17b, Alonso19}) or the preconditioning methods to remove the stiffness of the low-Mach regime \cite{Turkel87}. Similarly, the operator-splitting Lagrange-projection scheme is combined to the preconditioning method by Chalons et al.~\cite{Chalons16}. Additionally, the AP schemes are also used to tackle the incompressible asymptotic behaviour in the Euler equations \cite{Dimarco18}.

The third difficulty is related to the regime where the temperature of the ions is much lower than the electron temperature. In low-temperature plasmas, the electric potential, in general, scales as the electron temperature (in eV). This results in a Lorentz force that is much larger than the ion pressure flux. The main problem arises when the ion convective flux is treated with an upwind scheme and the Lorentz force (which involves the gradient of the potential) with a cell-centered scheme. The upwind scheme provides accurate non-oscillatory solutions in homogeneous problems, but they can lose accuracy in the presence of stiff source terms \cite{Roe87}. A solution to this problem is to use well-balanced schemes that upwind the source terms in a consistent manner \cite{Bermudez94, Leveque90a, Gosse13}. 

In the present work, we propose a numerical scheme that addresses these three problems and design three dedicated discriminating test-cases in order to benchmark such a method. Regarding the stiffness introduced by the smallness of the Debye length and the electron mass, we propose a novel operator splitting method based on the all-regime Lagrange-projection \cite{Chalons16} coupled to the Poisson equation. We retrieve a numerical scheme that has the AP property of the quasi-neutral limit with massless electrons. Therefore, neither the Debye length nor the electron plasma frequency need to be resolved to be stable and accurate, without the need of an implicit solver. As compared to previous AP schemes for the multi-fluid equations \cite{Crispel07, Degond08}, our approach does not need to solve an equivalent second order equation for the electric potential, but it solves the standard Poisson equation. This results in a much simpler algorithm for the potential. Additionally, the main advantage is that the numerical scheme can simultaneously tackle the problem of the small mass of electrons. The solution of these two problems with a unique AP scheme is an original contribution of this work. This can be considered a step forward that could help to reduce the numerical cost of solving the electron fluid dynamics coupled to the one of ions. Finally, a well-balanced discretization of the Lorentz force in the ion momentum equation is proposed. In the results we show that an incorrect discretization of this term can lead to spurious numerical instabilities, which can be avoided with well-balanced schemes.

The numerical scheme is then benchmarked against three numerical set-ups that allows us to assess our strategy in terms of accuracy and computational cost. The first one simulates a quasi-neutral periodic perturbation in a thermal plasma. This case is used to test the asymptotic preserving property of the discretization for a small Debye length and a small electron-to-ion mass ratio. The same case is reproduced in a low-temperature plasma in order to assess the well-balanced discretization of the ions. We demonstrate that the proposed numerical strategy allows for a dramatic reduction of the computational time in cases with quasi-neutral plasma, as compared to a standard discretization. Finally, a low-temperature plasma discharge between two floating walls is simulated. This realistic set-up is able to capture the physics of the electrically charged plasma sheath coexisting with a quasi-neutral bulk and encompasses most of the difficulties of more realistic configurations, while amenable to detailed analysis of the proposed methods.


\paragraph{Outline of the paper} First, we present the set of equations, its normalization, and the asymptotic study when both the mass of electrons and the Debye length are very small as compared to the ion scales. Second, we present a standard discretization of the system in order to illustrate the difficulties of the numerical discretization. Third, we describe the operator splitting scheme and the well-balanced treatment of the ion source-terms while proving that the acoustic step preserves the asymptotic behaviour. Fourth, we simulate a two-stream perturbation in thermal and low-temperature plasmas. Finally, we self-consistently simulate a low-temperature discharge between two floating plates. The numerical scheme proves to be accurate both in the quasi-neutral limit and when the sheaths are included in the computational domain.

\section{Set of equations and asymptotic behaviour}

We consider the isothermal plasma (electron +  ion) equations. The equations are obtained by taking the first two moments of the kinetic equation for electrons and ions (see, e.g., \cite{Braginskii65} for a derivation). These moments correspond to the mass and momentum balance laws for the two charged species. The system considers ionization reactions and, in the present paper, we neglect the effect of recombination and the elastic collisions. As explained in \cite{Chabert11}, in most discharges these scales are much smaller than the ionization scale. Nevertheless, the elastic collisions in low-pressure conditions do not impose a major difficulty from the numerical point of view and for that reason are not the focus of the present work. Finally, the system is closed with the Poisson equation for the electric potential. 


The equations in dimensional form read
\begin{subequations}
\begin{eqnarray}
    \dt \nelec + \dx  (\nelec \ue) &=&\nelec \nuionization,\label{eq1}\\
    \dt \nion + \dx (\nion \uion) &=&\nelec \nuionization,\label{eq2}\\
     \me \dt (\nelec \ue) +\dx  \left(\me\nelec\ue^2 + \pelec \right)&=&\nelec \ee \dx \phi ,\label{eq3}\\
      \mion \dt (\nion \uion) +\dx \left(\mion\nion\uion^2 + \pion \right)&=&-\nion \ee \dx \phi ,\label{eq4}\\
    \partial^2_{xx} \phi &=& \frac{\nelec-\nion}{\permvac} \ee, \label{eq5}
\end{eqnarray}
\end{subequations}
where $\nelec $ and  $ \nion $ stand for the electron and ion number density respectively, and $\ue$ and $\uion$ the electron  and ion velocities.  The electron-impact ionization rate coefficient $\nuionization$ is a function of the electron temperature $\nuionization = \nneut\Ki(\Te)$, for instance through Arrhenius' law, $\Ki(\Te)=A\exp[{{- \Eion}/{(\boltz\Te)}}]$, where the quantity $\Eion$ is the ionization energy. The neutral number density $\nneut$ is assumed to be constant. The partial pressures of the electron and ion fluids are assumed to obey the perfect gas law, $\pelec=\nelec \boltz \Te $ and $\pion=\nion \boltz \Tion $, where $\boltz$ is Boltzmann's constant, and $\Te$ and $\Tion$, the  electron and ion constant temperatures, respectively. The plasma is considered to be in thermal non-equilibrium, i.e., $\Te\neq \Tion$, at constant temperatures.

Despite the isothermal assumption, the considered set of equations contains the main difficulties of the low-temperature moment plasma models. These difficulties, as described in the introduction, are related to the mass disparity between electrons and ions, the small Debye length, and the low-temperature of the ions. 

\subsection{Normalized equations}

The set of equations \eqref{eq1}-\eqref{eq4} and \eqref{eq5} are normalized by introducing some reference quantities: $\no$, the characteristic number density common to electrons and ions,  $\Lo=l$, the reference length, $\Te$, the electron temperature, and $\Tion$, the ion temperature. The rest of the reference variables are calculated as a combination of the previous ones: the reference velocity common to electrons and ions is based on the Bohm velocity $\uo \equiv u_B = \sqrt{\boltz \Te / \mion}$, the charateristic time  $\timeo= \Lo/ \uo$ is obtained from the reference velocity and reference distance, whereas the reference potential  $\phio = \boltz \Te/\ee$ is based on the thermal energy of electrons. The normalized set of equations reads
\begin{subequations}\label{eq:sysND}
\begin{eqnarray}
    \dtb \rhoeb + \dxb (\rhoeb \ueb) &=&  \rhoeb \Damk ,\label{elMass_ND}\\
    \dtb  \rhoionb + \dxb ( \rhoionb \uionb) &=& \rhoeb \Damk,\label{IonMass_ND}\\
    \dtb (\rhoeb\ueb) + \dxb  \left[\rhoeb \left(\ueb^2+  \varepsilon^{-1}\right)\right]&=&\varepsilon^{-1}\rhoeb \dxb\phib ,\label{elMom_ND}\\
    \dtb (\rhoionb \uib) + \dxb \left[\rhoionb \left(\uionb^2 + \kappaT\right) \right]&=&-\rhoionb \dxb\phib ,\label{IonMom_ND}\\
    \dxxb \phib &=& \lambdaSq^{-1}\left(\rhoeb - \rhoionb\right), \label{Pois_ND}
\end{eqnarray}
\end{subequations}
where the non-dimensional parameters are defined as:
\begin{subequations}
\begin{eqnarray}
  \text{Electron-ion mass ratio: }\varepsilon = \me/\mion,&~~&\text{ion-to-electron temperature: }\kappaT= \Tion/\Te,\nonumber \\ \text{squared non-dimensional Debye length: }\lambdaSq = \frac{\permvac\boltz\Te}{\nelec\ee^2L^2_0},&~~&\text{dimensionless frequency: }\Damk= \nuionization\timeo.\nonumber
\end{eqnarray}
\end{subequations}

Note that, for the sake of simplicity of notation, we use the square of the Debye length $\lambdaSq$ as a non-dimensional parameter. Additionally, we define the non-dimensional Debye length $\lambda = \sqrt{\lambdaSq}$ as it will be used to describe characteristic lengths of the problem.

We highlight the importance of the three nondimensional parameters $\varepsilon$, $\lambdaSq$, and $\kappaT$ in order to build a numerical scheme. In Table \ref{tableScales}, we present the typical values in an Argon RF discharge. As discussed in the introduction, the smallness of the $\varepsilon$ and $\lambdaSq$ impose very restrictive numerical constraints. In the following section, we study the asymptotic behaviour when these two parameters tend to zero. 

\subsection{Asymptotic behavior}\label{sec:AsymBehavior}

We study the multiscale asymptotic behavior \cite{Klainerman81} with respect to $\varepsilon$ and $\lambdaSq$. Previous work \cite{Degond11,Crispel07} performed this study only with respect to the Debye length, $\lambda$. However, the inclusion of $\varepsilon$ in the analysis is fundamental as the electron velocity is generally much smaller than the thermal speed of electrons. Since  $\varepsilon$ and $\lambdaSq$ are the smallest parameters of the system, we do not include $\kappa$ or $\Damk$ in our study. 

From a physics point of view, we can consider three different asymptotic behaviours that correspond to different plasma phenomena, as illustrated in Fig.~\ref{Fig:asymLimits}. The complete problem $^\lambdaSq\espace F^\varepsilon$ corresponds to the system of eqs.~\eqref{eq:sysND}. This system considers finite Debye length and electron inertia. This problem resolves all the possible scales, being the fastest one corresponding to the electron plasma waves. The main problem of designing a numerical scheme for these small scales is that it might be very inefficient to represent the macroscopic scales and leading to consistency problems due to an imbalanced numerical dissipation when the two parameters are small. 

The regime where the Debye length tends to zero for arbitrarily small $\varepsilon$ corresponds to a quasineutral plasma with the electrons that can move at bulk speed closer to the electron thermal velocity. This problem is of interest specially in the presence of a magnetic field that can produce drift motions at very high speed, such as in Hall effect thrusters. This regime allows for the representation of plasma instabilities such as the two-stream instability or the electron-drift instability. We denote this problem as $^0\espace F^\varepsilon$.

Alternatively, we can consider the asymptotic behaviour of $\varepsilon$ tending to zero for a finite Debye length. In this regime, the electrons move at speeds comparable to the ion sound velocity (Bohm's velocity) and the Debye length is arbitrarily small. This regime is important, for instance, in the plasma-sheath transition. We denote this problem as $^\lambdaSq\espace F^0$.

Finally, we consider the asymptotic limit where the electrons travel at speeds comparable to the ion velocity in a quasi-neutral plasma, i.e., $\varepsilon\rightarrow0$ and $\lambdaSq\rightarrow0$ This behaviour is present in most of the phenomena occuring at ion scales in cold and thermal plasmas. For that reason, we focus in this problem in the rest of the paper. We denote the problem as $^0\espace F^0$. In the following, we show the set of equations corresponding to the problem $^0\espace F^0$ and we demonstrate that $\lim\limits_{\lambdaSq\rightarrow0}~^\lambdaSq\espace F^0 =\lim\limits_{\varepsilon\rightarrow0}~^0\espace F^\varepsilon \equiv~^0\espace F^0$ in the case of periodic boundary conditions.

\begin{figure} [!htb] 
	\centering
		  \includegraphics[trim=0cm 0cm 0cm 0cm, clip=true,width=0.7\textwidth]{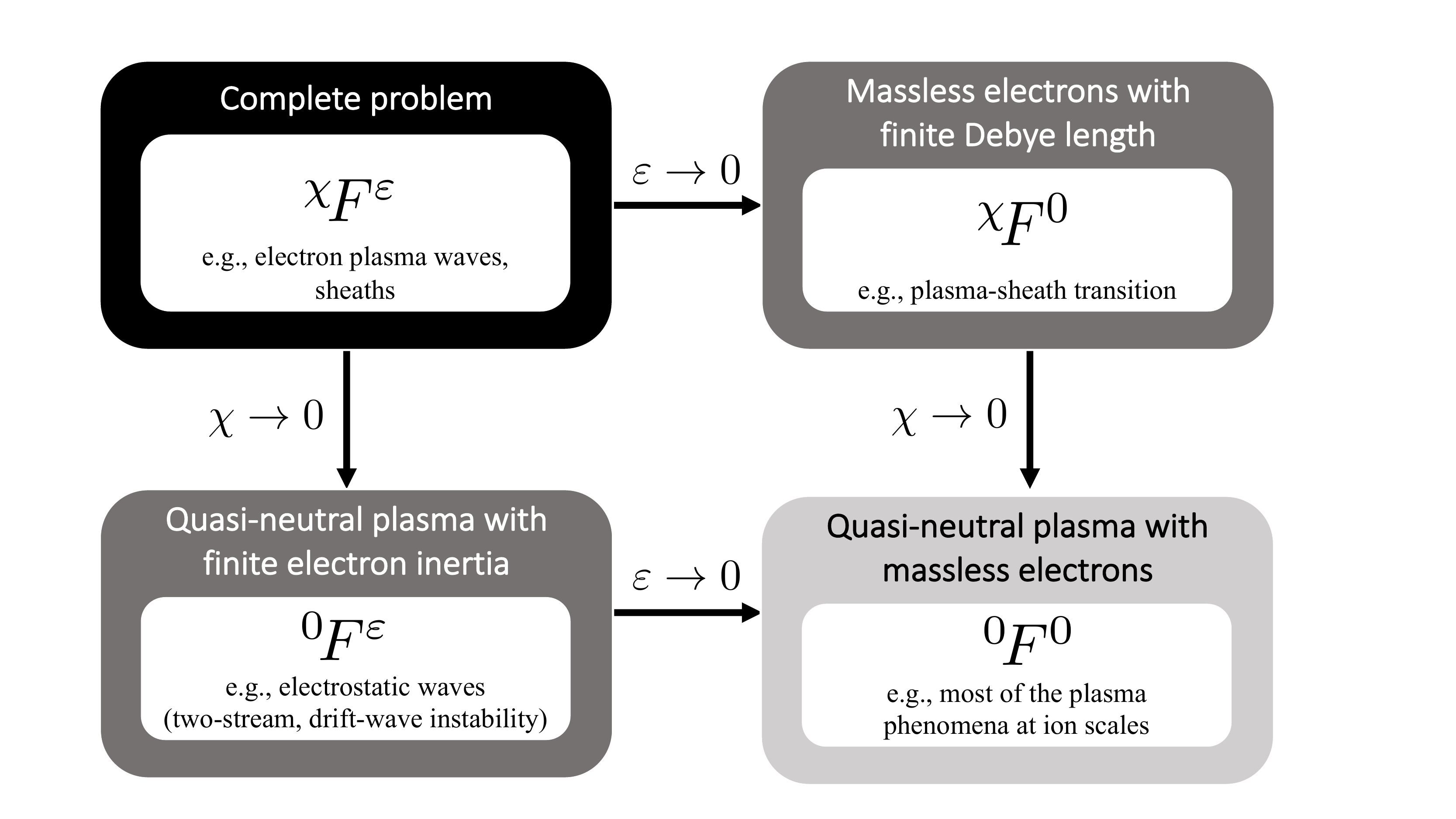} 
	\caption{Asymptotic behaviour of the electrostatic isothermal multi-fluid plasma equations.}
	\label{Fig:asymLimits}
\end{figure}


Let us consider the system of equations \eqref{elMass_ND}-\eqref{Pois_ND} with periodic boundary conditions, which we denote as the problem $^\lambdaSq\espace F^\varepsilon$. We consider two different asymptotic expansions for the problem $^\lambdaSq\espace F^\varepsilon$: (1) In terms of the small parameter $\varepsilon$ and (2) in terms of the small parameter $\lambdaSq$. We define the problem $^\lambdaSq\espace F^0$ as the system of equations for the zero-th order terms of the expansion in terms of $\varepsilon$. Similarly, we define $^0\espace F^\varepsilon$ as the  the system of equations for the zero-th order terms of the expansion in terms of $\lambdaSq$. 

\begin{proposition}\label{prop}
The system of equations that corresponds to $\lim\limits_{\lambdaSq\rightarrow0}~^\lambdaSq\espace F^0$ with periodic boundary conditions is formally the same as $\lim\limits_{\varepsilon\rightarrow0}~^0\espace F^\varepsilon$ and it reads
\begin{subequations}\label{eq:sysNN}
\begin{empheq}[left=~^0\espace F^0:~\empheqlbrace]{align}
    \rhoeb &= \rhoionb ,\label{quasineutrality}\\
    \dtb \rhoionb + \dxb (\rhoionb \uionb) &=  \rhoeb \Damk,\\
    \dxb  (\rhoeb \ueb) &= \dxb  ( \rhoionb \uionb), \\ \label{currentConserv}
    \dtb(\rhoionb \uionb) + \dxb (\rhoionb \uionb^2 + \kappaT\rhoionb) &= \rhoionb\dxb \phib,\\
    \dxb \phib &= \frac{1}{\rhoeb}\dxb\rhoeb.\label{BoltzmannElec}
\end{empheq}
\end{subequations}
\end{proposition}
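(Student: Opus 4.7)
The plan is a formal two-scale asymptotic argument carried out in both orderings of the limits. I would expand every unknown as a power series in the relevant small parameter---$\rhoeb=\rhoeb^{(0)}+\lambdaSq\,\rhoeb^{(1)}+\cdots$ for the $\lambdaSq\to 0$ expansion and $\rhoeb=\rhoeb^{(0)}+\varepsilon\,\rhoeb^{(1)}+\cdots$ for the $\varepsilon\to 0$ expansion (and analogously for the other fields)---substitute into \eqref{eq:sysND}, collect like powers, and read off the closed system satisfied by the leading-order terms. Both routes then follow the same template: one limit collapses Poisson's equation to quasineutrality, the other collapses the electron momentum balance to a Boltzmann relation, and the task is to check that the surviving equations agree regardless of which limit is taken first.

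For the route $^\lambdaSq\espace F^\varepsilon\to{}^0\espace F^\varepsilon\to{}^0\espace F^0$, I would first take $\lambdaSq\to 0$. Matching Poisson's equation \eqref{Pois_ND} at order zero forces quasineutrality $\rhoeb^{(0)}=\rhoionb^{(0)}$; subtracting the two continuity equations \eqref{elMass_ND}--\eqref{IonMass_ND} under this identity yields current conservation $\dxb(\rhoeb\ueb)=\dxb(\rhoionb\uionb)$. The two momentum equations survive intact, with $\phib$ now acting as the Lagrange multiplier enforcing quasineutrality; this is ${}^0\espace F^\varepsilon$. I would then multiply the electron momentum equation \eqref{elMom_ND} by $\varepsilon$, so that at order zero only the pressure gradient and the Lorentz term remain, producing the Boltzmann relation $\dxb\rhoeb=\rhoeb\,\dxb\phib$ of \eqref{BoltzmannElec}; combined with what precedes, this is exactly \eqref{eq:sysNN}.

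Conversely, for the route $^\lambdaSq\espace F^\varepsilon\to{}^\lambdaSq\espace F^0\to{}^0\espace F^0$, I would expand first in $\varepsilon$: the same rescaling of \eqref{elMom_ND} by $\varepsilon$ delivers the Boltzmann relation at leading order, while the continuity equations, the ion momentum equation, and Poisson pass through unaltered---this is ${}^\lambdaSq\espace F^0$. Then taking $\lambdaSq\to 0$, Poisson again enforces $\rhoeb=\rhoionb$, and subtracting the continuity equations delivers current conservation exactly as in the previous route. The resulting system coincides with \eqref{eq:sysNN} equation by equation.

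The main subtlety, and the step on which I would focus, is verifying that the interpretation of $\phib$ is consistent across the two routes. In the first route $\phib$ is only implicitly defined as the Lagrange multiplier enforcing quasineutrality until Boltzmann fixes it at the last step; in the second, Boltzmann fixes $\phib$ (up to an additive function of time) before quasineutrality is imposed. Periodic boundary conditions are what make the two descriptions meet: they rule out boundary layers that would invalidate the naive expansion, they allow $\dxb(\rhoeb\ueb-\rhoionb\uionb)=0$ to be read globally so that the common current depends only on time (its value being fixed identically in both routes by integrating the mass balance against its ionization source), and they make the Boltzmann relation globally integrable to $\rhoeb=c(\bar t)\exp(\phib)$, with $c(\bar t)$ determined by total electron mass conservation.
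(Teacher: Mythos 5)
Your proposal is correct and follows essentially the same route as the paper: expand in each small parameter in both orders, obtain the intermediate systems $^0\espace F^\varepsilon$ (quasineutrality as the leading-order Poisson balance) and $^\lambdaSq\espace F^0$ (the Boltzmann relation from the $\varepsilon$-rescaled electron momentum equation), and verify that the two iterated limits yield the same closed system, with current conservation following from subtracting the continuity equations under quasineutrality. Your closing discussion of how periodic boundary conditions make the Boltzmann relation globally integrable and fix the common current is a welcome elaboration of a point the paper only asserts.
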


\begin{proof} In order to prove Proposition \ref{prop}, we first propose the following expansion for the variables of the problem $^\lambdaSq\espace F^\varepsilon$
\begin{equation}\label{eq:expasionEps}
  f(x,t) = \AsymExpR{f}{0}{\lambdaSq} +  \varepsilon  \AsymExpR{f}{1}{\lambdaSq} + \mathcal{O}(\varepsilon^2).
\end{equation}
By injecting this expansion in the system of equations \eqref{elMass_ND}-\eqref{Pois_ND}, we find that the system of equations for the zero-th order terms reads
\begin{subequations}\label{eq:G0mu}
\begin{empheq}[left=~^\lambdaSq\espace F^0:~\empheqlbrace]{align}
    \dtb\rhoebNmu + \dxb(\rhoebNmu\uebNmu)&= \rhoebNmu \Damk ,\label{eq:gElMass}\\
    \dtb\rhoionbNmu + \dxb(\rhoionbNmu\uionbNmu)&= \rhoebNmu \Damk ,\label{eq:gIonMass}\\
    \frac{1}{\rhoebNmu}\dxb\rhoebNmu &= \dxb\phibNmu ,\label{eq:gElMom}\\
    \dtb\left(\rhoionbNmu\uionbNmu\right) + \dxb\left[\rhoionbNmu\left(\uionbNmuSquare + \kappa \right)\right]&= -\rhoionbNmu \dx \phibNmu ,\label{eq:gIonMom}\\
    \dxxb\phibNmu &= \lambdaSq^{-1}\left(\rhoebNmu - \rhoionbNmu \right).\label{eq:gPois}
\end{empheq}
\end{subequations}
We denote this system as $~^\lambdaSq\espace F^0$.

Alternatively, we propose an expansion for the variables of the problem $^\lambdaSq\espace F^\varepsilon$ in terms of $\lambdaSq$ of the form 
\begin{equation}\label{eq:expasionMu}
  f(x,t) =\AsymExpL{f}{\varepsilon}{0} +\lambdaSq~\AsymExpL{f}{\varepsilon}{1} + \mathcal{O}(\lambdaSq^2).
\end{equation}
The system of equations corresponding to the zero-th order terms reads
\begin{subequations}\label{eq:H0eps}
\begin{empheq}[left=~^0\espace F^\varepsilon:~\empheqlbrace]{align}
    \dtb\rhoebNeps + \dxb\left(\rhoebNeps\uebNeps\right)&= \rhoebNeps \Damk ,\label{eq:gElMass}\\
    \dtb\rhoionbNeps + \dxb\left(\rhoionbNeps\uionbNeps\right)&= \rhoebNeps \Damk ,\label{eq:gElMass}\\
        \dtb\left(\rhoebNeps\uebNeps\right) + \dxb\left[\rhoebNeps\left(\uebNeps^2 + \varepsilon^{-1} \right)\right]&= \varepsilon^{-1}~\rhoebNeps \dx \phibNeps ,\label{eq:gElMom}\\
    \dtb\left(\rhoionbNeps\uionbNeps\right) +  \dxb\left[\rhoionbNeps\left(\uionbNeps^2 + \kappa \right)\right]&= -\rhoionbNeps \dx \phibNeps ,\label{eq:gIonMass}\\
   \rhoebNeps &= \rhoionbNeps.
\end{empheq}
\end{subequations}
We denote this system as $^0\espace F^\varepsilon$.

We can use the expansion of eq.~\eqref{eq:expasionMu} for the problem $^\lambdaSq\espace F^0$ in order to study the problem $\AsymExpR{F}{0}{0}\equiv\lim\limits_{\lambdaSq\rightarrow0}~^\lambdaSq\espace F^0$. The system of equations for the zero-th order terms is very similar to system $^\lambdaSq\espace F^0$, with the difference in the Poisson equation, as shown below
\begin{subequations}\label{eq:G0mu}
\begin{empheq}[left=\AsymExpR{F}{0}{0}\equiv\lim\limits_{\lambdaSq\rightarrow0}~^\lambdaSq\espace F^0:~\empheqlbrace]{align}
    \dtb\rhoebNNmu + \dxb(\rhoebNNmu\uebNNmu)&= \rhoebNNmu \Damk ,\label{eq:gNElMass}\\
    \dtb\rhoionbNNmu + \dxb(\rhoionbNNmu\uionbNNmu)&= \rhoebNNmu \Damk ,\label{eq:gNIonMass}\\
    \frac{1}{\rhoebNNmu}\dxb\rhoebNNmu &= \dxb\phibNNmu ,\label{eq:gNElMom}\\
    \dtb\left(\rhoionbNNmu\uionbNNmu\right) + \dxb\left[\rhoionbNNmu\left(\uionbNNmuSquare + \kappa \right)\right]&= -\rhoionbNNmu \dx \phibNNmu ,\label{eq:gNIonMom}\\
    \rhoebNNmu &=\rhoionbNNmu .\label{eq:gNPois}
\end{empheq}
\end{subequations}
By using the condition \eqref{eq:gNPois} in eqs \eqref{eq:gNElMass} and \eqref{eq:gNIonMass}, we obtain that the fluxes are conserved, i.e., $\dxb(\rhoebNNmu\uebNNmu) = \dxb(\rhoionbNNmu\uionbNNmu)$. This proves that the system $\AsymExpR{F}{0}{0}$ is the system proposed as $^0\espace F^0$ in eqs.~\eqref{eq:sysNN}.

Alternatively, we use the expansion of eq.~\eqref{eq:expasionEps} for the problem $^0\espace F^\varepsilon$. The equations for the zero-th order terms are denoted as $\AsymExpL{F}{0}{0}\equiv\lim\limits_{\varepsilon\rightarrow0}~^0\espace F^\varepsilon$. In this new system of equations, the electron momentum equation is modified as compared to the one in $^0\espace F^\varepsilon$.
\begin{subequations}\label{eq:F00}
\begin{empheq}[left=\AsymExpL{F}{0}{0}\equiv\lim\limits_{\varepsilon\rightarrow0}~^0\espace F^\varepsilon:~\empheqlbrace]{align}
    \dtb\rhoebNNeps + \dxb\left(\rhoebNNeps\uebNNeps\right)&= \rhoeb \Damk ,\label{eq:gElMass}\\
    \dtb\rhoionbNNeps + \dxb\left(\rhoionbNNeps\uionbNNeps\right)&= \rhoeb \Damk ,\label{eq:gElMass}\\
        \frac{1}{\rhoebNNeps}\dxb\rhoebNNeps &=  \dx \phibNNeps ,\label{eq:gIonMass}\\
    \dtb\left(\rhoionbNNeps\uionbNNeps\right) +  \dxb\left[\rhoionbNNeps\left(\uionbNNeps^2 + \kappa \right)\right]&= -\rhoionbNNeps \dx \phibNNeps ,\label{eq:gIonMass}\\
   \rhoebNNeps &= \rhoionbNNeps.
\end{empheq}
\end{subequations}
This system of equations is the same as the problem $\AsymExpR{F}{0}{0}$. Consequently, in the case of periodic boundary conditions, both limits result in the same system of equations \eqref{quasineutrality}-\eqref{BoltzmannElec}, i.e., $\AsymExpL{F}{0}{0}=\AsymExpR{F}{0}{0}=~^0\espace F^0$.

\end{proof}

We highlight that the system $~^0\espace F^\varepsilon$, which corresponds to the limit of the Debye length tending to zero, is the one that was previously considered by the works  \cite{Degond11,Crispel07}. In this system, the electric potential becomes a Lagrange multiplier that imposes the charge neutrality. In system \eqref{eq:H0eps}, we decided to write the set of equations in this form for clarity. Nevertheless, a differential equation for the electric potential can be retrieved from the conservation of electric charge, as done in  \cite{Crispel07}.

The asymptotic behaviour proposed in this paper considers massless electrons as in \cite{Varet13,Slemrod01}. It is therefore very different to the one derived in \cite{Degond11,Crispel07}. By including in our analysis the electron-to-ion mass ratio, we find that the quasi-neutrality is found in eq.~\eqref{quasineutrality} and the electron density follows the Boltzmann distribution in eq.~\eqref{BoltzmannElec}. Furthermore, the electric current is conserved by equation \eqref{currentConserv} and the ion mass and momentum equations are unchanged. The main difference as compared to \cite{Crispel07} is the limit for the zero-th order of the potential in eq.~\eqref{BoltzmannElec} and the electron momentum equation.

\subsection{Study of a bounded low-temperature plasma through the fluid plasma equations}\label{sec:problemdefinition}

In low-temperature plasma industrial applications, plasmas are confined. As a result, the charged particles that are produced inside the reactor through ionization, are lost through the boundaries when they strike the wall. Since the thermal motion of electrons is larger than that of ions, the surface will charge negatively with respect to the plasma (in electropositive plasmas), forming a charged boundary layer called the plasma sheath. 


The analytical models for the sheath and presheath rely on the isothermal multi-fluid equations \citep{Riemann91}, while assuming the inertia of electrons and temperature of ions to be negligible. Nevertheless, important kinetic phenomena taking place are not included in this model \cite{Badsi15}.

Let us consider a 1D domain of length $l$ filled with a plasma between two floating walls, with no secondary electron emission, the distribution function of electrons is a Maxwellian and all the electrons that touch the wall are absorbed by the wall. With these assumptions, the flux of electrons collected by the wall (see, e.g., \cite{Chabert11}) both in dimensional and dimensionless units read:
\begin{equation}
  \text{Dimensional: } \nelec\ue|_{wall} = \nelec\sqrt{\frac{\boltz\Te}{2\pi\me}}~~~\text{and}~~\text{dimensionless: } \rhoeb\ueb|_{wall} = \frac{\rhoeb}{\sqrt{2\pi\varepsilon}}.\label{fluxElectrons}
\end{equation}
A steady solution is found when the ionization inside the bulk of the plasma balances the particle loss as follows
\begin{equation}
 2\nelec\ue|_{wall} = \int_0^l \nelec \nuionization dx.\label{fluxElectrons}
\end{equation}

As mentioned by Riemann \cite{Riemann05}, the ionization frequency is an eigenvalue of the problem. Consequently, there is only one ionization frequency that finds a steady state solution for a given distance between plates. In this paper, we propose a numerical methodology that proves to be convergent to find this eigenvalue.

With the previous assumptions, the potential at the pre-sheath $\phi_p$ and the wall $\phi_W$, in dimensional units \cite{Liebermann95}, as follows
\begin{equation} \label{eq:SheathTheory}
  \phi_p = -\frac{\boltz \Te}{2\ee} ~~~\text{and}~~~ \phi_W = \frac{\boltz \Te}{\ee}\ln\left(\frac{\me}{2\pi\mi}\right)^{1/2},
\end{equation}
where $\phi_p$ is the potential drop needed to accelerate the ions to  Bohm's speed (neglecting the ion pressure gradient) and $\phi_W$ is the potential drop in the sheath. In Fig.~\ref{0_SheathExample}, we illustrate the steady state solution of a bounded plasma between two floating plates. 

\begin{figure} [H] 
	\centering
		  \includegraphics[trim=1.5cm 1.2cm 1.5cm 0cm, clip=true,width=0.325\textwidth]{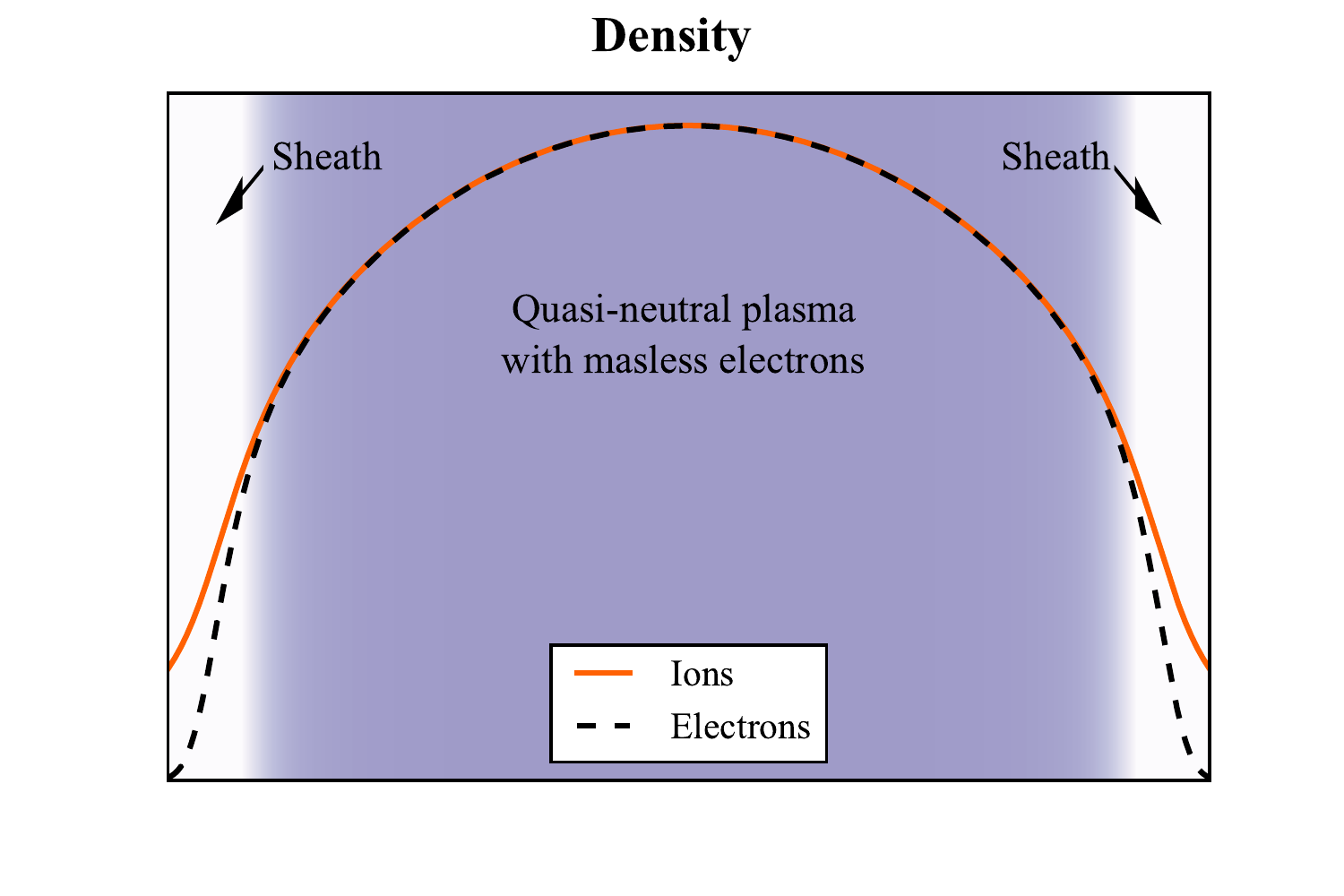} \hfill
		  \includegraphics[trim=1.5cm 1.2cm 1.5cm 0cm, clip=true,width=0.325\textwidth]{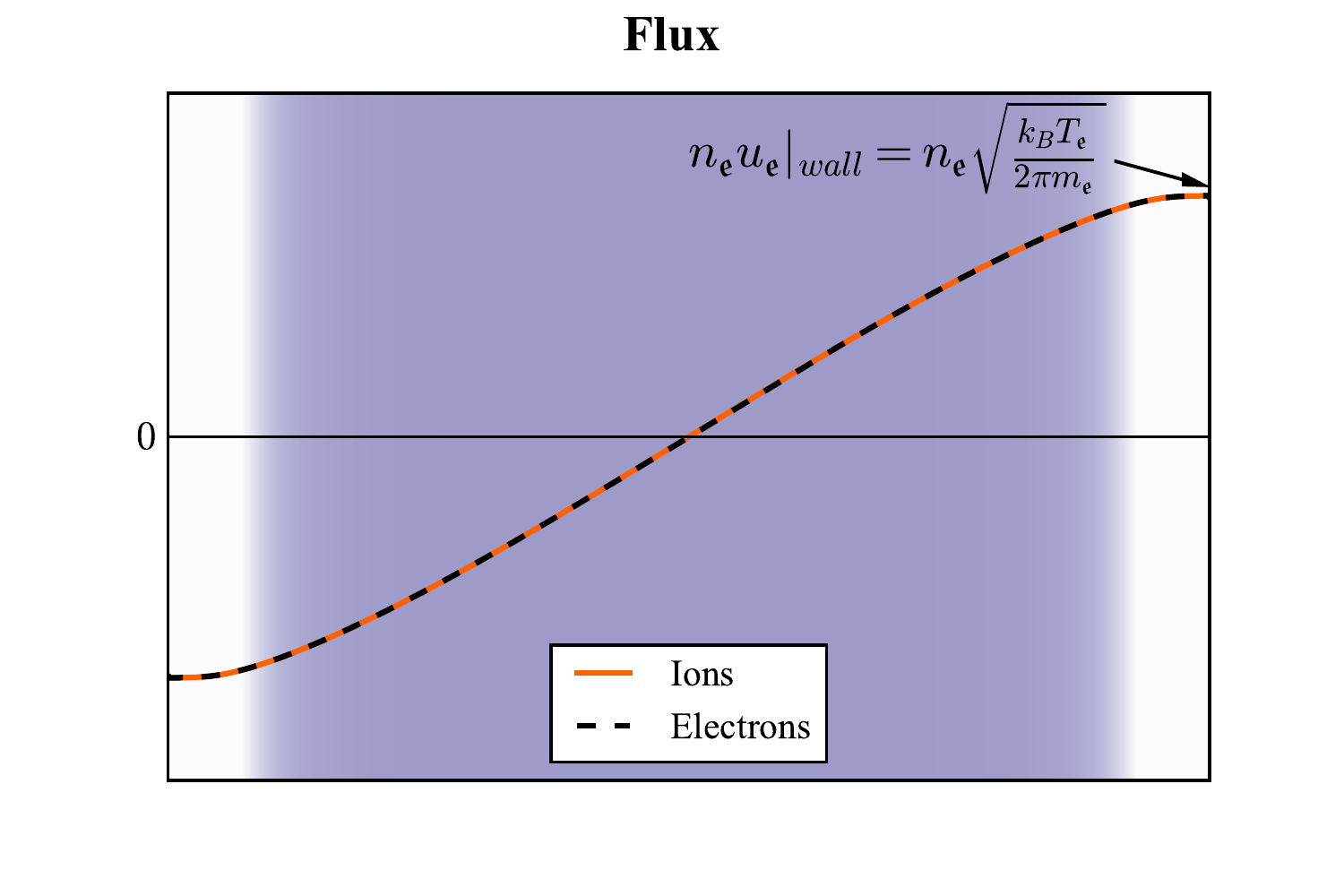}
		  \includegraphics[trim=1.5cm 1.2cm 1.5cm 0cm, clip=true,width=0.325\textwidth]{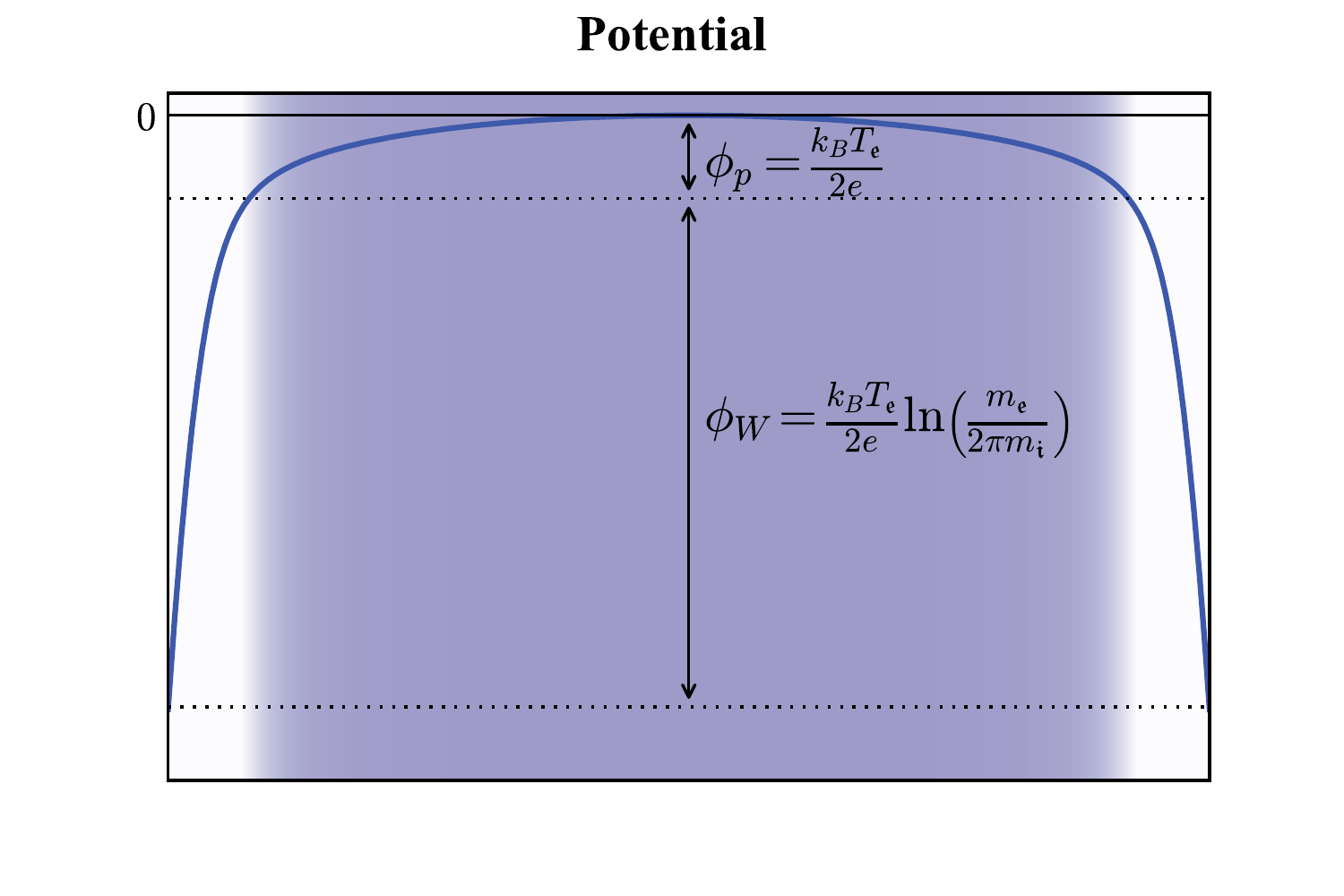}
	\caption{Solution of a plasma between two floating walls with a fluid model. The proposed numerical set-up is able to captures the physics as predicted by the theory \cite{Riemann05}.}
	\label{0_SheathExample}
\end{figure}

\section{Standard upwind finite volume discretization}\label{sec:standardDiscretization}

We present a standard discretization of the system \eqref{eq:sysND} in order to illustrate the associated numerical difficulties. An example of a simulation of a low-temperature discharge with this discretization can be found in Alvarez Laguna et al.~\cite{AlvarezLaguna18}. Alternatively, a similar discretization is described in \cite{Crispel07} in order to illustrate a standard solver of the Euler-Poisson system. 

We use a finite volume discretization where the domain $x \in [0, l]$ is divided into $N$ elements of equal length $\Delta x$. We approximate the value of the unknowns as a piecewise function inside the volume $\Omega_j$. The flux at the interfaces is approximated by a numerical flux function that is in general a function of the values on the right and left of the cell interface. The source is approximated by a piecewise constant value. After making these assumptions, the first order 1D finite volume discretization for the cell $j$ reads

\begin{multline}
\left(\begin{array}{c}
  \rhoeb\\
  \rhoionb\\
  \rhoeb\ueb \\
  \rhoionb\uionb \\  
\end{array}\right)_j^{\nPlusOne}
=
\left(\begin{array}{c}
  \rhoeb\\
  \rhoionb\\
  \rhoeb\ueb \\
  \rhoionb\uionb \\  
\end{array}\right)_{j}^{n}
-
\frac{\Delta t}{\Delta x}
\left[
\left(\begin{array}{c}
  \rhoeb\ueb \\
  \rhoionb \uib\\
  \rhoeb (\ueb^2 +  \varepsilon^{-1}) \\
  \rhoionb (\uib^2+\kappaT) \\  
\end{array}\right)_{j+1/2}^{n}
-
\left(\begin{array}{c}
  \rhoeb\ueb \\
  \rhoionb \uib\\
  \rhoeb (\ueb^2 +  \varepsilon^{-1}) \\
  \rhoionb (\uib^2+\kappaT) \\  
\end{array}\right)_{j-1/2}^{n}
\right]
+
\Delta t
\left(\begin{array}{c}
  (\rhoeb \Damk) \\
  \rhoeb\Damk\\
  \rhoeb\varepsilon^{-1} \dxb\phib \\
  -\rhoionb \dxb\phib\\  
\end{array}\right)_{j}^{n}
\end{multline}
with 
\begin{equation}\label{eq:StandardPoisson}
\phib_{j+1}^{\nPlusOne} - 2\phib_{j}^{\nPlusOne} +  \phib_{j-1}^{\nPlusOne} = \frac{\Delta x^2}{\lambdaSq}\left(\rhoebj^{n} - \rhoionbj^n\right)~~~\text{and}~~~\dxb\phib_{j}^{n} = \frac{1}{2\Delta x}\left( \phib_{j+1}^{n} - \phib_{j-1}^{n}\right).
\end{equation}
The numerical fluxes at the interfaces can be calculated with different Riemann solvers, e.g., Roe as in \cite{AlvarezLaguna18}, Lax-Friedrich as in \cite{Crispel07} or HLL as in the results of this paper using the standard discretization. We note here that the election of the Riemann solver for this problem has a small impact in the results as the numerical dissipation is dominated by the low-Mach regime of electrons, as it will be shown in the results. TVD reconstruction can improve the results of the standard discretization as shown in \cite{Crispel07,AlvarezLaguna18}.

As explained in \cite{Crispel07,Degond11}, the stability of the time discretization is restricted by a CFL condition that takes into account the convective scales of both fluids and the characteristic time scales of the source terms. The convective CFL reads
\begin{equation}
  \text{CFL}^{conv} = \frac{\Delta t |\lambda^{\elec,\ion}_{max}|}{\Delta x}~~~\text{with}~~~|\lambda^{\elec,\ion}_{max}| = \max\left(|\bar{u}_{\elec,\ion} + \bar c_{\elec,\ion}|,~|\bar{u}_{\elec,\ion} - \bar c_{\elec,\ion}|\right).
\end{equation}
Where $ \bar c_{\elec,\ion}$ is the dimensionless speed of sound of electrons and ions, i.e.,  $\varepsilon^{-1/2}$ and $\kappaT^{1/2}$, respectively. Note that due to the mass disparity between ions and electrons, the CFL condition of the electrons is typically more restrictive than this of the ions. Therefore, the convective CFL condition of electrons uses the maximum eigenvalues of electrons that are $|\lambda^{\elec}_{max}| = \max\left(|\bar{u}_{\elec} + \varepsilon^{-1/2}|,~|\bar{u}_{\elec} - \varepsilon^{-1/2}|\right)$.
Similarly, the source terms impose a constraint in the time step. The stability condition for the electrostatic force is related to the resolution of the electron plasma wave \cite{Degond08}
\begin{equation}\label{eq:plasmaFreq}
  \Delta t\ \omega_{p\elec}<1~~~\text{with}~~~\omega_{p\elec} = \sqrt{\frac{\rhoeb}{\lambdaSq\varepsilon}}.
\end{equation}
Finally, the ionization term has the stability constraint as follows
\begin{equation}
  \Delta t\ \nuionization < 1.
\end{equation}
The stability condition reads
\begin{equation}
  \max\left(\text{CFL}^{conv},~\Delta t\ \omega_{p\elec},~\Delta t\ \nuionization \right)_{i\in N}<1.
\end{equation}
As it can be seen in Table \ref{tableScales}, if the cell size is larger than the Debye length, the most restrictive constraint is the resolution of the electron plasma waves. If the Debye length is resolved, then the convective condition is sufficient to fulfil the condition $\Delta t< \omega^{-1}_{pe}$. As shown in \cite{AlvarezLaguna18,Joncquieres18}, when this scheme does not spatially resolve the Debye length, the simulation leads to large spurious charge separation errors that can excite plasma modes, leading to an erroneous solution. Similarly, the truncation error of the upwind discretization of the fluxes of the electrons leads to a large error in the flux of the electrons. As it will be shown in the results and previously noted in \cite{Crispel04,Crispel07}, this is due to the low-Mach regime of the electron when the bulk speed is much smaller than the thermal speed.
\begin{figure}[h]
	\centering
		  \includegraphics[trim=0cm 0cm 0cm 0cm, clip=true,width=0.6\textwidth]{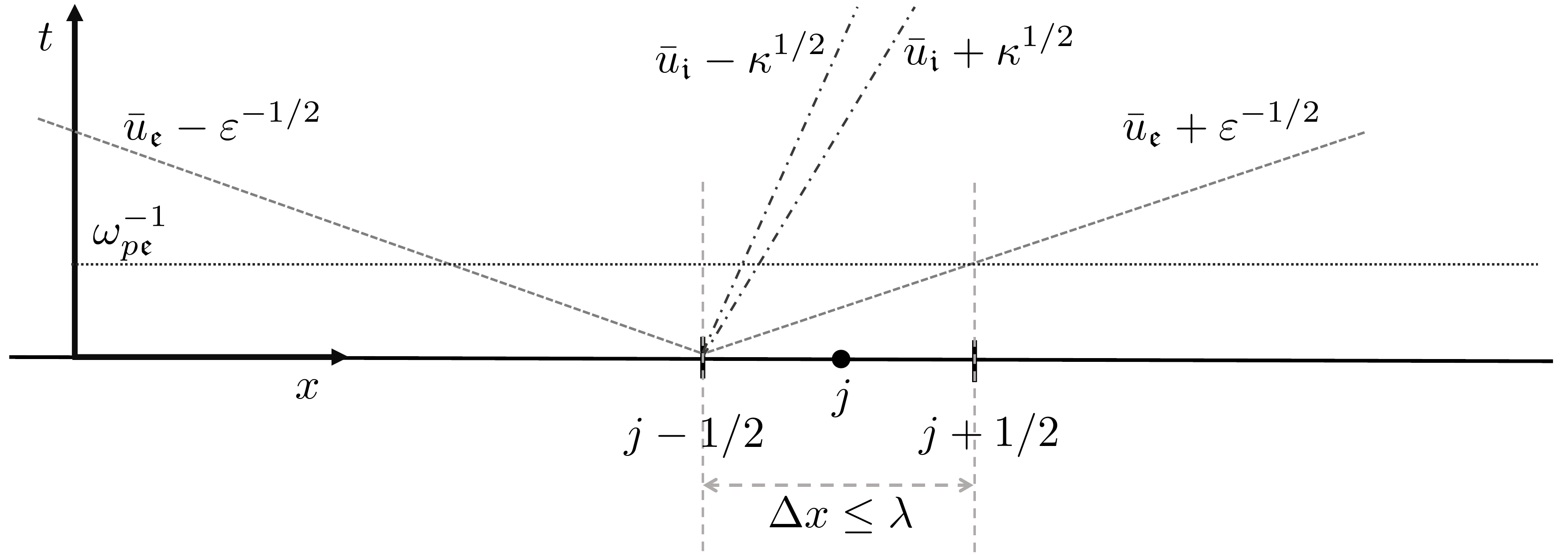} 
	\caption{Illustration of the characteristic scales for the discretization of the two-fluid electrostatic problem. We assume conditions where the ions and electrons travel at a positive speed of similar magnitude, ($\uionb \sim \ueb$) and $\uionb\gg\kappa^{1/2}$ and $\ueb\ll\varepsilon^{-1/2}$. The ionization frequency is supposed to be such that $\Damk \ll \omega_{p\elec}$ and thus is not in the figure.}
	\label{2_NoAP}
\end{figure}

\section{Acoustic/transport operator splitting strategy}

We present an alternative to the standard discretization of section \ref{sec:standardDiscretization}. We present a novel operator splitting strategy that decouples the acoustic and transport phenomena of electrons. In 1D, this method is analogous to an explicit Lagrange-Projection \cite{Chalons16} method. Nevertheless, the present splitting does not need a moving Lagrangian mesh and can be naturally expressed for multi-dimensional problems (see Chalons et al.~\cite{Chalons16} for the extension of the operator splitting method to multiple dimensions). 

We propose to approximate the system of equations \eqref{elMass_ND}-\eqref{Pois_ND} in the successive solution of the following systems. The first step solves for the electron acoustic system together with the Poisson equation and the second solves for the electron transport (advection) and the ion equations. It should be noted that the first system contains the small scales related to the parameters $\varepsilon$ and $\lambda$, whereas the second system solves for the slow dynamics, of order $\mathcal{O}(1)$, as explained in Section \ref{sec:AsymBehavior}.

\paragraph{Electron acoustic and electrostatic system}
\begin{subequations}
\begin{eqnarray}\label{eq:acoustic}
    \dtb \rhoeb + \rhoeb\dxb \ueb &=&  0,\label{elMass_ND_acoustic}\\
    \dtb (\rhoeb\ueb) + \rhoeb\ueb \dxb \ueb + \dxb \pelecb&=&\frac{\rhoeb}{\varepsilon} \dxb\phib,\label{elMom_ND_acous}\\
    \dxxb \phib &=& \frac{\rhoeb - \rhoionb}{\lambdaSq}, \label{Pois_ND_acoustic}
\end{eqnarray}\label{acoustic}
\end{subequations}
\paragraph{Electron transport and ion dynamics}
\begin{subequations}
\begin{eqnarray}
    \dtb \rhoeb + \ueb \dxb \rhoeb &=&  \rhoeb \Damk ,\label{elMass_ND_transport}\\
    \dtb (\rhoeb\ueb) + \ueb  \dxb (\rhoeb \ueb)&=& 0,\label{elMom_ND_transport}\\
    \dtb  \rhoionb + \dxb ( \rhoionb \uionb) &=& \rhoeb \Damk,\label{IonMass_ND_transport}\\
    \dtb (\rhoionb \uib) + \dxb \left[\rhoionb \left(\uionb^2 + \kappaT\right) \right]&=&-\rhoionb \dxb\phib ,\label{IonMom_ND_transport}
    \end{eqnarray}\label{transport}
\end{subequations}
where the electron pressure $\pelecb = \varepsilon^{-1}\rhoeb$, in eq.~\eqref{elMom_ND_acous}. 

\subsubsection*{Strategy to solve the equations}

Given a state at the time $(\rhoeb, \rhoeb\ueb, \rhoionb, \rhoionb\uionb, \phib)^n_j$ at the time $t^n$ and the cell center $x_j$. The scheme is split into

\begin{enumerate}
  \item By numerically solving the system \eqref{acoustic}, we update the state $(\rhoeb, \rhoeb\ueb, \phib)^n_j$ to the value at $t^{\nPlusOneM}$, i.e., $(\rhoeb, \rhoeb\ueb, \phib)^{\nPlusOneM}_j$.
  \item By numerically solving the system \eqref{transport}, we update the state $(\rhoeb, \rhoeb\ueb, \rhoionb, \rhoionb\uionb, \phib)^{\nPlusOneM}_j$ to the value $t^{n+1}$, i.e., $(\rhoeb, \rhoeb\ueb, \rhoionb, \rhoionb\uionb, \phib)^{n+1}_j$.
\end{enumerate}

In the following, we present the properties and description of the two steps of the numerical scheme.

\begin{figure}[h]
	\centering
		  \includegraphics[trim=0cm 1cm 3cm 0cm, clip=true,width=0.49\textwidth]{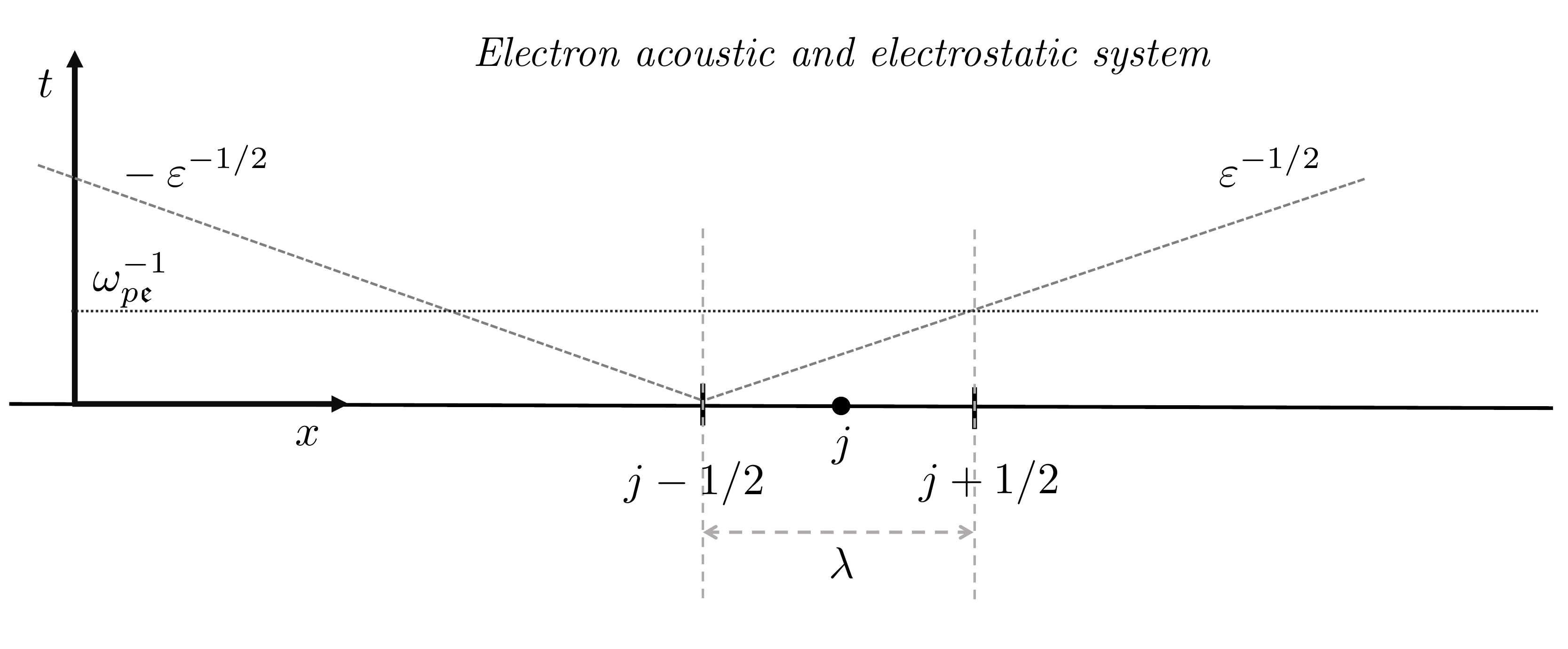} 
 		  \includegraphics[trim=0cm 1cm 3cm 0cm, clip=true,width=0.49\textwidth]{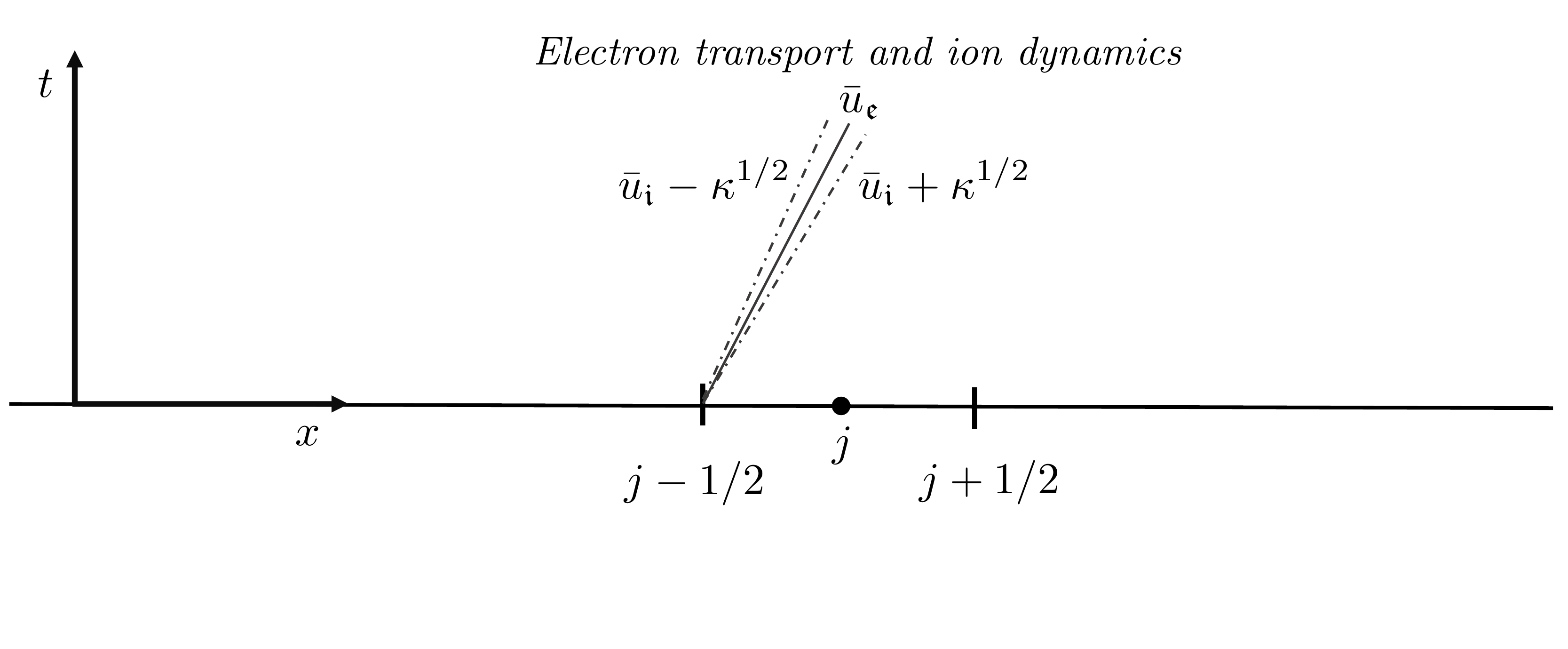} 
	\caption{Illustration of the characteristic scales for the operator splitting. }
	\label{2_NoAP}
\end{figure}

\subsection{Properties and discretization of the electron acoustic and electrostatic system}\label{sec:ElectronAcoustic}

The system eq.~\eqref{acoustic} can be written in conservative form, i.e., the compression terms written as fluxes. To do this, we define the variable $\tauelec \equiv 1/\nelec$. In this variable, the problem \eqref{acoustic} reads
\begin{subequations}
\begin{eqnarray}
    \dtb \tauelec - \tauelec\dxb \ueb &=&  0,\label{elMass_ND_acoustic}\\
    \dtb \ueb + \tauelec \dxb \pelecb&=&\frac{1}{\varepsilon} \dxb\phib,\label{elMom_ND_acoustic}\\
    \lambdaSq\dxxb \phib &=& \frac{1}{\tauelec} - \rhoionb. \label{Pois_ND_acoustic}
\end{eqnarray}\label{acoustic_NewVar}
\end{subequations}

We approximate $\tauelec(x,t)\dxb$ by the solution at time $t^n$, i.e., $\tauelec(x,t^n)\dxb$. We define $dm = \nelec(x,t^n)dx$. By using this new variable, we obtain the following system of equations for the electrons
\begin{subequations}
\begin{eqnarray}
    \dtb \tauelec - \dm \ueb &=&  0,\label{elMass_ND_acoustic_m}\\
    \dtb \ueb +  \dm \pelecb&=&\frac{1}{\varepsilon} \dxb\phib.\label{elMom_ND_acoustic_m}
\end{eqnarray}\label{acoustic_NewVar_m}
\end{subequations}

The left-hand-side of the system is conservative in this new variable. The eigenvalues of the homogeneous system \label{acoustic_NewVar} are $\lambda^{acous} = (-\varepsilon^{-1/2}, \varepsilon^{-1/2})$. For that reason, the system is called acoustic. Note that the source term in equation \eqref{elMom_ND_acoustic_m} is still with the previous space variable $x$. As this term cannot be written in conservative form in the variable $m$, it will be treated in the discretization as a source term.

\subsubsection{Discretization of the electron acoustic and electrostatic system}
The system \eqref{acoustic_NewVar_m} with Poisson's eq.~\eqref{Pois_ND_acoustic} will be discretized in time by using a semi-implicit discretization. In this approach, we want to discretize the Lorentz force in the electrons and the electron density in Poisson's equation implicitly, as follows

\begin{subequations}
\begin{eqnarray}
    \frac{\tauelec^{\nPlusOneM} - \tauelec^n}{\Delta t} - \left[\dm \ueb\right]^{\nPlusOneM} &=&  0,\label{elMass_ND_acoustic_m_semiDiscrete}\\
    \frac{\ueb^{\nPlusOneM} - \ueb^n}{\Delta t} + \left[ \dm \pelecb\right]^n&=&\frac{1}{\varepsilon} \left[\dxb\phib\right]^{\nPlusOneM}.\label{elMom_ND_acoustic_m_semiDiscrete}\\
     \lambdaSq\left[\dxxb \phib\right]^{\nPlusOneM} &=& \frac{1}{\tauelec^{\nPlusOneM}} - \rhoionb^n. \label{Pois_ND_acoustic_semiDiscrete}
\end{eqnarray}\label{acoustic_NewVar_m_fullDiscrete}
\end{subequations}

The full discretization of system \eqref{acoustic_NewVar_m_fullDiscrete} at $t^{\nPlusOneM}$ is given by numerical scheme \ref{pr:scheme}. 

\begin{numScheme} \label{pr:scheme} The discretization of the electron acoustic and electrostatic system \eqref{acoustic} reads as follows. The discretized electron density is 
\begin{equation}\label{eq:ElectronNPlusOneM}
  \rhoebj^{\nPlusOneM} = \rhoebj^n\frac{\left(1 + \frac{\rhoionbj^n\Delta t^2}{\varepsilon\lambdaSq}\right)}{\left[1 + \frac{\Delta t}{\Delta x}\left(\uebjphalf^n - \uebjmhalf^n \right) - \Delta t^2\rhoebj^n \left[ \dmm \pelecb\right]^n_j + \frac{\Delta t^2 \rhoebj^n}{\varepsilon \lambdaSq}\right]}.
\end{equation}
with the discretization of the pressure Laplacian as
\begin{equation}\label{PressureLaplacianDiscrete}
  \left[ \dmm \pelecb\right]^n_j= \frac{\tauelec^n}{\varepsilon\Delta x^2}\left(\ln \rhoebjpone^n + \ln \rhoebjmone^n - 2\ln \rhoebj^n \right)
\end{equation}
and the velocity at the cell interface
\begin{equation}
  \uebjphalf^n= \frac{\uebR + \uebL}{2} - \frac{ \nelecbjphalf\varepsilon^{-1/2}}{2}f_a(M_\elec)\left(\taueR - \taueL\right) ~~~\text{with}~~~ \nelecbjphalf = \frac{\rhoebR+ \rhoebL}{2},
  \label{velocityInterface}
\end{equation}
where $f_a(M_\elec) = \mathcal{O}(M_\elec)$. For this function, we choose the formula proposed by Liou \cite{liou3}, as follows
\begin{equation}\label{eq:Liou}
  f_a(\bar{M}) = \frac{\sqrt{\left(1 - M_o\right)^2\bar{M}^2 + 4M_o^2}}{1 - M_o^2}
\end{equation}
where 
\begin{equation}
  \bar{M} = \frac{\sqrt{\rhoebL}\uebL + \sqrt{\rhoebR}\uebR }{\sqrt{\rhoebL} + \sqrt{\rhoebR}}\varepsilon^{1/2}~~~\text{and}~~~ M_o = \min\left(1,~\max\left(M_{co}, \bar{M}\right)\right).
\end{equation}
The cut-off Mach $M_{co}$ is chosen to avoid having a null numerical dissipation when the electrons have zero velocity. 

The discretized potential is solved by the following expression
\begin{equation}\label{eq:PoissonDiscrete}
\phib_{j+1}^{\nPlusOneM} - 2\phib_{j}^{\nPlusOneM} +  \phib_{j-1}^{\nPlusOneM} = \frac{\Delta x^2}{\lambdaSq}\left(\rhoebj^{\nPlusOneM} - \rhoionbj^n\right).
\end{equation}

The discretized electron velocity is computed as follows
\begin{equation}
  \uebj^{\nPlusOneM} = \uebj^n - \frac{\Delta t }{\rhoebj^n\Delta x} \left( \pelecbjphalf^n - \pelecbjmhalf^n \right) + \frac{\Delta t }{\varepsilon}\dxb\phib_{j}^{\nPlusOneM},\label{velocityElecDiscrete}
\end{equation}
where the gradient of the electric potencial is discretized as
\begin{equation}
\left[\dxb\phib^{\nPlusOneM}\right]_{j} = \frac{1}{2\Delta x}\left( \phib_{j+1}^{\nPlusOneM} - \phib_{j-1}^{\nPlusOneM}\right),
\end{equation}
and the pressure flux from the expression
\begin{equation}\label{eq:pressureFlux}
  \pelecbjphalf = \varepsilon^{-1}\frac{\left(\rhoebR + \rhoebL\right)}{2} - \frac{\nelecbjphalf\varepsilon^{-1/2}}{2}f_a(M_\elec)\left(\uebR - \uebL\right).
\end{equation}
\end{numScheme}

The previous numerical scheme is obtained as follows. In order to build a scheme with an asymptotic-preserving property, we follow a similar approach to this of Dimarco et al.~\cite{Dimarco18}. The divergence of the velocity at time $t^{\nPlusOneM}$ in eq.~\eqref{elMass_ND_acoustic_m_semiDiscrete} is computed by taking the divergence of eq.~\eqref{elMom_ND_acoustic_m_semiDiscrete}. This divergence of the velocity reads
\begin{equation}
    \left[\dm \ueb\right]^{\nPlusOneM} =   \left[\dm \ueb\right]^n - \Delta t \left[ \dmm \pelecb\right]^n + \frac{\Delta t}{\varepsilon} \left[\dm\dxb\phib\right]^{\nPlusOneM}.\label{divergenceOfSpeed}
\end{equation}

We discretize the term due to the Lorentz force at the time $t^{\nPlusOneM}$ by using Poisson's eq.~\eqref{Pois_ND_acoustic_semiDiscrete}, as follows
\begin{equation}
  \left[\dm\dxb\phib\right]^{\nPlusOneM} = \frac{1}{\rhoeb^{\nPlusOneM}}\left[\dxxb\phib\right]^{\nPlusOneM} = \frac{1}{\lambdaSq}\left( 1 - \nion^n\tauelec^{\nPlusOneM}\right).
\end{equation}
Note that the ion density is computed at time $t^n$ in the Poisson equation. This approximation is justified by the fact that the ions move slower than the electrons in the scales for the acoustic step. The divergence of the velocity is then introduced in the equation for the conservation of mass in the acoustic step. This reads
\begin{equation}
  \frac{\tauelec^{\nPlusOneM} - \tauelec^n}{\Delta t} - \left[\dm \ueb\right]^{n} = \frac{\Delta t}{\varepsilon\lambdaSq}\left[1 - \nion^n\tauelec^{\nPlusOneM}\right] - \Delta t \left[ \dmm \pelecb\right]^n.\label{elMass_ND_acoustic_m_semiDiscrete_2}
\end{equation}
We note that the LHS of the equation is the same as in an explicit discretization, whereas the RHS are terms that appear due to the implicitation of the scheme. The first term of the RHS is a diffusion term that is due to the Lorentz force and the second is a diffusion term due to the pressure gradient. We highlight the fact that the diffusion due to the Lorentz force, which will stabilize the scheme, appears without the need of solving Poisson's equation and additionally, it is linear in $\tauelec^{\nPlusOneM}$.

In order to discretize in space eq.~\eqref{elMass_ND_acoustic_m_semiDiscrete_2}, we follow the formalism of Chalons et al.~\cite{Chalons16}.
\begin{equation}
  \left(1 + \frac{\rhoionbj^n\Delta t^2}{\varepsilon\lambdaSq}\right)\tauelecj^{\nPlusOneM} =  \tauelecj^n\left[1 + \frac{\Delta t}{\Delta x}\left(\uebjphalf^n - \uebjmhalf^n \right) \right] + \frac{\Delta t^2}{\varepsilon \lambdaSq} - \Delta t^2 \left[ \dmm \pelecb\right]^n_j.
\end{equation}
The pressure diffusion term is discretized as follows. By changing from the variable $m$ to the space variable, the pressure gradient reads $\dm \pelecb = \rhoeb^{-1} \dxb \pelecb$. By assuming the electrons to be isothermal, the pressure is $\pelecb = \rhoeb\varepsilon^{-1}$ and the pressure gradient in the mass variable $\dm \pelecb = \varepsilon^{-1} \dxb \ln \rhoeb$. Therefore, the Laplacian in the mass variable reads $\dmm  \pelecb = (\varepsilon \rhoeb)^{-1}\dxxb \ln \rhoeb$. The space discretization of this term is based on centered difference as follows
\begin{equation}\label{PressureLaplacianDiscrete}
  \left[ \dmm \pelecb\right]^n_j= \frac{\tauelec^n}{\varepsilon\Delta x^2}\left(\ln \rhoebjpone^n + \ln \rhoebjmone^n - 2\ln \rhoebj^n \right).
\end{equation}
Similarly, the velocity at the interface is approximated by a Riemann solver that is described in Appendix A. The original velocity at the interface reads
\begin{equation}
  \uebjphalf^n= \frac{\uebR + \uebL}{2} - \frac{ \nelecbjphalf\varepsilon^{-1/2}}{2}\left(\taueR - \taueL\right)~~~\text{with}~~~ \nelecbjphalf = \frac{\rhoebR+ \rhoebL}{2},
\label{velocityInterface}
\end{equation}
where the subscripts $R$ and $L$ refer to the values on the right and left of the interface $j + 1/2$. As explained in Appendix B, the truncation error is proportional to $\varepsilon^{-1/2} \dxxb \tauelec \Delta x$. As suggested in~\cite{Chalons16}, in order to control this error when $\varepsilon$ is small, we rescale the numerical error produced by an imbalanced numerical dissipation in the low Mach regime. This yields 
\begin{equation}
  \uebjphalf^n= \frac{\uebR + \uebL}{2} - \frac{ \nelecbjphalf\varepsilon^{-1/2}}{2}f_a(M_\elec)\left(\taueR - \taueL\right),
  \label{velocityInterface}
\end{equation}
where $f_a(M_\elec) = \mathcal{O}(M_\elec)$. For this function, we choose the formula proposed by Liou \cite{liou3}, as follows
\begin{equation}\label{eq:Liou}
  f_a(\bar{M}) = \frac{\sqrt{\left(1 - M_o\right)^2\bar{M}^2 + 4M_o^2}}{1 - M_o^2},
\end{equation}
where 
\begin{equation}
  \bar{M} = \frac{\sqrt{\rhoebL}\uebL + \sqrt{\rhoebR}\uebR }{\sqrt{\rhoebL} + \sqrt{\rhoebR}}\varepsilon^{1/2}~~~\text{and}~~~ M_o = \min\left(1,~\max\left(M_{co}, \bar{M}\right)\right).
\end{equation}
The cut-off Mach $M_{co}$ is chosen to avoid having a null numerical dissipation when the electrons have zero velocity. 

With this equation, the electron density at the time $t^{\nPlusOneM}$ is calculated as follows
\begin{equation}\label{eq:ElectronNPlusOneM}
  \rhoebj^{\nPlusOneM} = \rhoebj^n\frac{\left(1 + \frac{\rhoionbj^n\Delta t^2}{\varepsilon\lambdaSq}\right)}{\left[1 + \frac{\Delta t}{\Delta x}\left(\uebjphalf^n - \uebjmhalf^n \right) - \Delta t^2\rhoebj^n \left[ \dmm \pelecb\right]^n_j + \frac{\Delta t^2 \rhoebj^n}{\varepsilon \lambdaSq}\right]}.
\end{equation}

After solving this equation, the electric potential is computed by solving Poisson's by the linear system described in eq.~\eqref{eq:PoissonDiscrete}.

The last part of the acoustic step is to solve the electron velocity equation. For this, we use an approximate Riemann solver in order to discretize the pressure flux in eq.~\eqref{elMom_ND_acoustic_m_semiDiscrete}, as follows
\begin{equation}
  \uebj^{\nPlusOneM} = \uebj^n - \frac{\Delta t }{\rhoebj^n\Delta x} \left( \pelecbjphalf^n - \pelecbjmhalf^n \right) + \frac{\Delta t }{\varepsilon}\dxb\phib_{j}^{\nPlusOneM}.\label{velocityElecDiscrete}
\end{equation}
The gradient of the electric potencial is discretized as
\begin{equation}
\left[\dxb\phib^{\nPlusOneM}\right]_{j} = \frac{1}{2\Delta x}\left( \phib_{j+1}^{\nPlusOneM} - \phib_{j-1}^{\nPlusOneM}\right)
\end{equation}
The pressure flux from the Riemann solver in Appendix A reads
\begin{equation}\label{eq:pressureOriginal}
  \pelecbjphalf = \varepsilon^{-1}\frac{\left(\rhoebR + \rhoebL\right)}{2} - \frac{\nelecbjphalf\varepsilon^{-1/2}}{2}\left(\uebR - \uebL\right).
\end{equation}
As done previously, the truncation error of eq.~\eqref{velocityElecDiscrete} is proportional to $\left(\varepsilon^{-1/2}\dxxb \ueb \Delta x\right)$ (see Appendix B). In this case, we choose to balance the numerical dissipation with the technique proposed by Liou \cite{liou3}.
\begin{equation}\label{eq:pressureFlux}
  \pelecbjphalf = \varepsilon^{-1}\frac{\left(\rhoebR + \rhoebL\right)}{2} - \frac{\nelecbjphalf\varepsilon^{-1/2}}{2}f_a(M_\elec)\left(\uebR - \uebL\right).
\end{equation}
As in Chalons et al.~\cite{Chalons16}, by preconditioning the numerical dissipation with the factor $f_a$, the truncation error of the numerical system is of order $\mathcal{O}(\Delta x)$.

As the Lorentz force is treated implicitly, the scheme is shown to be unconditionally stable for the plasma wave frequency. Additionally, with the preconditioning of the low-Mach regime, the electrons can have a larger time step, not limited by the sound waves, as it will be shown in the results. The linear stability analysis to determine the CFL condition is not trivial due to the non-linear term $\left[ \dmm \pelecb\right]_j^n$ and hence is left for a future work.

\subsubsection{Asymptotic-preserving property of the acoustic and electrostatic system}

In this section, we prove the consistency of the acoustic and electrostatic system when $\lambdaSq$ and $\varepsilon$ tend to zero with respect to the asymptotic limit of system \eqref{eq:sysNN} explained in Section \ref{sec:AsymBehavior}.

\begin{property}\label{prop_quasiNeutralityExpansion}
   In the limits $\varepsilon\rightarrow0$ and $\lambdaSq\rightarrow0$, the discretized electron density at $t^{\nPlusOneM}$ reads
   \begin{equation}\label{quasiNeutralityExpansion}
     \underset{\lambdaSq \rightarrow 0}{\lim\limits_{\varepsilon \rightarrow 0}} \rhoebj^{\nPlusOneM} - \rhoionbj^n = 0.
   \end{equation}
\end{property}
\begin{proof}
First, we may rewrite $f_a(M_\elec)$ at the interface $j+1/2$ in the limit $\varepsilon\rightarrow 0$ as
   \begin{align*}
     f_a(M_\elec) = \sqrt{C_{j+1/2}^1 + \varepsilon C_{j+1/2}^2} ~~~\text{with}~~~ C_{j+1/2}^1 &= \frac{\left(1 - M_o\right)^2}{\left(1 - M_o^2\right)^2}\left(\frac{\sqrt{\rhoebj}\uebj + \sqrt{\rhoebjpone}\uebjpone }{\sqrt{\rhoebj} + \sqrt{\rhoebjpone}}\right)^2, \\
     C_{j+1/2}^2 &= \frac{4M_o^2}{\left(1 - M_o^2\right)^2}\left(\frac{\sqrt{\rhoebj}\uebj + \sqrt{\rhoebjpone}\uebjpone }{\sqrt{\rhoebj} + \sqrt{\rhoebjpone}}\right)^2.
   \end{align*}
   Then, injecting~\eqref{velocityInterface} and~\eqref{eq:pressureFlux} into~\eqref{eq:ElectronNPlusOneM} leads to
   \begin{subequations}
     \begin{align}
       \rhoebj^{\nPlusOneM} &= \rhoebj^n
       \frac{
         \left({\varepsilon\lambdaSq} + {\rhoionbj^n\Delta t^2}\right)
       }{
         \lambdaSq \left[\varepsilon K_j^1 - \varepsilon^{1/2} K_j^{1/2} - K_j^0\right] + \Delta t^2 \rhoebj^n
       } \\
       K_j^1    &= 1 + \frac{\Delta t}{\Delta x}\frac{\uebjpone - \uebjmone}{2}, \\
       K_j^{1/2} &= \frac{\Delta t}{\Delta x}\left[\frac{ \nelecbjphalf}{2}\sqrt{C_{j+1/2}^1 + \varepsilon C_{j+1/2}^2}\left(\tauelecjpone - \tauelecj\right)
         - \frac{ \nelecbjmhalf}{2}\sqrt{C_{j-1/2}^1 + \varepsilon C_{j-1/2}^2}\left(\tauelecj - \tauelecjmone\right)\right],\\
       K_j^0 & = \frac{\Delta t^2}{\Delta x^2}\left(\ln \rhoebjpone^n + \ln \rhoebjmone^n - 2\ln \rhoebj^n \right).
     \end{align}
     \label{eqt_proof:ElectronNPlusOneM}
   \end{subequations}
   One observes that $\rhoebj^{\nPlusOneM}$ is a function of $\varepsilon$ and $\lambdaSq$ continuous in $(0,0)$, then the two limits commute and \eqref{quasiNeutralityExpansion} holds.
\end{proof}
We highlight that eq.~\eqref{quasiNeutralityExpansion} is the quasi-neutral limit that occurs at scales $\mathcal{O}(\lambdaSq^{-1})$, as shown in eq.~\eqref{quasineutrality}.

\begin{property}
   Assuming $\rhoebj^n = \rhoionbj^n \neq 0$, then, in the limits $\varepsilon\rightarrow0$ and $\lambdaSq\rightarrow0$, the discretized electric potential at $t^{\nPlusOneM}$ satisfies
   \begin{equation}
     \label{eqt:dxxphi=dxxlnrhoe}
     \underset{\lambdaSq \rightarrow 0}{\lim\limits_{\varepsilon \rightarrow 0}} [\dxxb \phib]^{\nPlusOneM}_j - \left[\dxxb \ln \rhoeb\right]^{\nPlusOneM}_j = 0, 
   \end{equation}
   where the operator $[\dxxb ]_j $ stands for the central discretization in space at the cell $j$. This expression is the discretization of the divergence of eq.~\eqref{BoltzmannElec}. 
\end{property}

 \begin{proof}
   By injecting eq.~\eqref{eq:ElectronNPlusOneM} into eq.~\eqref{eq:PoissonDiscrete}, we obtain the following expression for Poisson's equation discretization
   \begin{align}\label{PoissonExpansion1}
     [\dxxb \phib^{\nPlusOneM}]_j &= 
     \frac{
       \varepsilon\rhoebj^n - \rhoionbj^n\left[\varepsilon K_j^1 - \varepsilon^{1/2} K_j^{1/2} - K_j^0\right]
     }{
       \lambdaSq \left[\varepsilon K_j^1 - \varepsilon^{1/2} K_j^{1/2} - K_j^0\right] + \Delta t^2 \rhoebj^n
     },
   \end{align}
   where the coefficients $K_j^i$ are defined in \eqref{eqt_proof:ElectronNPlusOneM}.
     
   This is again continuous in $(\varepsilon,\lambdaSq) = (0,0)$, thus the two limits commute and provide
   \[
   \underset{\lambdaSq \rightarrow 0}{\lim\limits_{\varepsilon \rightarrow 0}} [\dxxb \phib]^{\nPlusOneM}_j = 
   \frac{\rhoionbj^n}{\rhoebj^n\Delta x^2}\left(\ln \rhoebjpone^n + \ln \rhoebjmone^n - 2\ln \rhoebj^n \right).
   \]
   Using property~\ref{prop_quasiNeutralityExpansion}, we may replace in this expression $\rhoionbj^n$ by $\rhoebj^{\nPlusOneM}$ in this expression.
   
   And using the continuity of the logarithm, the hypothesis $\rhoebj^n = \rhoionbj^n \neq 0$ and property~\ref{prop_quasiNeutralityExpansion}, we may replace $\rhoebj^n$ in this expression to obtain \eqref{eqt:dxxphi=dxxlnrhoe}.

 \end{proof}

As seen above, the advantage of this method is that it is AP by solving a standard Poisson equation, i.e., eq.~\eqref{eq:PoissonDiscrete}. However, when this equation is discretized by a standard explicit discretization, i.e., eq.~\eqref{eq:StandardPoisson}, the electric potential tends to infinity when $\lambdaSq\rightarrow0$ if the $\Delta x$ is not small enough, resulting in a large numerical error. 

%

In the discrete electron momentum equation of the acoustic step in eq.~\eqref{velocityElecDiscrete}, we can find problems of consistency due to the small Mach number ($\varepsilon\to0$). In the following property we explain how these problems are solved by preconditioning the numerical dissipation.

\begin{property} If $f_a=\mathcal{O}(\varepsilon^{1/2})$, then, the truncation error of the electron momentum equation in the acoustic step in eq.~\eqref{elMom_ND_acoustic} is uniform with respect to $\varepsilon < 1$.
\end{property} 
\begin{proof}
The proof is analogous to the one presented in \cite{Chalons16,Chalons17b}. We assume that the variables describe a smooth flow so we can expand them in Taylor series to study the truncation error (see Appendix B for the detailed derivation of the truncation error). With this development, we use the numerical pressure flux of \eqref{eq:pressureFlux}, and we neglect second order terms, i.e., $\mathcal{O}\left(\Delta x^2\right)$, $\mathcal{O}\left(\Delta t^2\right)$. Consequently, the discretization is consistent with 
\begin{equation}
  \dtb \ueb + \tauelec \varepsilon^{-1}\dxb \rhoeb - \varepsilon^{-1} \dxb \phib = \mathcal{O}\left(\Delta t\right) + \mathcal{O}\left(f_a\varepsilon^{-1/2}\Delta x\right) + \mathcal{O}\left(\Delta x\right).
\end{equation}
The function $f_a$ in eq.~\eqref{eq:Liou} is a continuous function of $\varepsilon$, as follows,
\begin{equation}
f_a(\varepsilon) = \sqrt{C_{j+1/2}^1 + \varepsilon C_{j+1/2}^2}.
\end{equation}
Assuming that $C_{j+1/2}^1$ and $C_{j+1/2}^2$ are of order one, then, $f^a = \mathcal{O}(\varepsilon^{1/2})$ and therefore the truncation error is uniform with respect to $\varepsilon < 1$.
\end{proof}


\subsection{Discretization of the electron transport and ion dynamics system}\label{sec:motivation}

In this subsection we present the discretization for the system \eqref{transport}. As explained before, this system represents the larger scales related to the transport of electrons and the ion scales.

\subsubsection{Discretization of the electron transport system}
The transport system for the electrons (eqs.\eqref{elMass_ND_transport}-\eqref{elMom_ND_transport}) is a hyperbolic system. By using the upwind scheme proposed in Chalons et al.~\cite{Chalons16}, we can discretize this system by the following numerical scheme.

\begin{numScheme}
The discretized transport system for the electrons (eqs.\eqref{elMass_ND_transport}-\eqref{elMom_ND_transport}) reads
\begin{multline}
\left(\begin{array}{c}
  \rhoeb\\
  \rhoeb\ueb
\end{array}\right)_j^{\nPlusOne}
=
\left(\begin{array}{c}
  \rhoeb\\
  \rhoeb\ueb
\end{array}\right)_j^{\nPlusOneM}
-
\frac{\Delta t}{\Delta x}
\left[
\uebjphalf^n 
\left(\begin{array}{c}
  \rhoeb\\
  \rhoeb\ueb
\end{array}\right)_{L/R}^{\nPlusOneM}
-
\uebjmhalf^n 
\left(\begin{array}{c}
  \rhoeb\\
  \rhoeb\ueb
\end{array}\right)_{L/R}^{\nPlusOneM}
\right]
\\+
\frac{\Delta t}{\Delta x}
\left(\uebjphalf^n -  \uebjmhalf^n\right)
\left(\begin{array}{c}
  \rhoeb\\
  \rhoeb\ueb
\end{array}\right)_j^{\nPlusOneM}
+
\left(
\begin{array}{c}
 \left(\rhoeb \Damk\right)\\
0
\end{array}\right)^{\nPlusOneM}_j,
\end{multline}
where the velocity in the interface is defined in eq.~\eqref{velocityInterface} and
\begin{equation}
\left(\begin{array}{c}
  \rhoeb\\
  \rhoeb\ueb
\end{array}\right)_{L/R}
 = \left\{
 \begin{array}{r l}
\left(\begin{array}{c}
  \rhoeb\\
  \rhoeb\ueb
\end{array}\right)_{L} & \text{if~~} \uebjphalf^n \geq 0\\
\left(\begin{array}{c}
  \rhoeb\\
  \rhoeb\ueb
\end{array}\right)_{R} & \text{else.}
 \end{array}\right .
\end{equation}
\end{numScheme}
As in Chalons et al.~\cite{Chalons16}, the CFL condition associated to the transport system is
\begin{equation}
\max_{\ {j\in N}\ }\left(\frac{\Delta t}{\Delta x} \left( (\uebjmhalf^n)^+ - (\uebjphalf^n)^-\right),~\Delta t\Damk_j\right) \leq 1,
\end{equation}
where he superscript stands for $b^\pm = \frac{b\pm|b|}{2}$. 


\subsubsection{Discretization of the ion dynamics system}

The low-temperature plasmas have an additional difficulty in the integration of the ion dynamics. As the plasma potential scales with the electron temperature, the Lorentz force on the ions is in general much larger than the ions pressure gradient. Naive discretizations of the source term may create numerical instabilities (see, e.g., Section~\ref{subsec_WB_test} for the present problem or \cite{Bermudez94, Leveque90a, Gosse13, Leveque_book, Despres05} in a general framework). In order to tackle this problem, we propose an upwind discretization \cite{Bermudez94} of the source term in eqs.~\eqref{IonMass_ND_transport}-\eqref{IonMom_ND_transport},as presented in the following numerical scheme. 

\begin{numScheme}

The discretization of the ion dynamics system \eqref{IonMass_ND_transport}-\eqref{IonMom_ND_transport} reads

\begin{equation}
  {\Uion}_j^{\nPlusOne} = {\Uion}_j^n - \frac{\Delta t}{\Delta x}\left( {\Fion}_{j+1/2}^n -  {\Fion}_{j-1/2}^n  \right) + \Delta t \left({{\Sion}_j^{\phi}}^{\nPlusOneM} + {{\Sion}_j^{\text{Colls}}}^{n} \right)
\end{equation}
where 
\begin{equation}
  {\Uion} = \left( \rhoionb, ~\rhoionb \uib\right)^T,~~{\Fion} = \left( \rhoionb \uib,~\rhoionb \uib^2 + \rhoionb\kappa \right)^T, ~~{\Sion}^{\phi} = \left(0,~-\rhoionb\dxb\phib\right)^T,~\text{and}~~~{\Sion}^{\text{Colls}} = (\rhoeb\Damk,~0)^T
\end{equation}

The discretization of the convective fluxes ${\Fion}$ is performed with an approximate Riemann HLL scheme \cite{HLL83}. The discretization of the collisional source term is at the cell center, i.e.,
\begin{equation}
{\Sion}_j^{\text{Colls}} = \left( (\rhoeb{\Damk})_j^n,~0\right)^T.
\end{equation}
Alternatively, as the Lorentz force is dominant in low pressure discharges, we propose a well-balanced scheme as in \cite{Bermudez94}. We decompose the source term into the contribution at the interfaces. For the sake of simplicity of notation, we do not include the subscript corresponding the time discretization. The discretization reads
\begin{eqnarray}\label{eq:WBSource}
  {\Sion}^{\phi}_j &=& \mathcal{G}\left({\Uion}_{j-1},{\Uion}_j,{\Uion}_{j+1}, \phib_{j-1},\phib_j,\phib_{j+1}\right)\\
   & = & \mathcal{G}^L\left({\Uion}_{j-1},{\Uion}_j,\phib_{j-1},\phib_j\right) + \mathcal{G}^R\left({\Uion}_j,{\Uion}_{j+1},\phib_j,\phib_{j+1}\right). \nonumber
\end{eqnarray}
where the functions $\mathcal{G}^{L/R}$ read
\begin{subequations}
\begin{align}
  \mathcal{G}^{L}({\Uion}_{j-1},{\Uion}_j, \phib_{j-1},\phib_j) &= \frac{1}{2}\left[ \mathbf{I} + \left| \mathbf{A}({\Uion}_{j-1},{\Uion}_j)\right| \mathbf{A}({\Uion}_{j-1},{\Uion}_j)^{-1} \right] \hat{\mathcal{G}}({\Uion}_{j-1},{\Uion}_j,\phib_{j-1},\phib_j), \nonumber \\ \\
    \mathcal{G}^{R}({\Uion}_j,{\Uion}_{j+1}, \phib_j,\phib_{j+1}) &= \frac{1}{2}\left[ \mathbf{I} - \left| \mathbf{A}({\Uion}_{j},{\Uion}_{j+1})\right| \mathbf{A}({\Uion}_{j},{\Uion}_{j+1})^{-1} \right] \hat{\mathcal{G}}({\Uion}_{j},{\Uion}_{j+1},\phib_{j},\phib_{j+1}). \nonumber \\
\end{align}
\end{subequations}
The matrix $\mathbf{A}$ is the Jacobian matrix of the convective flux and $\hat{\mathcal{G}}$ is the source term computed in the interface. They are defined as 
\begin{align}
\mathbf{A}\left({\Uion}_j,{\Uion}_{j+1}\right) &= \left[
\begin{array}{cc}
0 & 1\\
-\uionbjphalf^2 + \soundionb^2 & 2\uionbjphalf
\end{array}
\right]
~~~
\text{and}
~~~\\
\hat{\mathcal{G}}\left({\Uion}_j,{\Uion}_{j+1},\phib_{j},\phib_{j+1}\right) &= \left(
\begin{array}{c}
\hat{g}_{j+1/2}^{mass}\\
\hat{g}_{j+1/2}^{mom}
\end{array}
\right)
=  \left(
\begin{array}{c}
0\\
\bar{n}_{\ion_{j+1/2}}\frac{\phib_{j+1} - \phib_{j}}{\Delta x}
\end{array}
\right).
\end{align}
where 
\begin{equation*}
\uionbjphalf = \frac{\sqrt{\rhoionbj}\uionbj + \sqrt{\rhoionbjpone}\uionbjpone}{\sqrt{\rhoionbj} + \sqrt{\rhoionbjpone}}~~~\text{and}~~~\bar{n}_{\ion_{j+1/2}} = \frac{\rhoionbj +\rhoionbjpone}{2}.
\end{equation*}

In the particular case of isothermal flow with no mass source term, $\mathcal{G}^{L/R}$ simplify to
\begin{subequations}
\begin{align}
  \mathcal{G}^{L}(&{\Uion}_{j-1},{\Uion}_j, \phib_{j-1},\phib_j) = \hat{g}_{j-1/2}^{mom} \\ &\left(
  \begin{array}{c}
  \frac{1}{2\soundionb}\sign(\soundionb + \uionbjmhalf) - \frac{1}{2\soundionb}\sign(\soundionb - \uionbjmhalf)\\
    1 + \frac{\uionbjmhalf}{2\soundionb}\left[\sign(\soundionb + \uionbjmhalf) - \sign(\uionbjmhalf - \soundionb)\right] + \frac{1}{2}\left[\sign(\soundionb + \uionbjmhalf) + \sign(\uionbjmhalf - \soundionb)\right]
  \end{array}\right), \nonumber \\
  \mathcal{G}^{R}(&{\Uion}_j,{\Uion}_{j+1}, \phib_j,\phib_{j+1})= \hat{g}_{j+1/2}^{mom} \\ &\left(
  \begin{array}{c}
  -\frac{1}{2\soundionb}\sign(\soundionb + \uionbjphalf) + \frac{1}{2\soundionb}\sign(\soundionb - \uionbjphalf)\\
    1 - \frac{\uionbjphalf}{2\soundionb}\left[\sign(\soundionb + \uionbjphalf) - \sign(\uionbjphalf - \soundionb)\right] - \frac{1}{2}\left[\sign(\soundionb + \uionbjphalf) + \sign(\uionbjphalf - \soundionb)\right]
  \end{array}\right).\nonumber
\end{align}
\end{subequations}
Note that the velocity of the ions is evaluated at the interface between the cells. As $\sign(\delta u_\ion)$ is a discontinuous function, in numerical experiments, we observe that the $\sign(\delta u_\ion)$ function introduces discontinuities in the solution that are advected by the fluids, as the model has no physical dissipation. For that reason, we use the following approximation for the $\sign(\delta u_\ion)$ function
\begin{equation}\label{eq:signFnct}
  \text{sign}(\delta u_\ion) \approx \tanh\left(\frac{\delta u_\ion}{u_{\ion_\infty}} \right)
\end{equation}
where $\delta u_\ion = \soundionb + \uion$ or $\soundionb - \uion$ and $u_{\ion_\infty}$ is a characteristic speed of the problem. 
\end{numScheme}

\subsection{Summary of the numerical scheme}

In Fig.~\ref{Fig:algorithm}, we summarize the steps followed in order to obtain the numerical results corresponding to the AP scheme. We highlight that the steps in the electron acoustic system need to be done sequentially. The order of steps 4) and 5) can be inverted obtaining the same result.

\begin{figure} [H] 
	\centering
		  \includegraphics[trim=0cm 0cm 0cm 0cm, clip=true,width=0.9\textwidth]{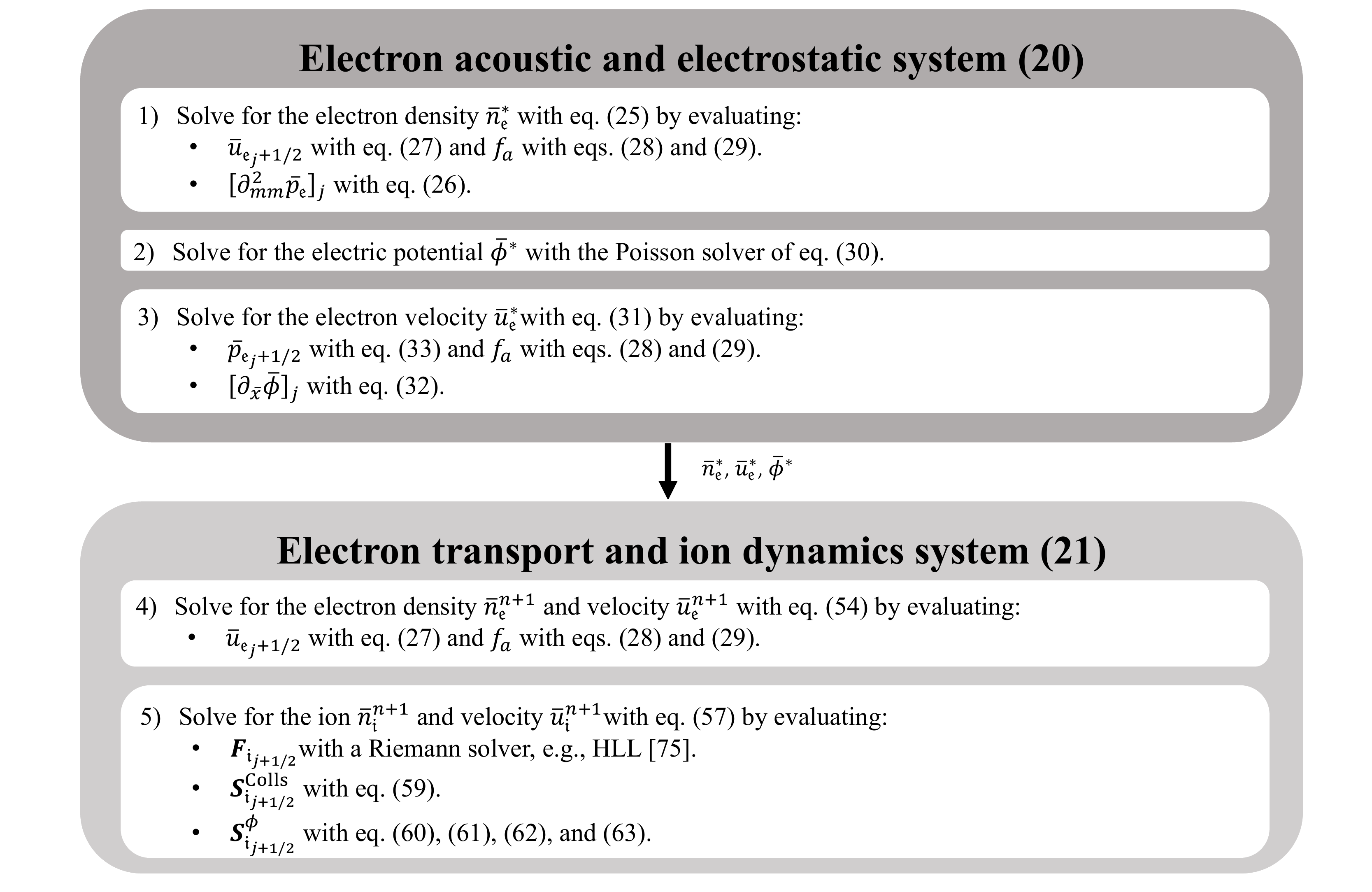} 
	\caption{Algorithm for the implementation of the AP scheme}
	\label{Fig:algorithm}
\end{figure}

\section{Numerical results}
\label{sec:results}

In the present section, we present the numerical results of the proposed scheme. We first simulate a periodic two-stream perturbation in a thermal plasma ($\Te = \Tion$). In this problem we study: (1) the convergence of our numerical scheme with the time step and the mesh size, (2) the asymptotic behaviour and consistency of the scheme when $\lambdaSq\rightarrow0$, and (3) to assess the low-Mach correction (limit when $\varepsilon\rightarrow0$) in the electron fluxes. 

Second, we simulate the same instability in a low-temperature plasma order to study: (1) the performance of the well-balanced scheme in the ions, (2) assess the spatial convergence in a low-temperature plasma, and (3) to evaluate the computational performance as compared to a standard discretization. 

Finally, we propose a numerical set-up to simulate a plasma sheath between two floating conducting walls. In this case, we simulate a physical domain that contains charge separation regions. We show that our numerical set-up captures the physics of the classical plasma sheath \cite{Riemann05}, computing the potential drop as predicted by the theory. As the numerical domain contains both a quasi-neutral and charged regions, in this test we assess the ability of the scheme to represent both limits.

\subsection{Propagation of a two-stream periodic perturbation in a collisionless thermal plasma}

The two-stream instability occurs in a uniform plasma of density $n_0$ where the electrons have a relative velocity $u_0$ with respect to the ions \cite{Chen84}. Perturbations of the equilibrium state can lead to instabilities for sufficiently long wavelengths $k$. In our case, we will restrict our work to study the propagation of a periodic perturbation in a stable plasma. This case has been previously used for the convergence study of the two-fluid model in a number of works \cite{Crispel05,Crispel04,Crispel07}. Our set of equations is slightly different to these previous works as we study the isothermal case whereas in Crispel et al.~\cite{Crispel05,Crispel04,Crispel07} the fluid model assumes an isentropic law. As in \cite{Chabert11}, in our case, the dispersion relation is found to be
\begin{equation}
  F(k, \omega) = \alpha_4\omega^4 + \alpha_3\omega^3 + \alpha_2\omega^2 + \alpha_1\omega + \alpha_0 = 0 \label{dispersionRelation}
\end{equation}
where the coefficients $\alpha_i(k;\varepsilon,\kappa, n_0, v_0, \lambdaSq)$ read
\begin{subequations}
\begin{eqnarray}
  \alpha_4 = \varepsilon\lambdaSq, ~~~\alpha_3 = -2\varepsilon\lambdaSq k u_0,&~~~&\alpha_2 = \left(\varepsilon k^{2} \lambdaSq \left( v_{0}^{2} - \kappa \right) - \varepsilon n_{0} - k^{2} \lambdaSq - n_{0}\right)\\
  \alpha_1 = 2 \left(k^{2} \kappa \lambdaSq + n_{0}\right), ~~~\text{and}~~~ &\alpha_0& = -\varepsilon k^2 v_{0} n_{0} v_{0} - \varepsilon \lambdaSq \left( k^{4} \kappa v_{0}^{2} + 2 k \omega^{3} v_{0}\right) + k^{4} \kappa \lambdaSq + k^{2} n_{0} \left(\kappa + 1\right).
\end{eqnarray}
\end{subequations}
Firstly, we consider a perturbation in a thermal plasma, i.e., where the ions have the same temperature as the electrons. We consider a mass ratio of $\varepsilon = 10^{-4}$ and a domain of length $L = 1$ with $10^4$ Debye lengths, i.e., $\lambda = 10^{-4}$. The background plasma has a normalized density of $n_0 = 1$ and the electrons travel at a velocity of $u_0 = 1$ that corresponds to Bohm's speed at these conditions. We fix the wavelength of the perturbation to $k = 2\pi$. Under these conditions, the characteristic polynomial, eq.~\eqref{dispersionRelation}, has four real solutions, i.e., the perturbation is stable. There are two high frequency solutions, which correspond to the electron plasma waves, and two low frequency that are quasi-neutral. For our study, we use one of the low frequency solutions, i.e., $\omega = 8.8857268$. 

The initial field is the analytical solution for a small perturbation at the given frequency, wavelength, and plasma conditions. In the present case, this solution is
\begin{equation}
\Ub(\xb,t=0) =\left(\begin{array}{c}
\rhoeb\\
\ueb\\
\rhoionb\\
\uionb\\
\phib
\end{array}\right)
= 
\left(\begin{array}{c}
1 + 2.41425\times10^{-2}\sin(2\pi \xb)\\
1 + 10^{-2}\sin(2\pi \xb)\\
1 + 2.41425\times10^{-2}\sin(2\pi \xb)\\
3.41425\times10^{-2}\sin(2\pi \xb)\\
2.41421\times10^{-2}\sin(2\pi \xb)
\end{array}\right).
\end{equation}

In Fig.~\ref{1_NoLowMachCorrection}, we present the solution of the Lagrange-projection AP scheme without the low-Mach correction for the electrons at $t=0.7071$ which corresponds to one period of the wave. We present the results with four different mesh resolutions $N = 50$, $100$, $200$, and $400$ points. We highlight that the mesh size is not adapted to the Debye length as there are $200$ Debye lengths inside one cell in the case of lowest resolution. The simulations use a CFL$^{conv}=0.7$, which corresponds to a time step of $\Delta t = 1.4141\times10^{-4}$, $0.7071\times10^{-4}$, $0.3535\times10^{-4}$, and $0.176775\times10^{-4}$. We also notice that this time step is much larger than the period of a plasma wave, i.e., $\omega_{p\elec}^{-1} = 10^{-6}$. A standard simulation that resolved the Debye length with 10 mesh points and the plasma frequency with $10$ time-steps, would have $2000$ more mesh points and a time step $10^3$ times smaller.

\begin{figure} [h] 
	\centering
		  \includegraphics[trim=0cm 0cm 0cm 0cm, clip=true,width=0.495\textwidth]{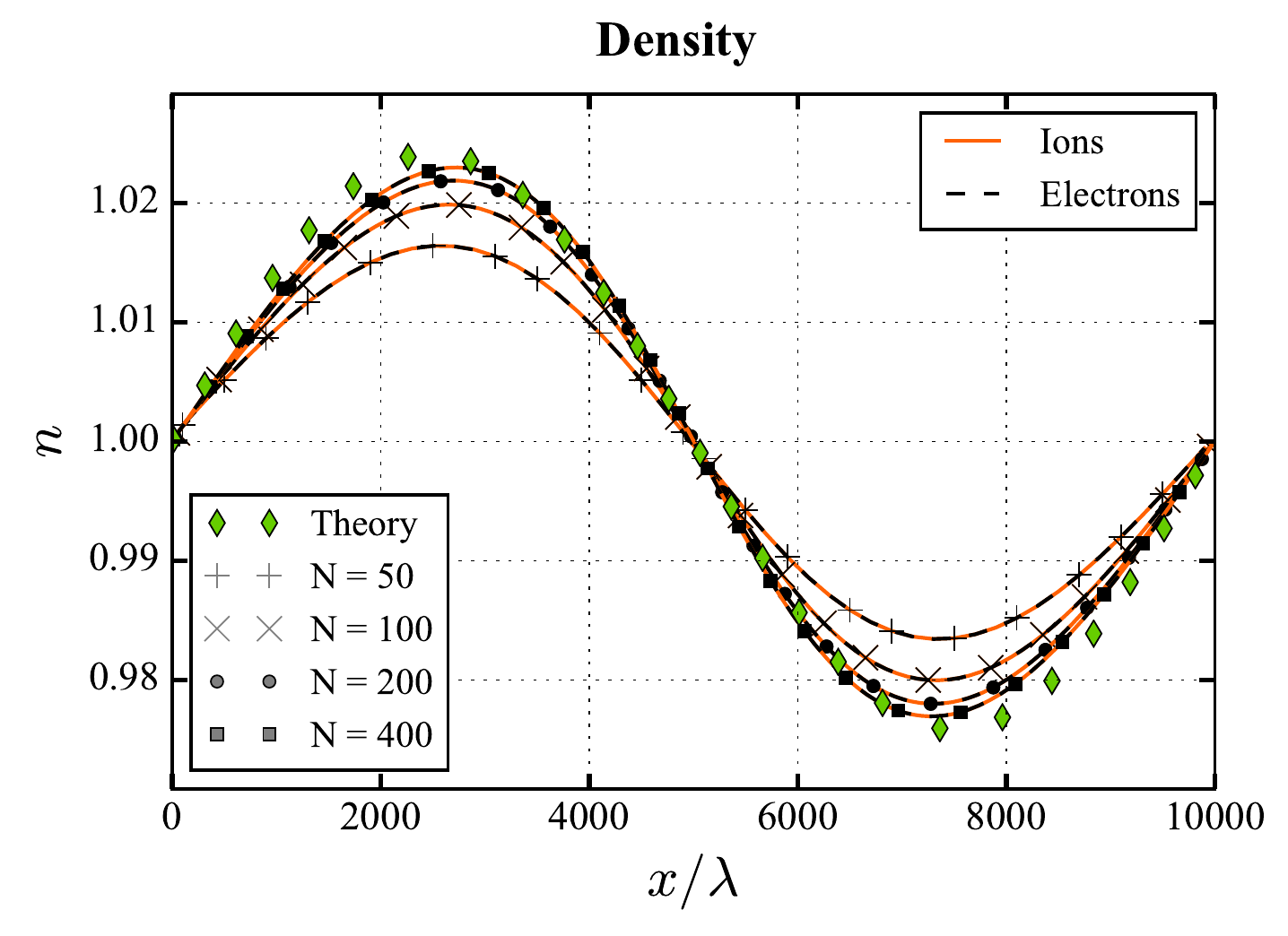} \hfill
		  \includegraphics[trim=0cm 0cm 0cm 0cm, clip=true,width=0.495\textwidth]{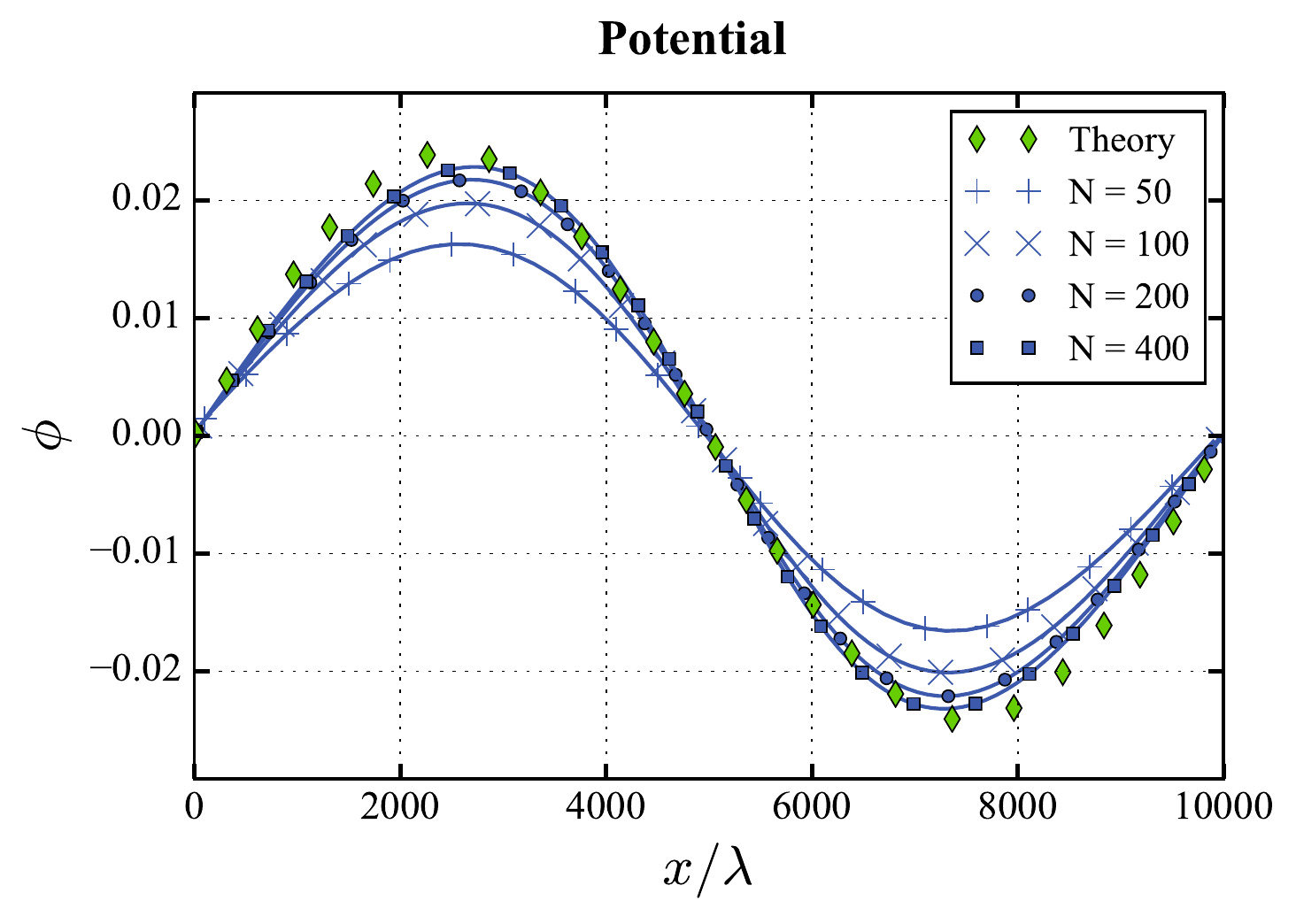}
		  \includegraphics[trim=0cm 0cm 0cm 0cm, clip=true,width=0.495\textwidth]{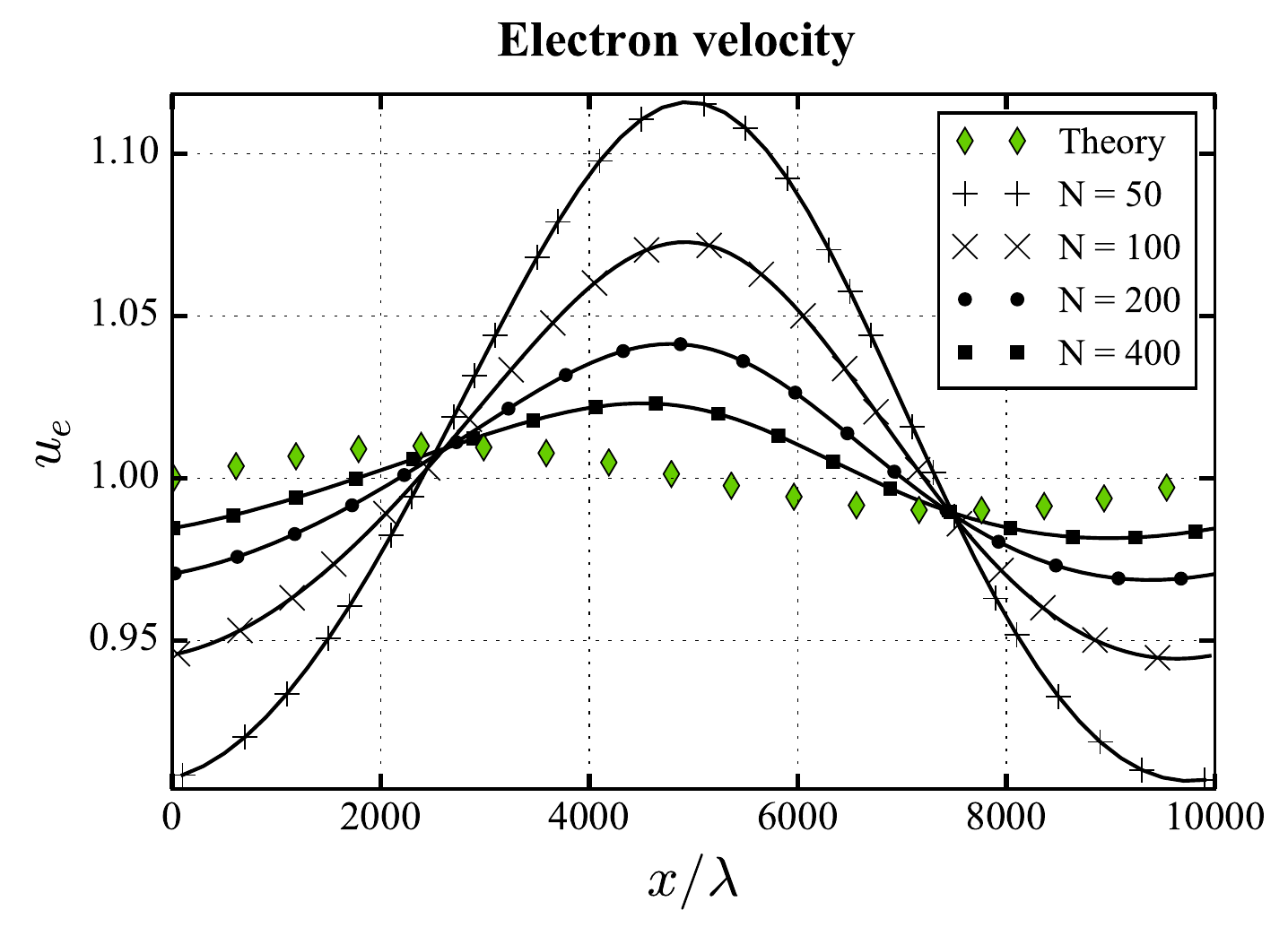}\hfill
		  \includegraphics[trim=0cm 0cm 0cm 0cm, clip=true,width=0.495\textwidth]{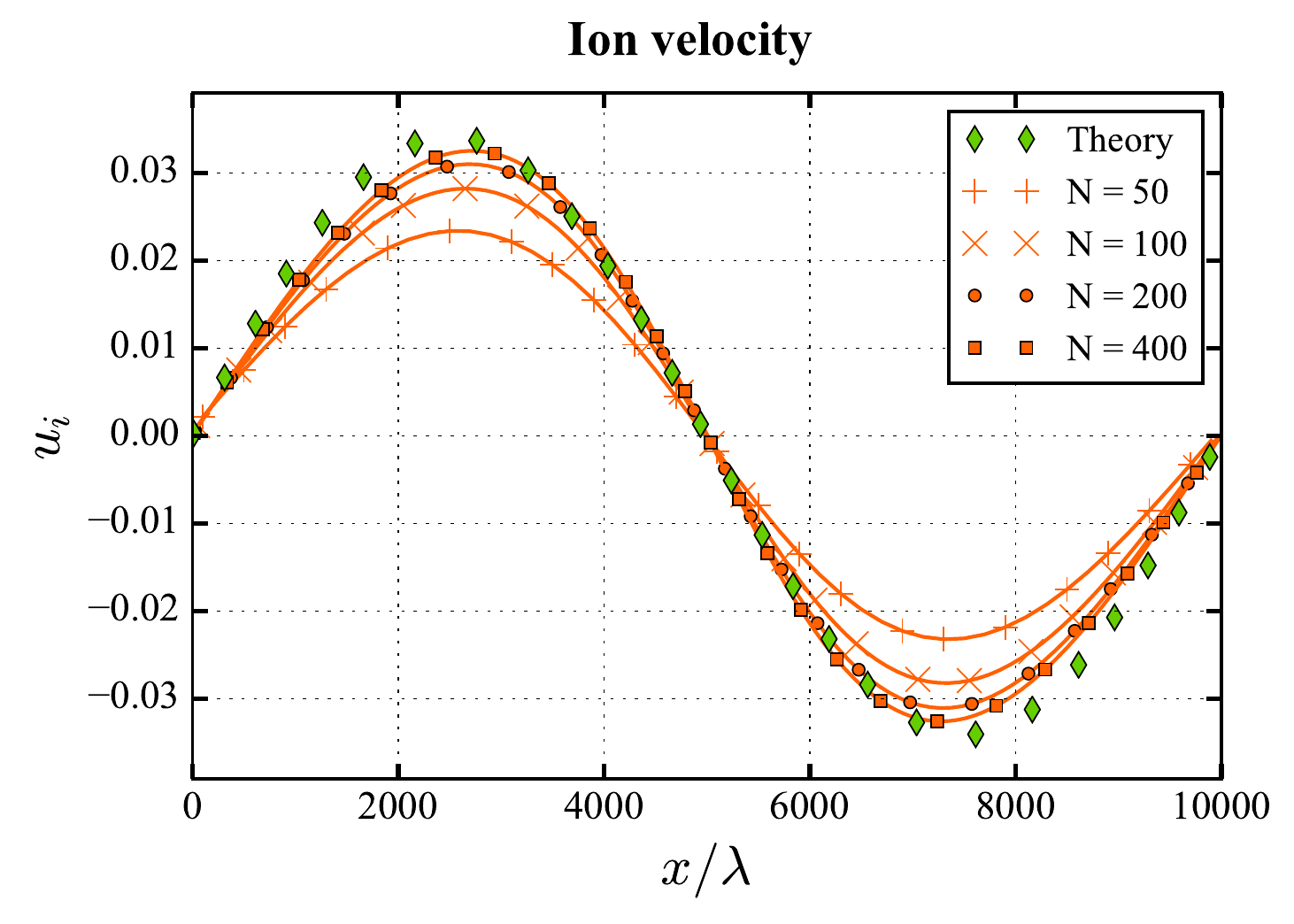} 
	\caption{Solution of the two-stream periodic perturbation in a collisionless thermal plasma at $t = 0.7071$ for different mesh resolutions with the asymptotic preserving scheme without the low-Mach correction. The simulations use CFL$^{conv}=0.7$, corresponding to a $\Delta t = 1.4142\times10^{-4},~0.7071\times10^{-4},~0.3535\times10^{-4},$ and $0.176775\times10^{-4}$. Note that $\Delta t > \omega_{pe}^{-1} =  10^{-6}$. The absence of correction for the low-Mach regime of the electrons results in a large error on the electron velocity. }
	\label{1_NoLowMachCorrection}
\end{figure}

In Fig.~\ref{2_NoAP}, we present the solution for $\rhoeb$, $\rhoionb$, $\ueb$, and $\phi$ at $t= 0.0626$, i.e., after $900$ iterations of the standard HLL scheme described in Sec.~\ref{sec:standardDiscretization} that uses $100$ mesh points and a CFL$^{conv}=0.7$. These results illustrate that indeed, using the same conditions than in the Lagrange-projection AP scheme with $N=100$, the standard discretization is unstable. It is also important to note that the scheme becomes unstable very fastly due to the rapid unstable response of the electrons.

\begin{figure}[h]
	\centering
		  \includegraphics[trim=0cm 0cm 0cm 0cm, clip=true,width=0.325\textwidth]{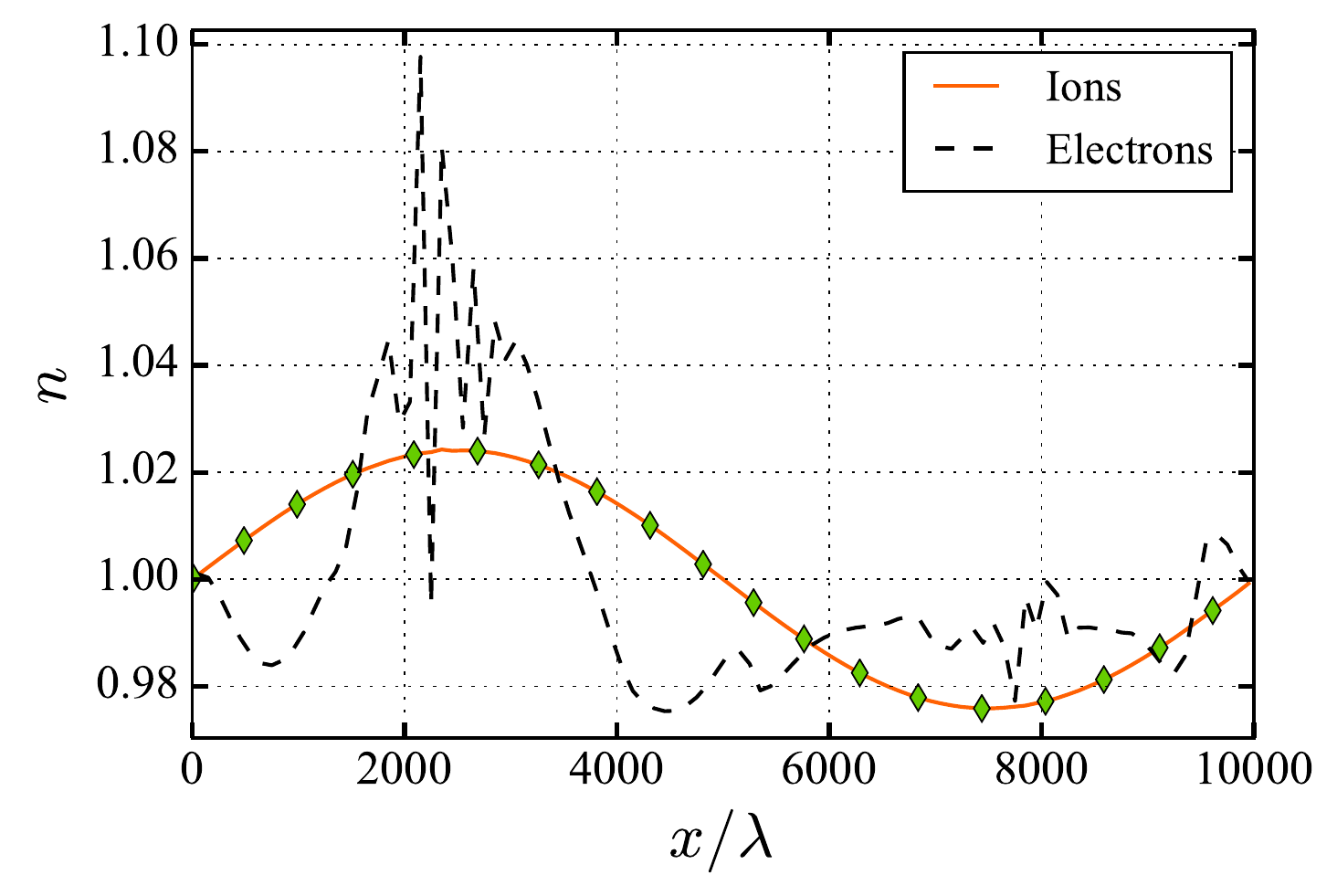} \hfill
		  \includegraphics[trim=0cm 0cm 0cm 0cm, clip=true,width=0.325\textwidth]{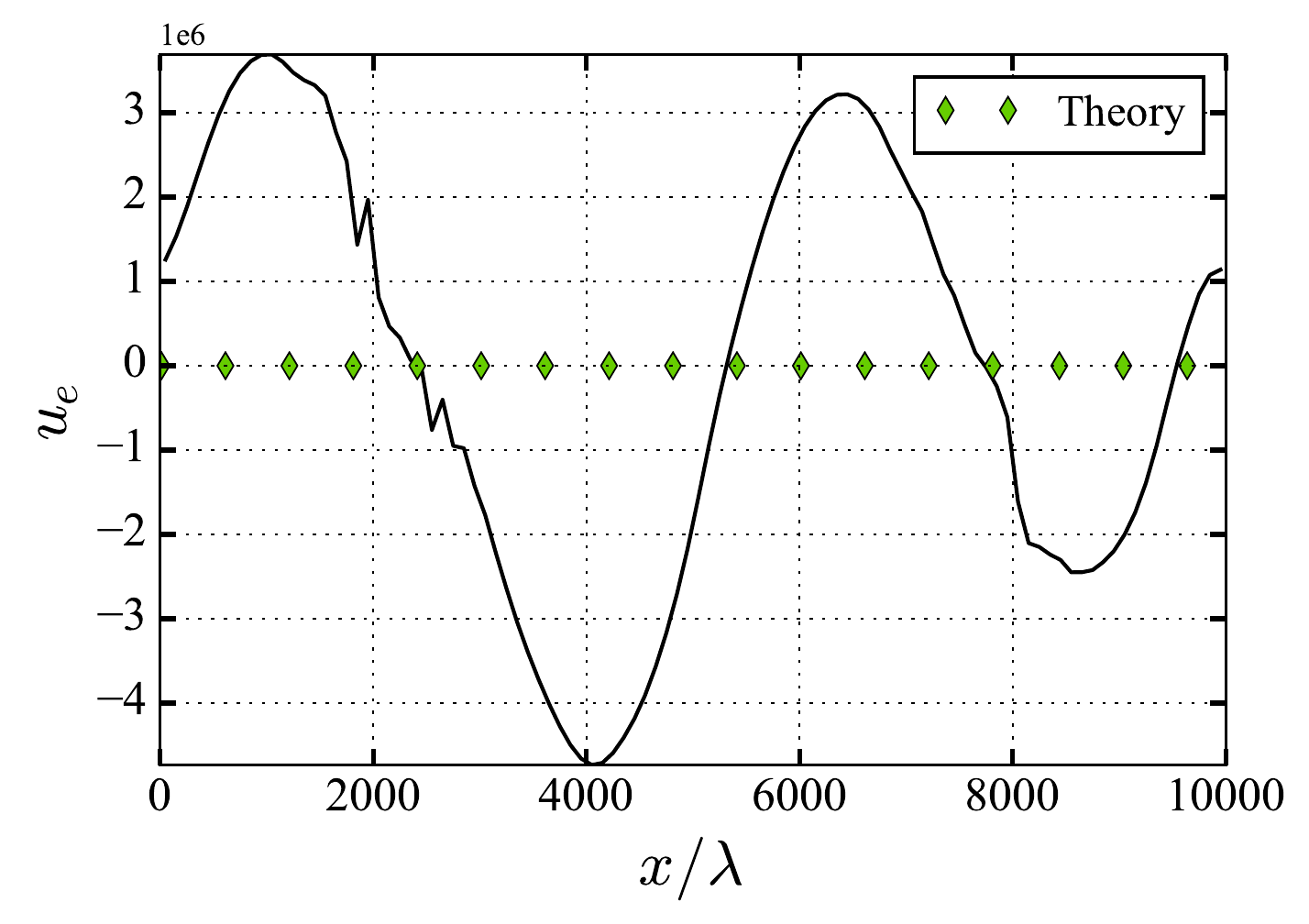}
		  \includegraphics[trim=0cm 0cm 0cm 0cm, clip=true,width=0.325\textwidth]{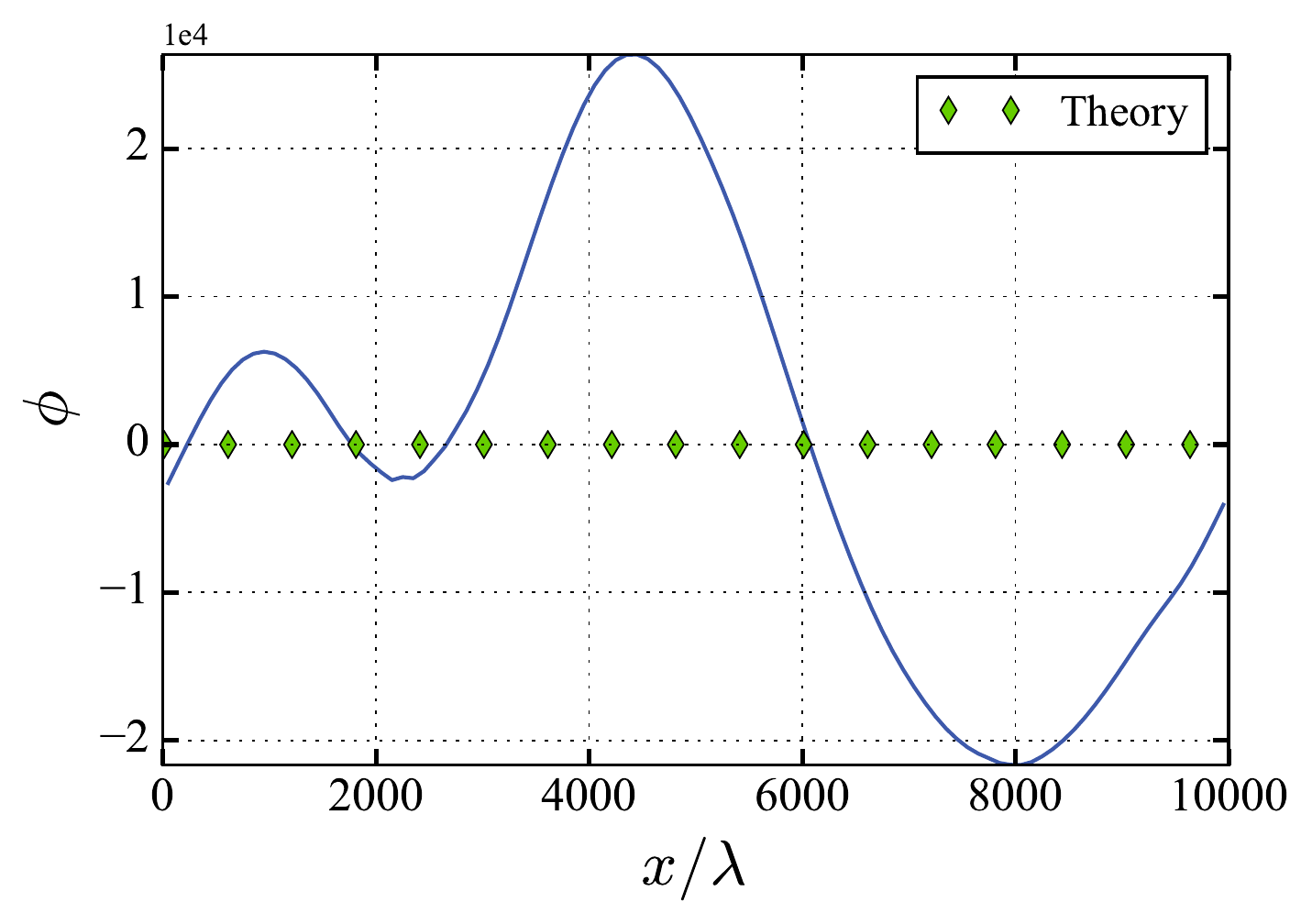} \hfill
	\caption{Solution of the two-stream periodic perturbation in a collisionless thermal plasma at $t = 0.0636$ with a standard HLL scheme discretization on $100$ cells with CFL$^{conv}=0.7$. Under these conditions the discretization is unstable and diverges very rapidly due to the fast dynamics of the electrons.}
	\label{2_NoAP}
\end{figure}

The same simulation is presented with using the Lagrange-projection AP scheme with the low-Mach correction for the electrons in Fig.~\ref{3_APMachCorrection}. We show the previous four discretization resolutions with CFL$^{conv}=0.7$. With the low-Mach correction, the solution for the electron velocity converges to the analytical solution when the mesh size is decreased, showing a much smaller error. 

\begin{figure} [!htb] 
	\centering
		  \includegraphics[trim=0cm 0cm 0cm 0cm, clip=true,width=0.495\textwidth]{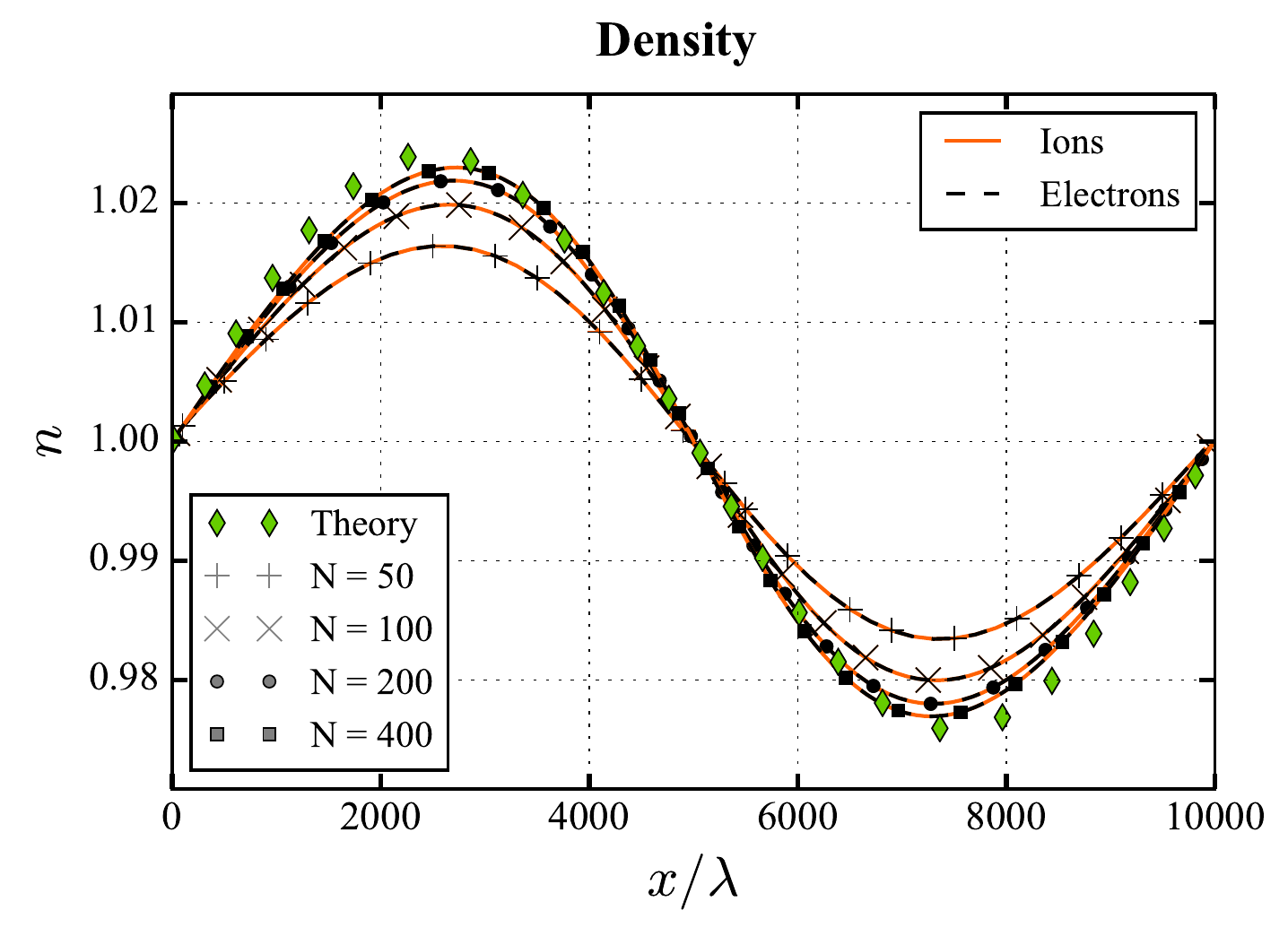} \hfill
		  \includegraphics[trim=0cm 0cm 0cm 0cm, clip=true,width=0.495\textwidth]{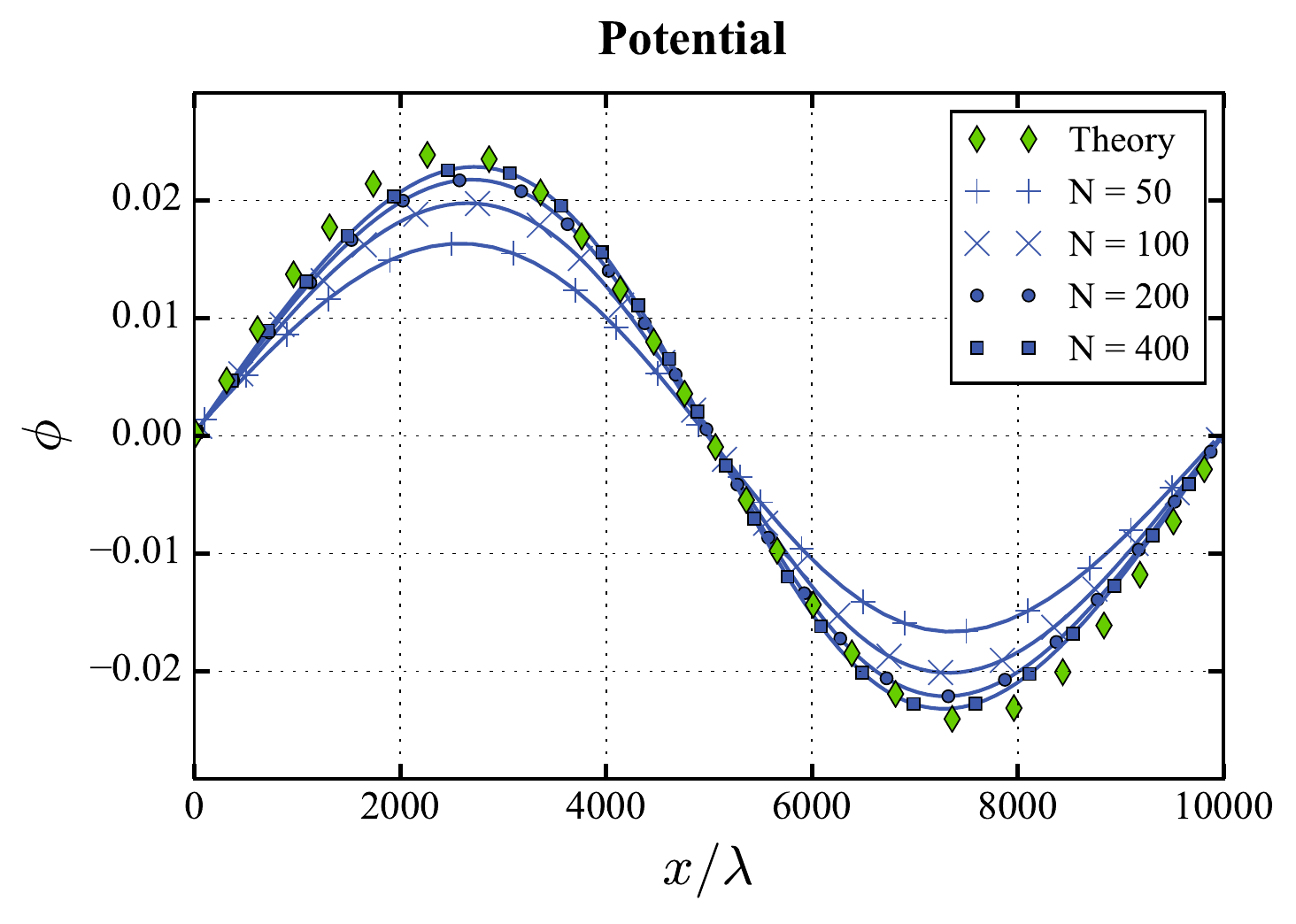}
		  \includegraphics[trim=0cm 0cm 0cm 0cm, clip=true,width=0.495\textwidth]{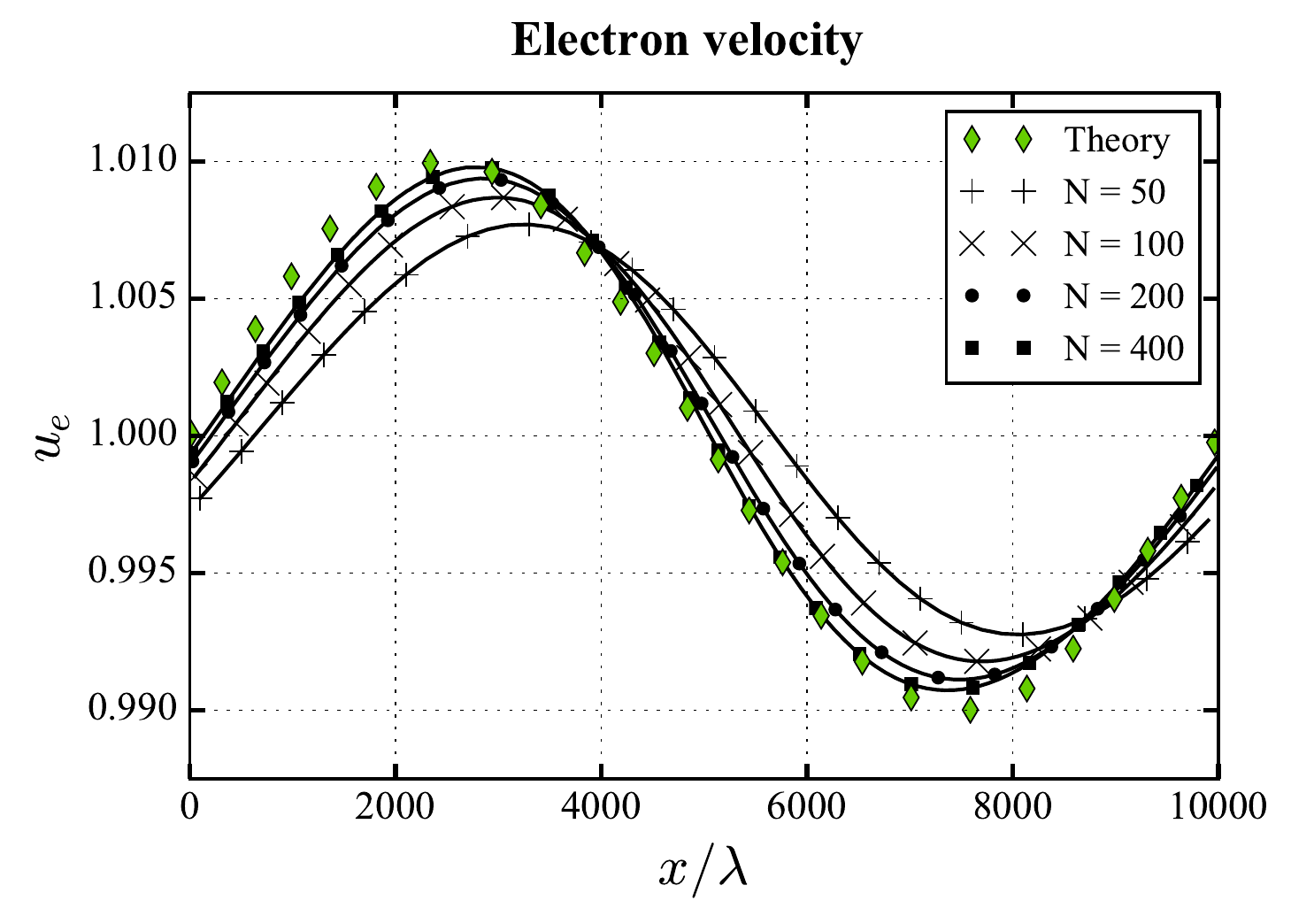}\hfill
		  \includegraphics[trim=0cm 0cm 0cm 0cm, clip=true,width=0.495\textwidth]{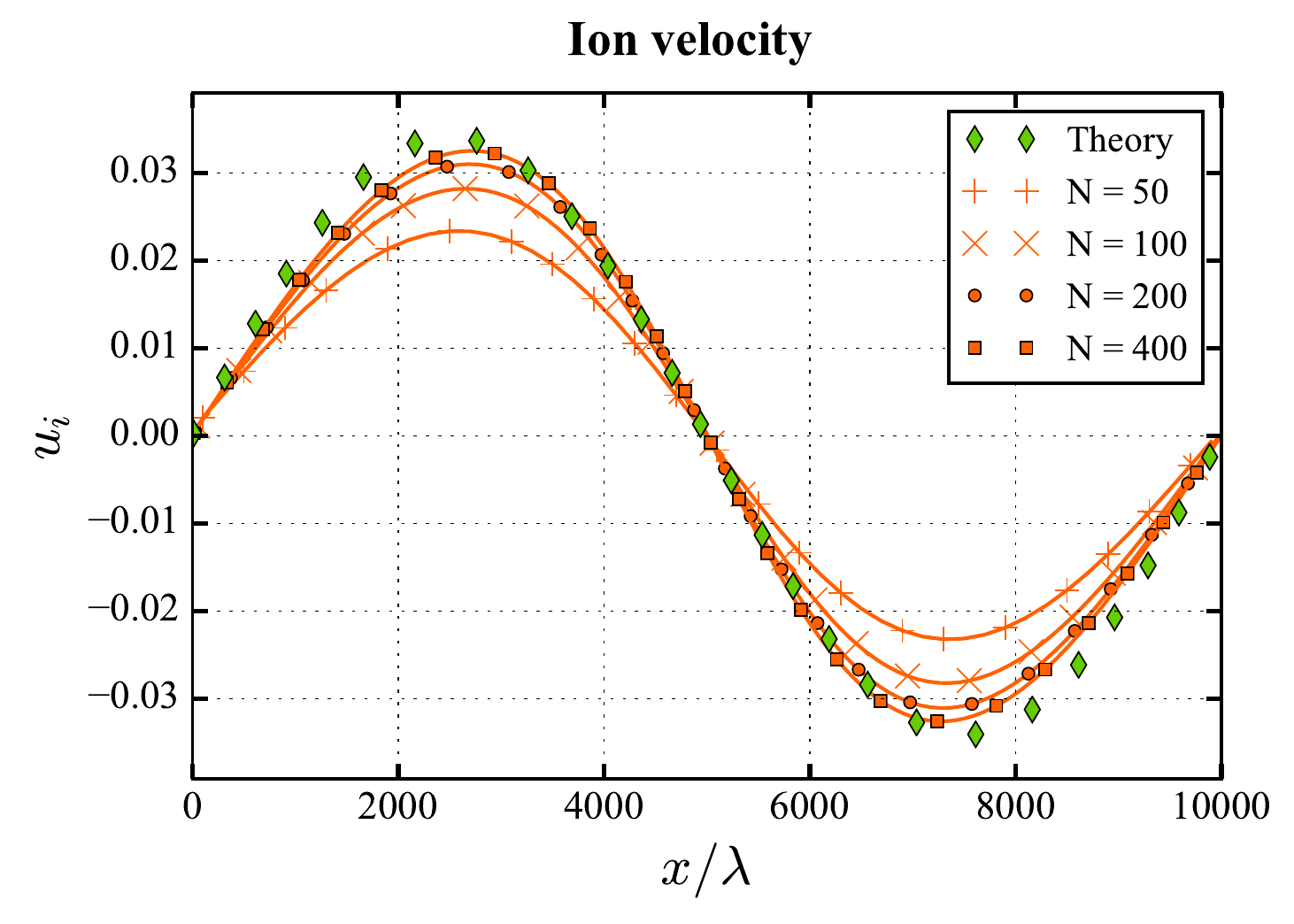} 
	\caption{Solution of the two-stream periodic perturbation in a collisionless thermal plasma at $t = 0.7071$ for different mesh resolutions with the asymptotic preserving scheme with low-Mach correction. The simulations use CFL$^{conv}=0.7$, corresponding to a $\Delta t = 1.4142\times10^{-4},~0.7071\times10^{-4},~0.3535\times10^{-4},$ and $0.176775\times10^{-4}$. Note that $\Delta t > \omega_{pe}^{-1} =  10^{-6}$. The preconditioning of the numerical dissipation for the electrons results in a correct discretization of the electron velocity. }
	\label{3_APMachCorrection}
\end{figure}

As explained in \cite{VKILS99}, when the numerical diffusion is rescaled for the low-Mach regime, the CFL condition is not restricted by the acoustic waves. This allows for the adaptation of the CFL to the eigenvalue $\lambda^\elec = \ueb$ rather than to $\lambda^\elec = \ueb \pm \bar c_\elec$. In Fig.~\ref{4_CFLcomparison}, we present the comparison of the solution for CFL$^{conv}=0.7,~7.1,$ and $28.5$ with the low-Mach correction and $100$ cells. We highlight that the largest CFL corresponds to an ion CFL$_\ion^{conv}=0.28$. The results show that the numerical diffusion is lower when the CFL is adapted to the ion scales. In the simulation with the largest CFL, the time step is $\Delta t = 1.99\times10^{4}\omega^{-1}_{p\elec}$, i.e., $10^4$ larger than an explicit non-AP discretization.


\begin{figure} [!htb] 
	\centering
		  \includegraphics[trim=0cm 0cm 0cm 0cm, clip=true,width=0.495\textwidth]{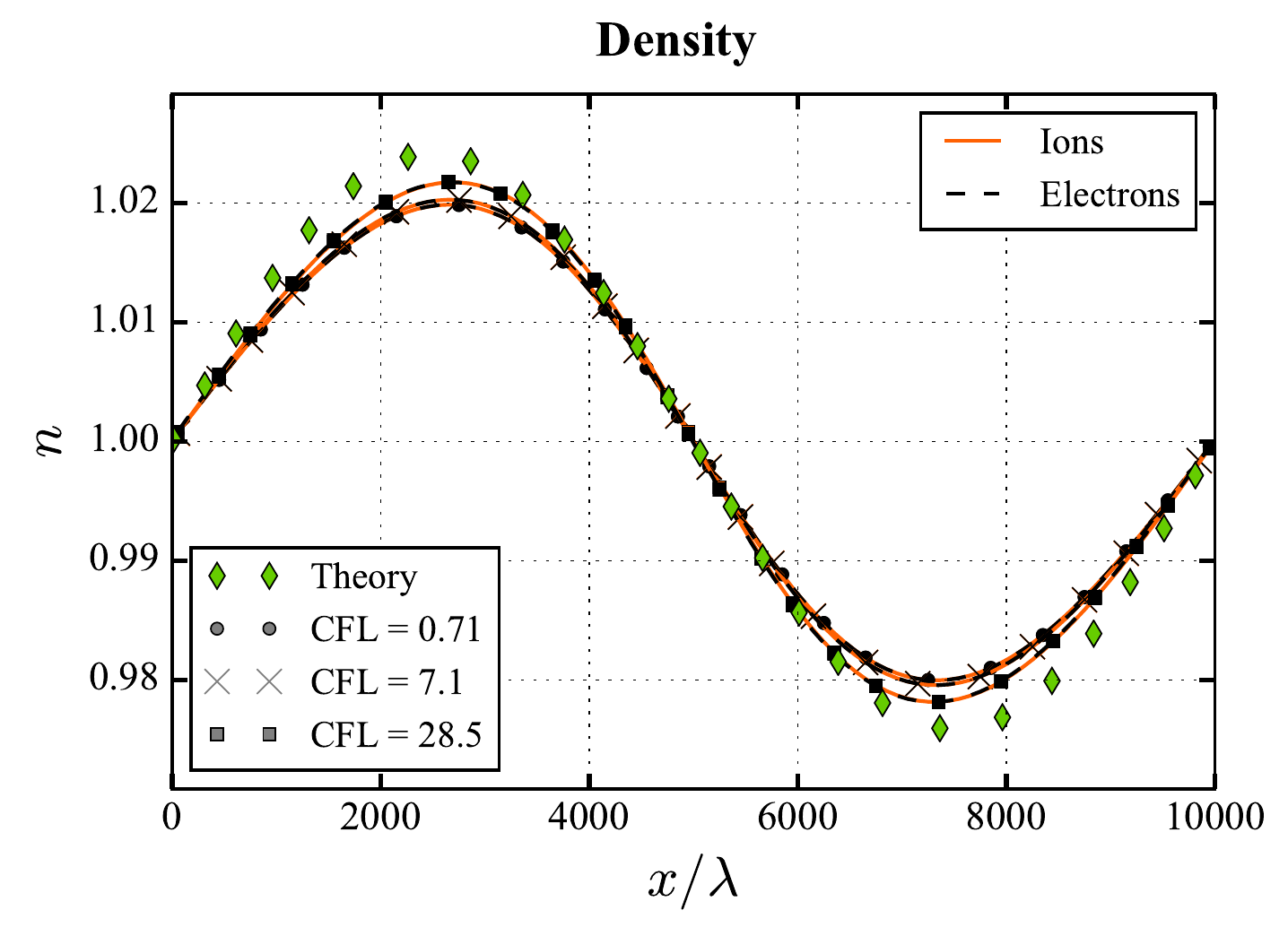} \hfill
		  \includegraphics[trim=0cm 0cm 0cm 0cm, clip=true,width=0.495\textwidth]{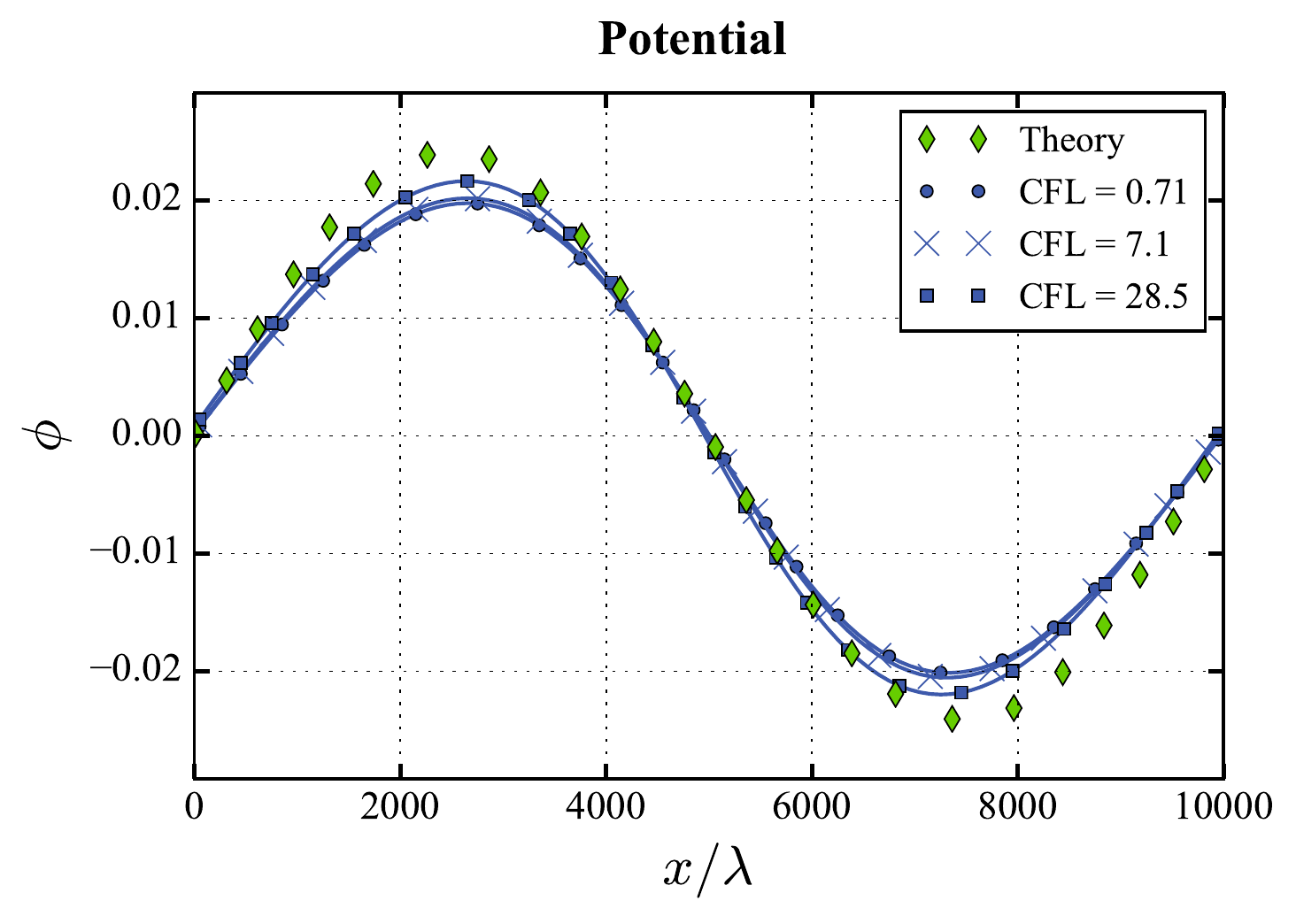}
		  \includegraphics[trim=0cm 0cm 0cm 0cm, clip=true,width=0.495\textwidth]{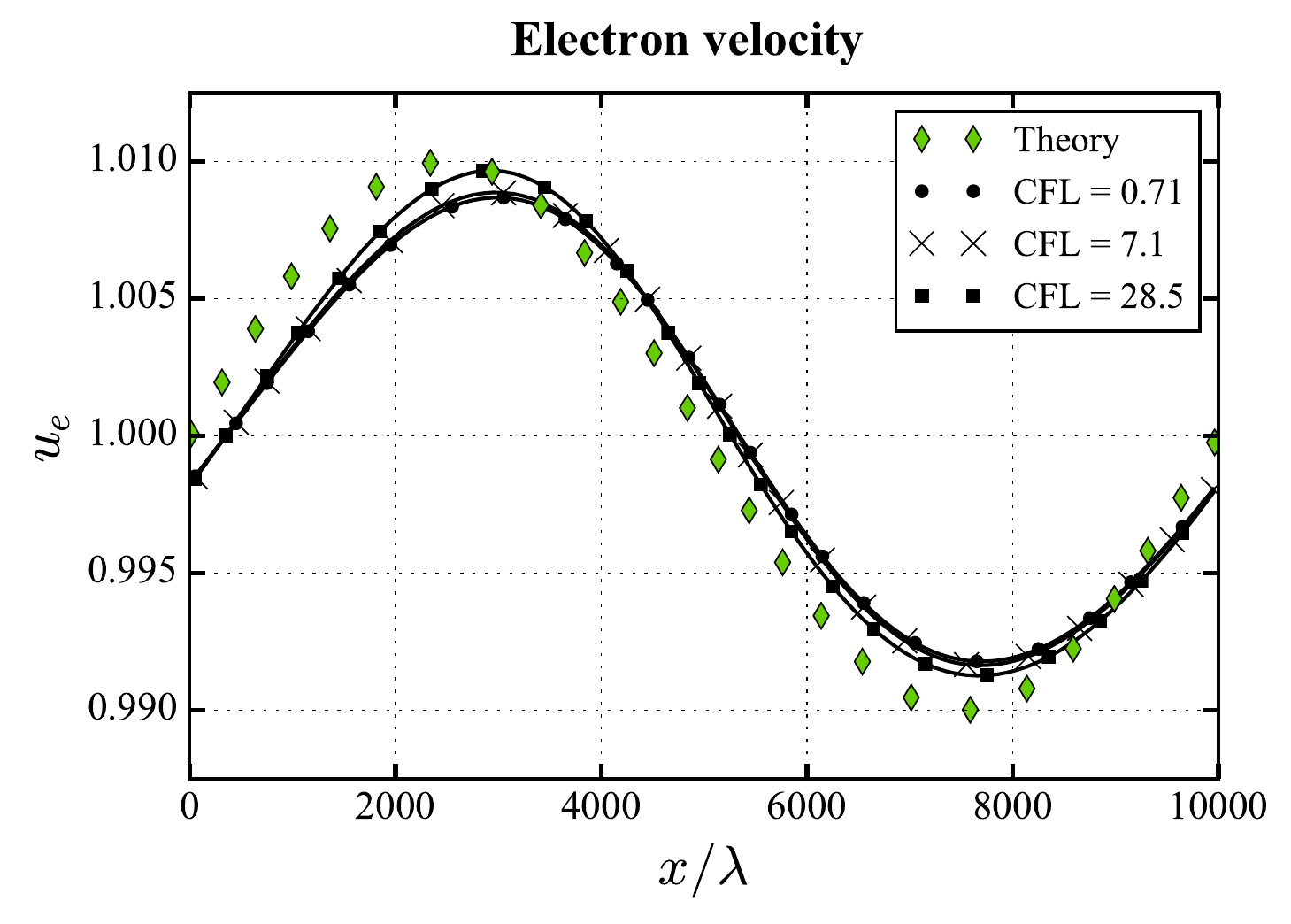} \hfill
		  \includegraphics[trim=0cm 0cm 0cm 0cm, clip=true,width=0.495\textwidth]{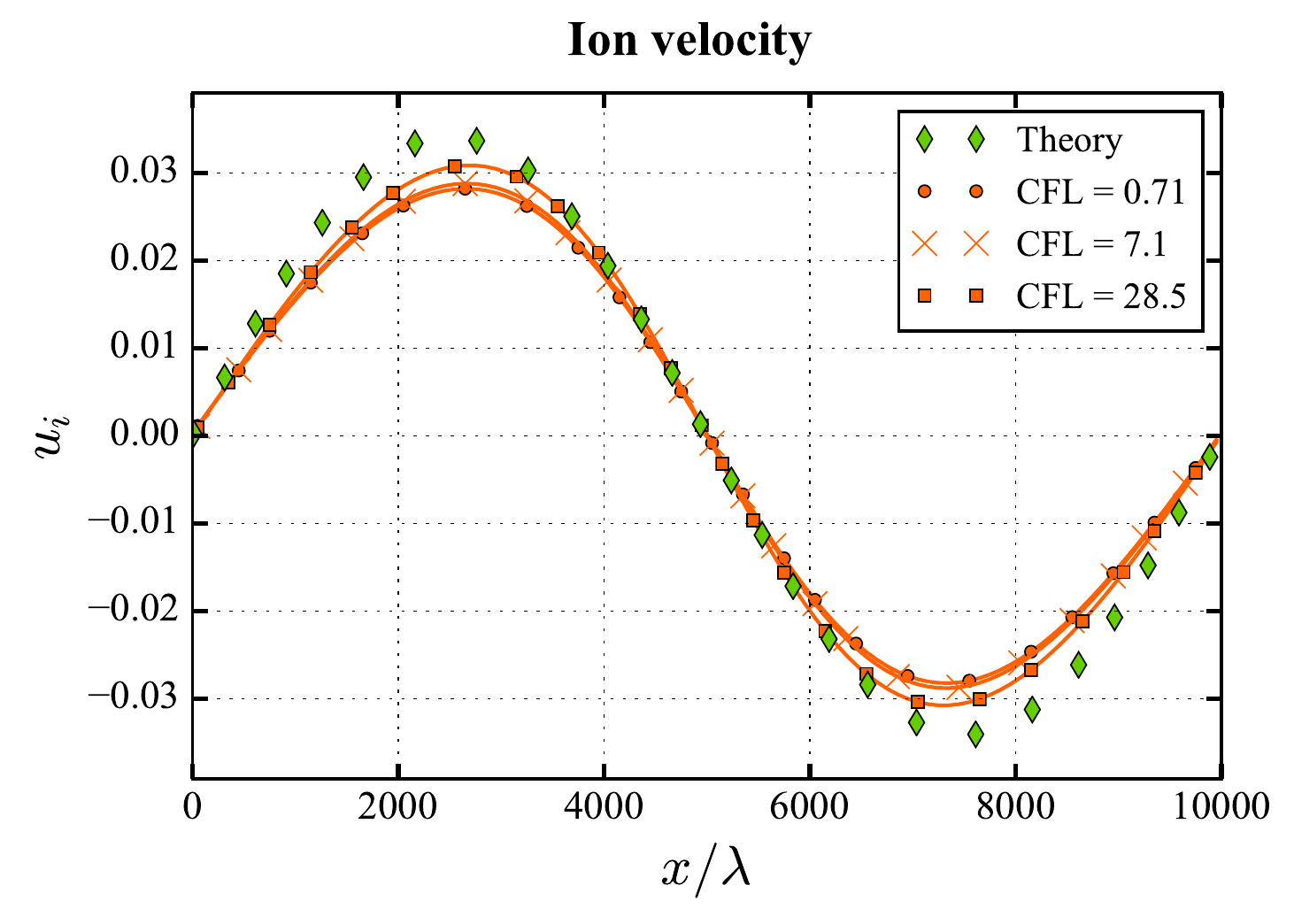}
	\caption{Solution of the two-stream periodic perturbation in a collisionless thermal plasma at $t = 0.7071$ with $N = 100$ points with the asymptotic preserving scheme with low-Mach correction and different time resolution. The time splitting strategy with low-Mach correction allows for a larger time step that does not need to resolve the electron sound waves without an implicit discretization. }
	\label{4_CFLcomparison}
\end{figure}

In Fig.~\ref{5_ErrorConvergence}, the convergence of the $L_2$ error norm is presented for three different cases: the AP scheme without low-Mach correction for the electrons and CFL$^{conv}= 0.7$ (left), the AP scheme with low-Mach correction for the electrons and CFL$^{conv}= 0.7$ (center), and the AP scheme with low-Mach correction for the electrons and CFL$^{conv}= 28.5$ (right). We clearly see that the low-Mach correction of the electrons reduces the error of the electron velocity by one order of magnitude. When the CFL$^{conv}= 0.7$, the convergence of the error corresponds to a first order scheme. In the case of  CFL$^{conv}= 28.5$, the error norm is reduced, especially in the ion quantities. However, we observe that the error convergence is slower than a first-order scheme. This is caused by the low-Mach regime of the ions. Since the perturbation of the ion velocity is small, the Mach of the ions varies from $M_\ion = 0$ to $0.03$. This could be fixed with a low-Mach correction, similar to the one performed in the electrons. Nevertheless, the interest of this paper is low-temperature plasmas, which means that in general, the velocity of the ions is much larger than the ion thermal speed. For that reason, the low-Mach regime of the ions is not treated in the present paper. In the following example, in a low-temperature plasma, we will show how this convergence improves when the Mach of ions is larger.

\begin{figure} [!htb] 
	\centering
	\includegraphics[trim=0cm 0cm 0cm 0cm, clip=true,width=0.32\textwidth]{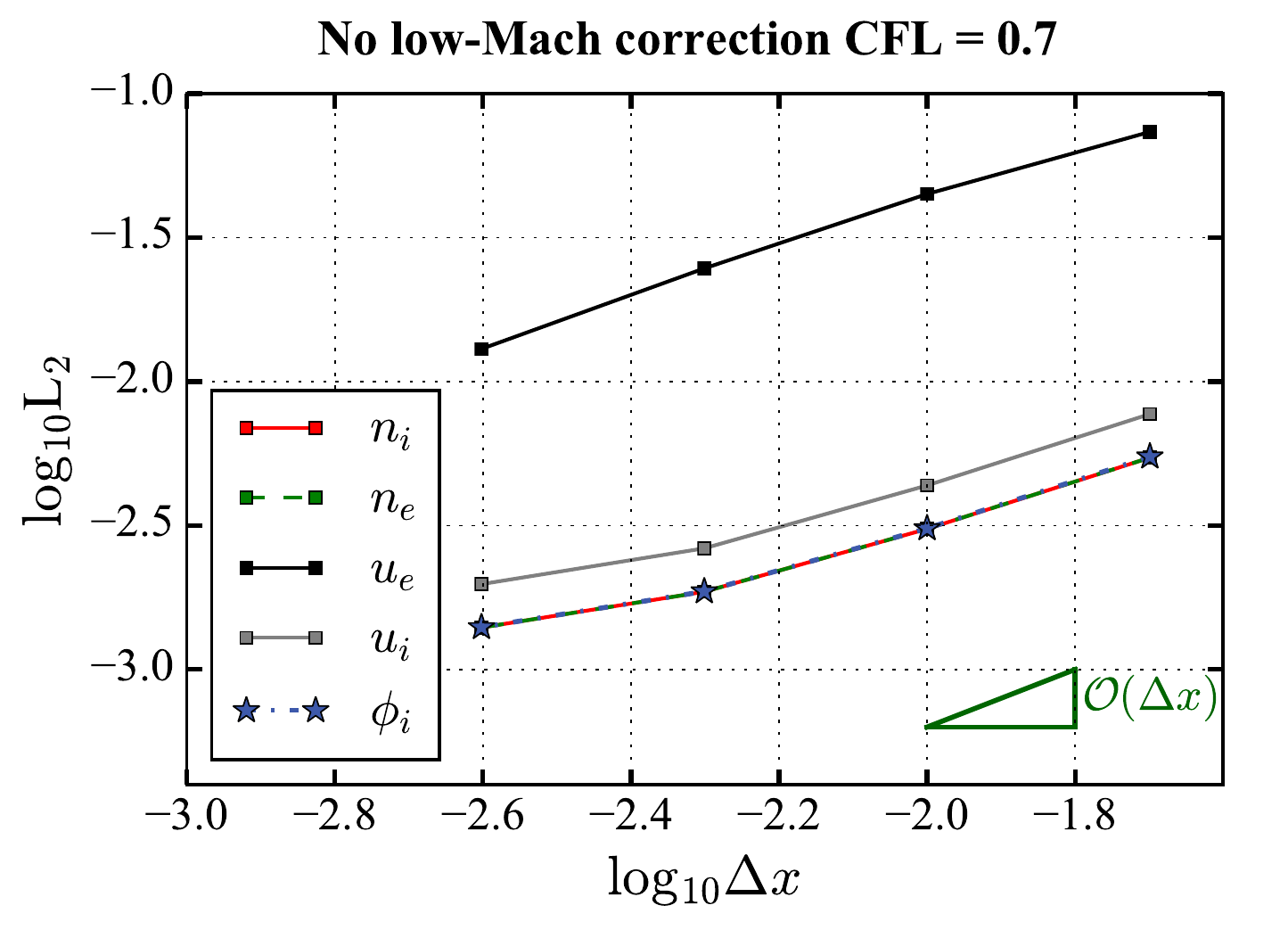} \hfill
	\includegraphics[trim=0cm 0cm 0cm 0cm, clip=true,width=0.32\textwidth]{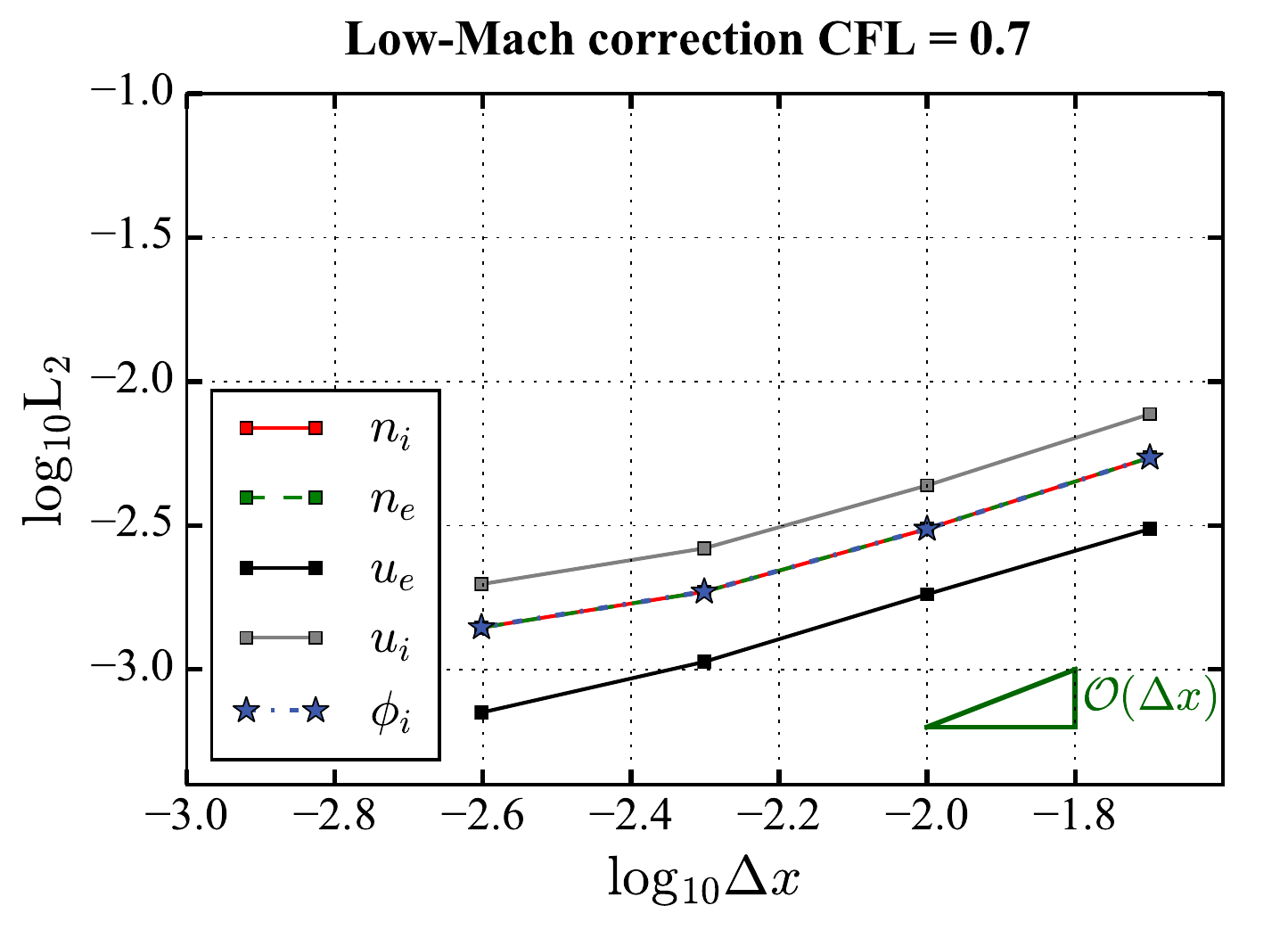} \hfill
	\includegraphics[trim=0cm 0cm 0cm 0cm, clip=true,width=0.32\textwidth]{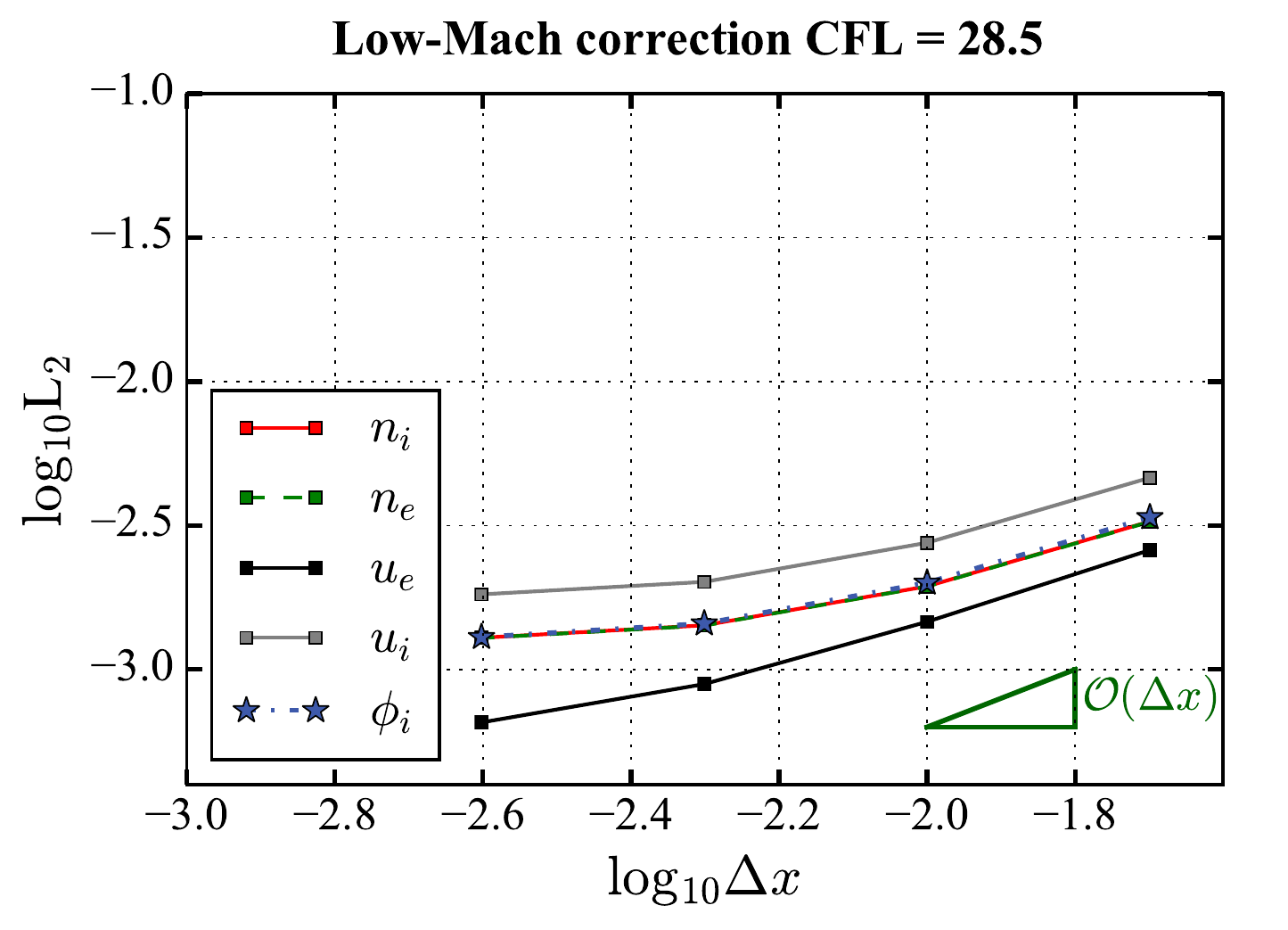} \hfill
	\caption{Comparison of the $L_2$ error norm in the two-stream periodic perturbation in a collisionless thermal plasma. We present the AP scheme without low-Mach correction (left), with low-Mach correction and CFL$^{conv} = 0.7$ (center), and with low-Mach correction and CFL$^{conv} = 28.5$ (right). }
	\label{5_ErrorConvergence}
\end{figure}

\newpage

\subsection{Propagation of a two-stream periodic perturbation in a collisionless low-temperature plasma}
\label{subsec_WB_test}
As mentioned above, in low-temperature plasmas, the bulk velocity of ions is usually much larger than the ion thermal speed. When the ion velocities are comparable or larger than the ion thermal speed, this can produce numerical problems due to the imbalance between the cell-centered source terms and the upwinded fluxes. In order to illustrate the performance of the numerical scheme in a case of low-temperature plasmas, we simulate a two-stream periodic perturbation in a plasma with an ion-to-electron temperature ratio of $\kappaT = 10^{-4}$.

We use the dispersion relation of eq.~\eqref{dispersionRelation} with $\varepsilon = \kappaT = \lambda =10^{-4}$. We choose $n_0 = 1$ and, in this case, $v_0 = 0.7$. Note that $v_0 = 1$ produces a resonance and therefore the perturbations would be too large to study a linear wave propagation. With these conditions, the initial condition reads
\begin{equation}
\Ub(\xb,t=0) =\left(\begin{array}{c}
\rhoeb\\
\ueb\\
\rhoionb\\
\uionb\\
\phib
\end{array}\right)
= 
\left(\begin{array}{c}
1 + 3.3328\times10^{-3}\sin(2\pi \xb)\\
0.7 + 10^{-3}\sin(2\pi \xb)\\
1 + 3.33285\times10^{-3}\sin(2\pi \xb)\\
3.333\times10^{-3}\sin(2\pi \xb)\\
3.3328\times10^{-3}\sin(2\pi \xb)
\end{array}\right).
\end{equation}
The period of the oscillations is 	$T=0.99995$. We run the simulation and compare the results to the initial field after one period.

In Fig.~\ref{6_WBComparison}, we show the comparison between the scheme with the well-balanced source term in eq.~\eqref{eq:WBSource} and a cell-centered implementation of the source. Both simulations use CFL$^{conv}= 25$ and $N = 100$. The approximation of the sign function used follows eq.~\eqref{eq:signFnct} with $u_{\ion_\infty} = v_0$. The results show that the solution without the well-balanced implementation maintain the quasi-neutrality of the solution. However, it develops spurious numerical oscillations. These oscillations are not present with the well-balanced implementation.

\begin{figure} [h] 
	\centering
		  \includegraphics[trim=0cm 0cm 0cm 0cm, clip=true,width=0.495\textwidth]{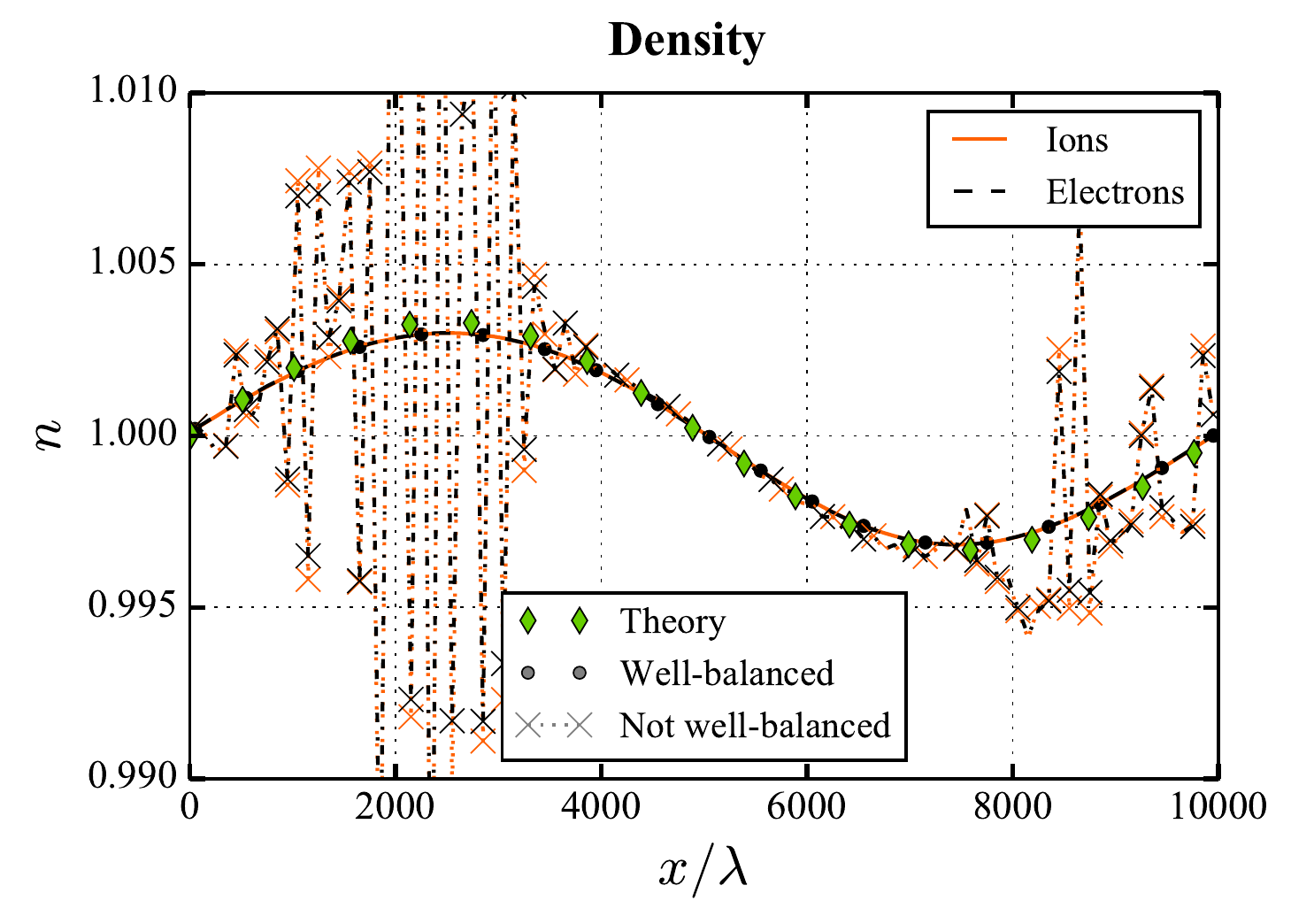} \hfill
		  \includegraphics[trim=0cm 0cm 0cm 0cm, clip=true,width=0.495\textwidth]{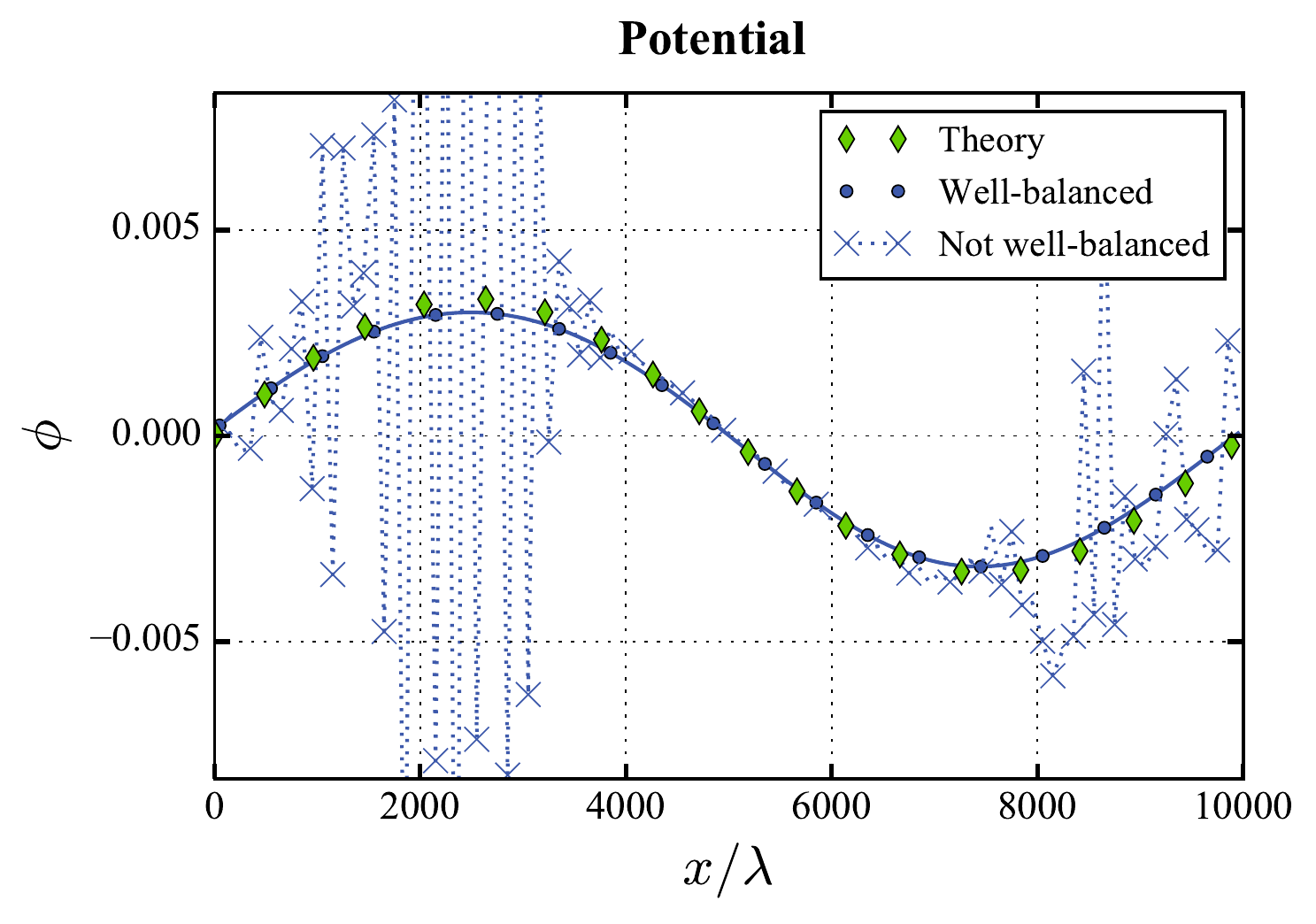}
		  \includegraphics[trim=0cm 0cm 0cm 0cm, clip=true,width=0.495\textwidth]{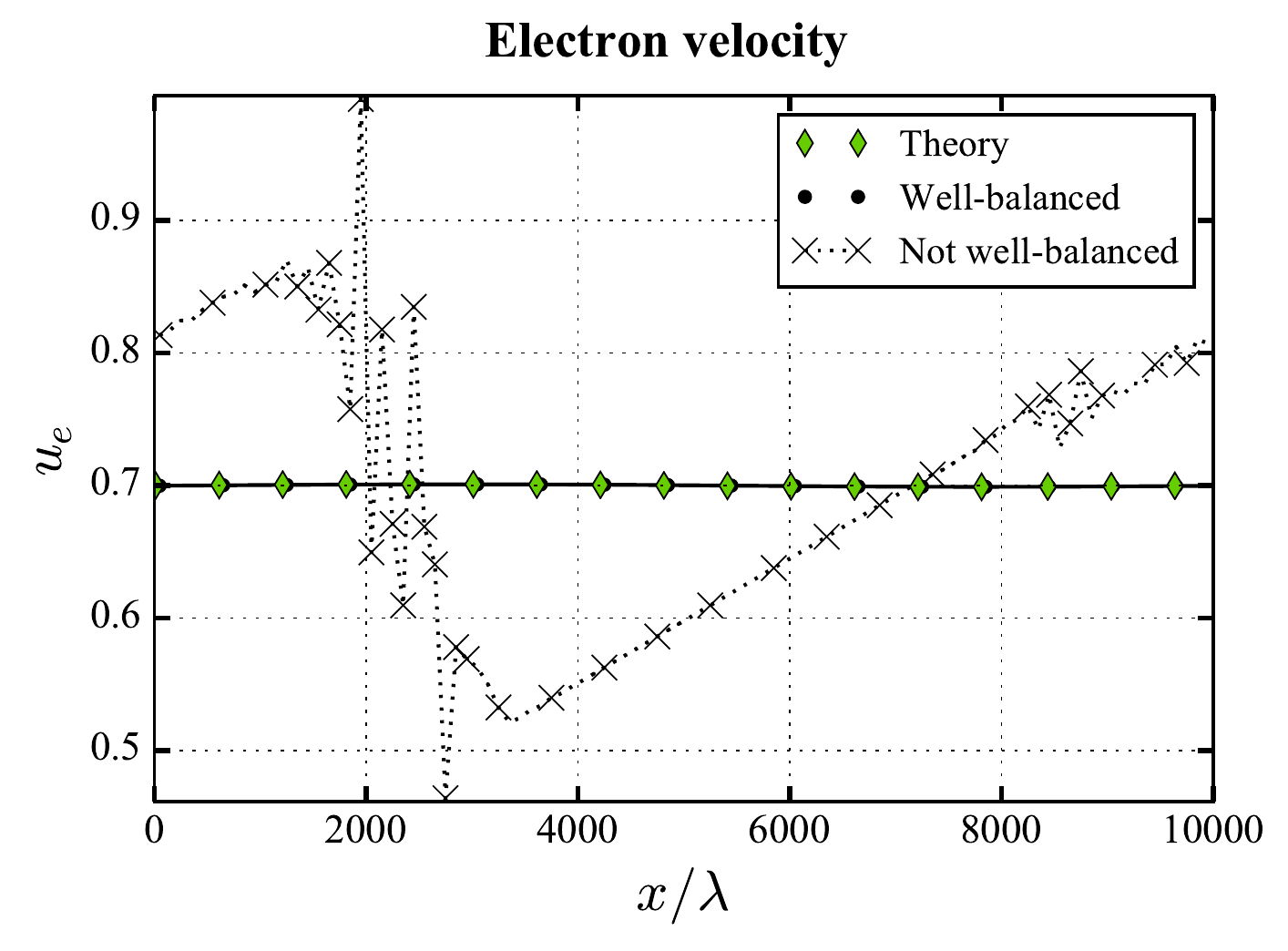} \hfill
		  \includegraphics[trim=0cm 0cm 0cm 0cm, clip=true,width=0.495\textwidth]{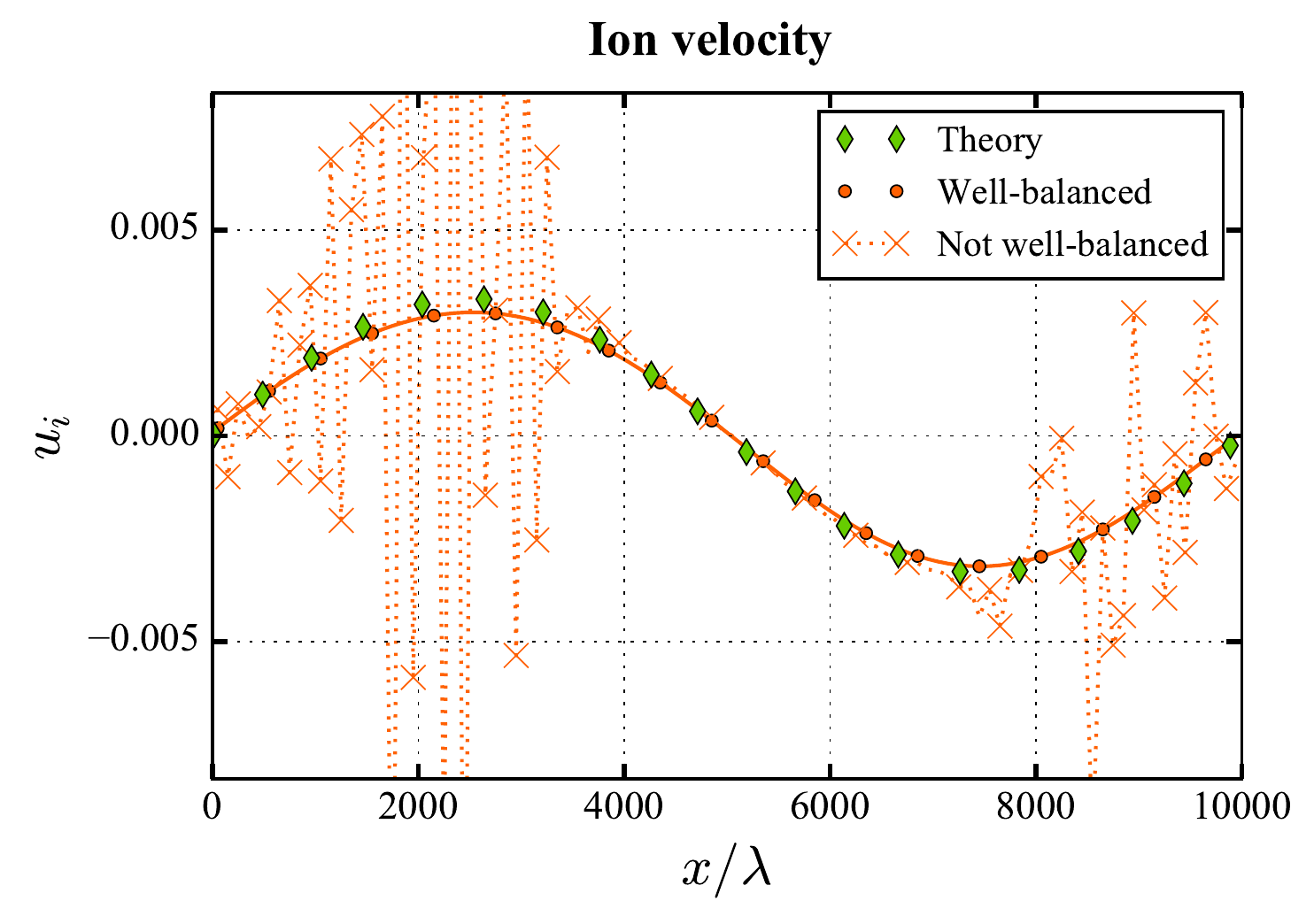}
	\caption{Solution of the two-stream periodic perturbation in a collisionless low-temperature plasma at $t = 0.99995$ using the well-balanced scheme for the ions (solid lines) and with a standard discretization of the Lorentz force (dashed line). The standard discretization develops oscillations that are not present in the well-balanced scheme. The simulations use $N = 100$ points and CFL$^{conv} = 25$. }
	\label{6_WBComparison}
\end{figure}

Under these conditions, the Mach regime of the maximum velocity of the ions is $M_\ion = 0.33$ and the electrons $M_\elec = 0.007$. The convergence for different mesh sizes is presented in Fig.~\ref{7_WB}. In this case, the ions are in a larger Mach regime and therefore the compressible solver is less dissipative than in the thermal plasma case ($\kappa = 1$). As a result, as shown in Fig.~\ref{8_Error}, the error convergence is indeed first-order accuracy in space for this scheme, despite the extremely low Mach regime of the electrons and the limit of $\lambda\rightarrow0$.

\begin{figure} [h] 
	\centering
		  \includegraphics[trim=0cm 0cm 0cm 0cm, clip=true,width=0.495\textwidth]{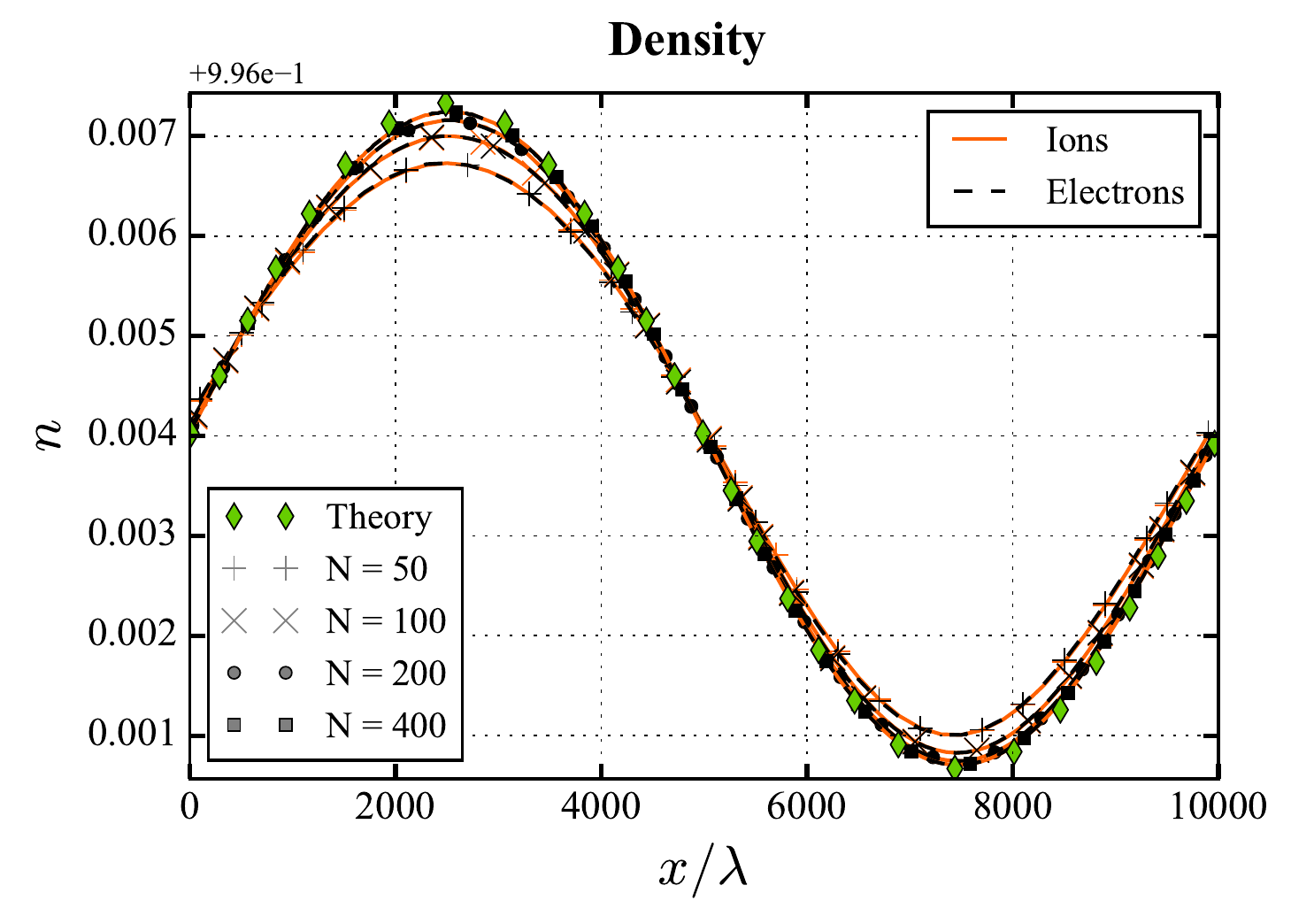} \hfill
		  \includegraphics[trim=0cm 0cm 0cm 0cm, clip=true,width=0.495\textwidth]{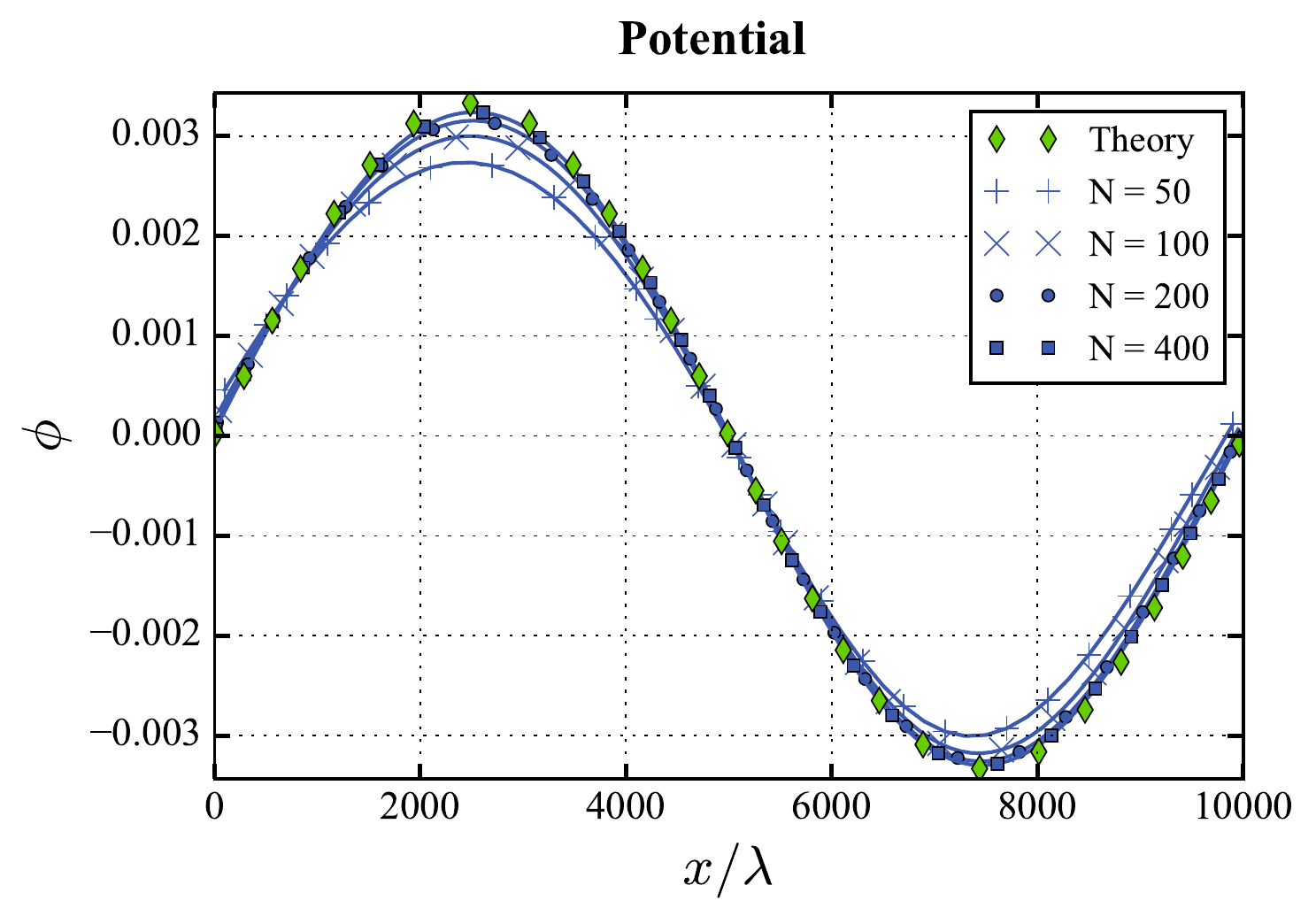}
		  \includegraphics[trim=0cm 0cm 0cm 0cm, clip=true,width=0.495\textwidth]{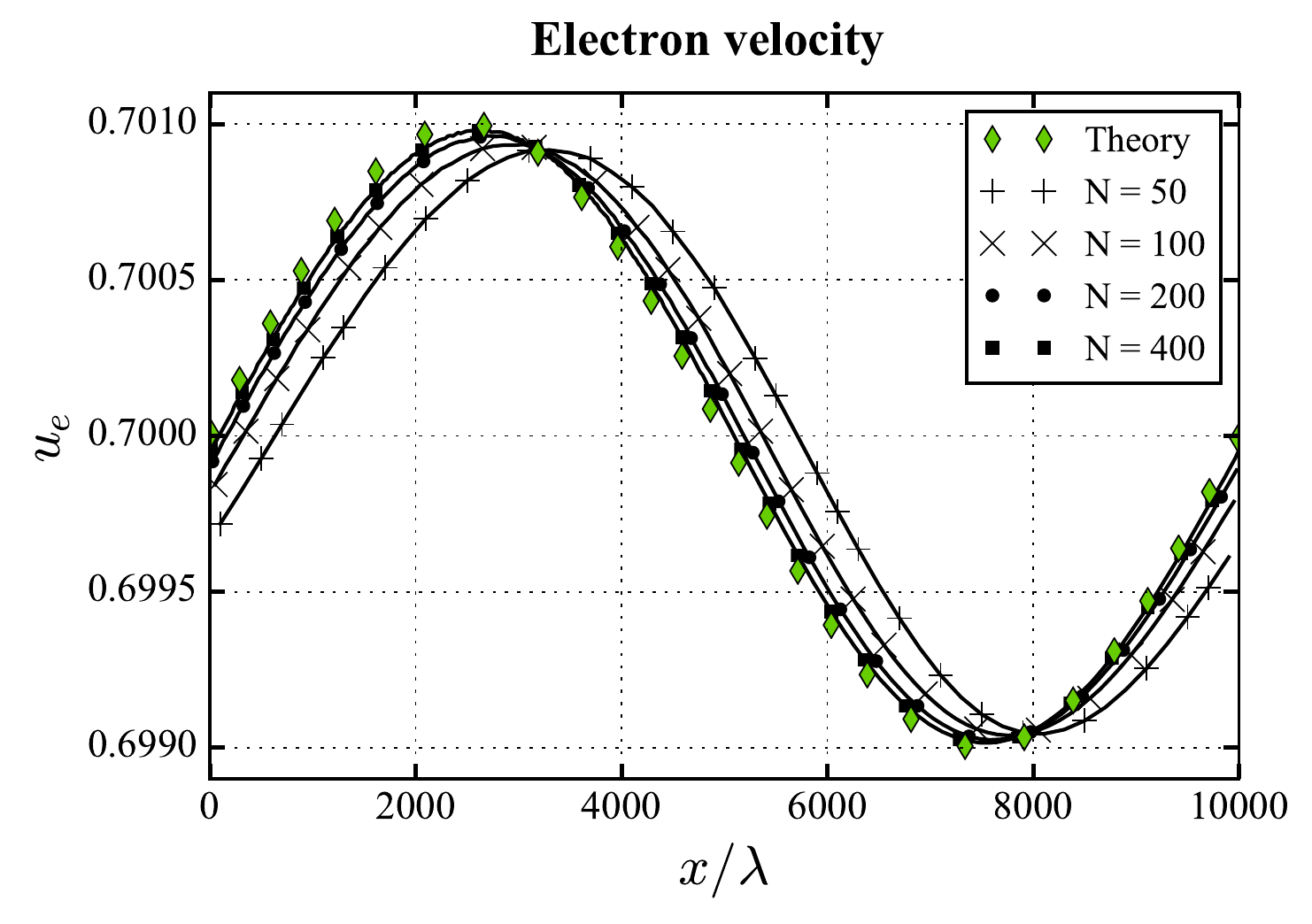}\hfill
		  \includegraphics[trim=0cm 0cm 0cm 0cm, clip=true,width=0.495\textwidth]{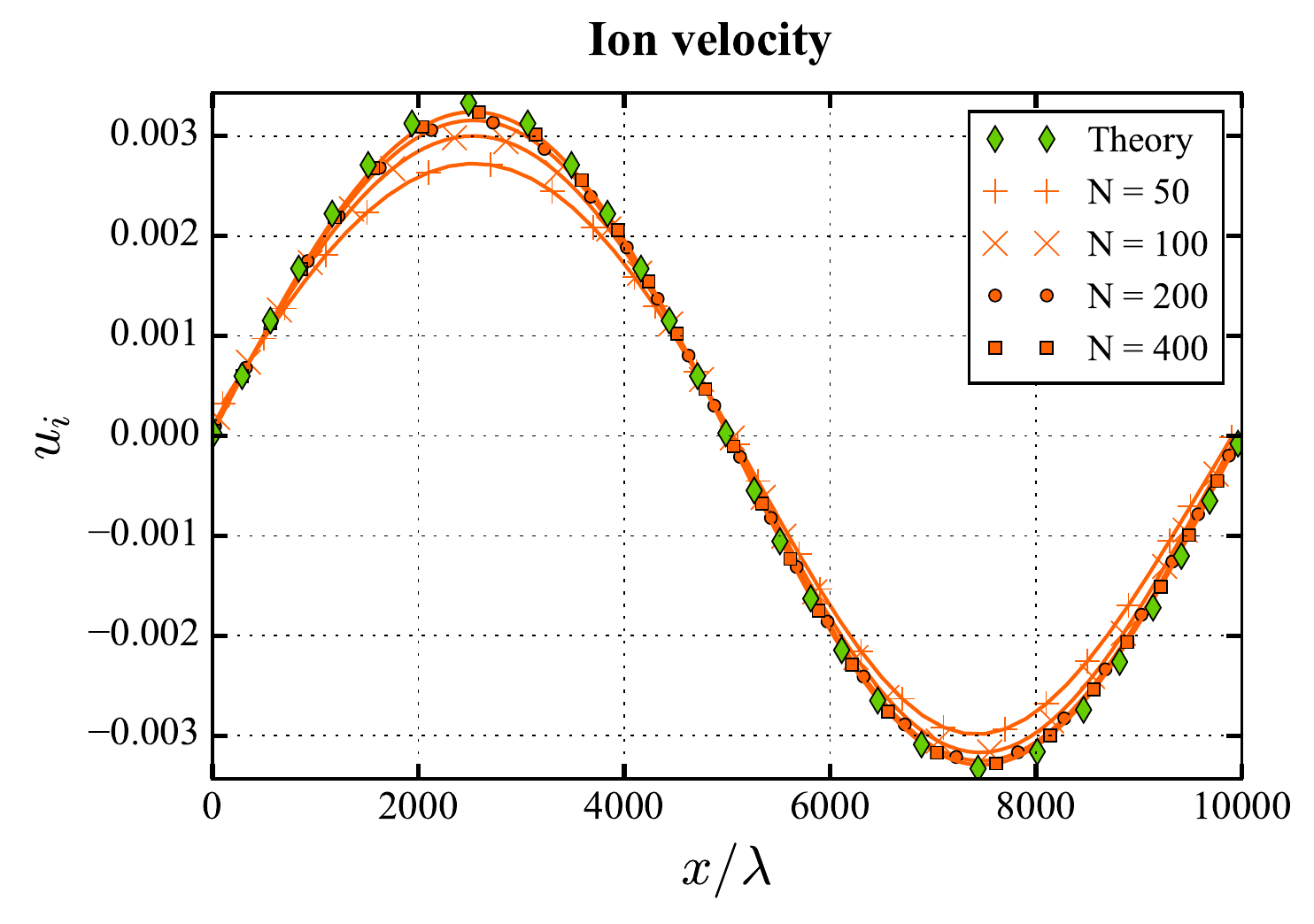} 
	\caption{Solution of two-stream periodic perturbation in a collisionless low-temperature plasma at $t = 0.99995$ for different mesh resolutions with the asymptotic preserving well-balanced scheme. The simulations use CFL$^{conv}=25$. }
	\label{7_WB}
\end{figure}

\begin{figure} [h] 
	\centering
	\includegraphics[trim=0cm 0cm 0cm 0cm, clip=true,width=0.32\textwidth]{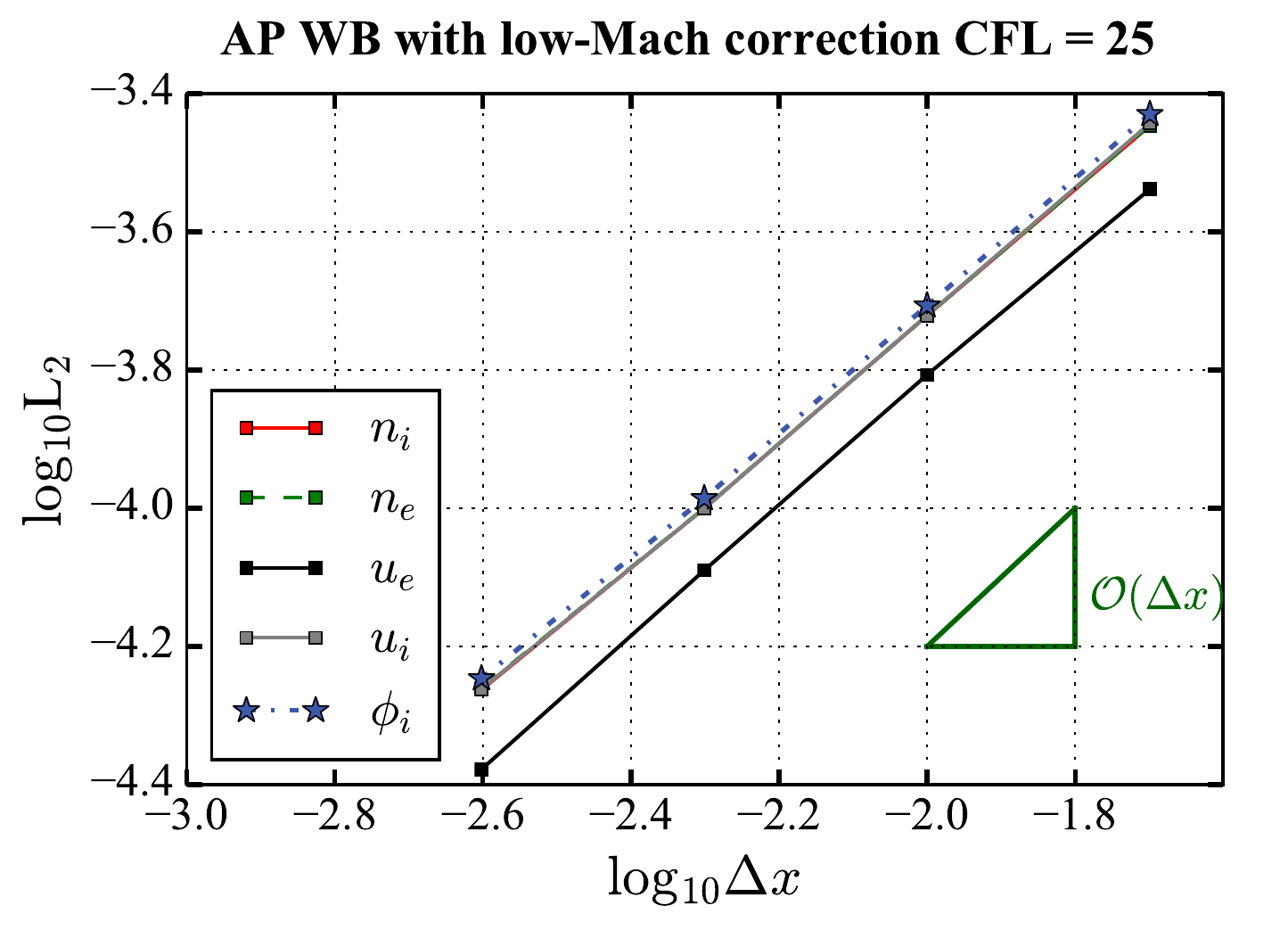} 
	\caption{$L_2$ error norm as function of the mesh size in the simulation of a two-stream periodic perturbation in a collisionless low-temperature plasma. }
	\label{8_Error}
\end{figure}

Finally, we compare the computational performance of the present numerical scheme. We highlight that the Poisson solver with periodic boundaries is solved by adding a Lagrange multiplier to the system that imposes the total charge in the domain to be zero and by solving the linear system with a LU decomposition implemented in the library LAPACK \cite{LAPACK}. We choose to use a linear solver instead of a spectral method as we consider it to be more representative of a realistic multidimentional case with non-periodic boundary conditions. This is due to the fact that most of the iterative methods for solving linear systems do not scale linearly with the number of points and therefore the computational performance will be largely penalized by number of points. 

In Table \ref{tb:Performance}, we present the computational performance  for three different cases: (1) a first-order standard discretization that resolves the Debye length with one cell and the plasma frequency with ten time steps, (2) the same first order standard that a mesh cell that is $\Delta x = 100$ and $\Delta t =0.1 \omega^{-1}_{p\elec}$ (the scheme needs to resolve $\omega^{-1}_{p\elec}$ for stability reasons), and (3) the AP scheme with $\Delta x = 100$ and $\Delta t =2500 \omega^{-1}_{p\elec}$. We highlight that the result of the standard case without resolving the Debye length diverges as the numerical error in the electron velocity becomes very large and the perturbation is not linear after a time-step. 

In Table \ref{tb:Performance}, we can see that the AP scheme produces a dramatic improvement as compared to a standard case when the plasma is quasi-neutral and the characteristic size of the phenomena is very large as compared to the Debye length. As the AP scheme does not need to resolve the Debye length to preserve the quasi-neutrality limit, the AP discretization needs $100$ times less mesh points and $25000$ times less time steps than a scheme that resolves the scales related to the Debye length. Consequently, the CPU time taken for simulating a period of the wave is of the order of $10^9$ times faster than a first-order standard scheme with similar accuracy.

\begin{table}[!htb] 
\caption{Solution of the two-stream periodic perturbation in a collisionless low-temperature plasma: Comparison of the computational performance with the standard and the AP Lagrange-projection scheme.}\label{tb:Performance} 
\begin{center}
\resizebox{\columnwidth}{!}{
\begin{tabular}{ l || c | c | c | c }
Numerical scheme and resolution & Cells & Iters.~for one period & CPU time per iter.~[s] &CPU time for one period [s]\\
  \hline\\[-0.34cm]
  \hline\\[-0.37cm]
  Standard $\Delta x = \lambda$, $\Delta t =0.1 \omega^{-1}_{p\elec}$ & $10000$ & $10^7$ & $41.7175369$ & $4.2\times10^8$ \\
  Standard $\Delta x = 100 \lambda$, $\Delta t =0.1 \omega^{-1}_{p\elec}$& $100$ & $10^7$ & $1.1\times10^{-3}$ & $1.1\times10^4$\\
  AP $\Delta x = 100 \lambda$, $\Delta t =2500 \omega^{-1}_{p\elec}$& $100$ & $400$ & $1.1\times10^{-3}$ & $0.44$
\end{tabular}
}
\end{center}
\end{table}

\subsection{Collisionless isothermal sheath}

The previous numerical experiments simulate a quasi-neutral plasma. Nevertheless, as discussed in Section \ref{sec:problemdefinition}, the set of equations allows for capturing charge separation effects, as shown in Fig.~\ref{Fig:asymLimits}. For length scales much greater than the Debye length, the plasma behaves in general as quasi-neutral. However, when a surface is in contact with the plasma, a charged boundary layer called the plasma sheath is formed. 

We consider a plasma in a 1D domain of unitary size with an electron-to-ion mass ratio $\varepsilon = 10^{-5}$, ion-to-electron temperature ratio $\kappa=0.025$, and $\lambdaSq = 10^{-4}$. The electron flux is imposed on both boundaries by the number of particles crossing the plane with positive velocity component
\begin{equation}
  \rhoeb\ueb(\bar{x}=0,t) = - \frac{\rhoeb}{\sqrt{2\pi\varepsilon}}~~~\text{and}~~~\rhoeb\ueb(\bar{x}=1,t)  = \frac{\rhoeb}{\sqrt{2\pi\varepsilon}}.\label{fluxElectrons}
\end{equation}
The electron and ion density and the ion flux have a Neumann boundary on both sides. Alternatively, the electric potential has a Dirichlet condition on both sides $\phib(\bar{x}=0,t) = \phib(\bar{x}=1,t) = 0$. In the results, the potential is referenced to the potential in the plasma at $\bar{x} = 1/2$.

In the sheath theory \cite{Riemann05}, the ionization is an eigenvalue of the problem. This means that there is only one ionization rate that can produce a steady solution. In our simulation set-up, we find this eigenvalue by changing the ionization frequency such the balance between ionization and ion flux occurs at every time step. Consequently, the ionization frequency at $t^n$ is calculated as
\begin{equation}
  \Damk(t^n) = \frac{|\rhoionb\uib(x=0,t^n)| + |\rhoionb\uib(x=l,t^n)|}{\int_0^l n_e(t^n)\ dx}.
\end{equation}

The flow field is initialized as follows
\begin{equation}
\Ub(\xb,t=0) =(\rhoeb,  ~\rhoeb\ueb, ~\rhoionb,~\rhoionb\uionb,~\phib)^T = (1,  ~0, ~1,~0,~0)^T .
\end{equation}

As the assumptions of the classical theory are not taken by our numerical model, we might expect a result that is slightly different from eq.~\eqref{eq:SheathTheory}. For this reason, we will use as a reference solution a high-order highly resolved simulation described in Section \ref{sec:problemdefinition}. In Fig.~\ref{9_ReferenceSolution}, we show the steady solution reference solution that uses a standard discretization, as in Section \ref{sec:standardDiscretization}, with TVD third-order reconstruction and third-order TVD Runge-Kutta \cite{Gottlieb98} scheme. In the reference solution, the domain is resolved with $N = 10^4$ mesh points and CFL=$0.9$. By doing this, we ensure that the Debye length is resolved with $100$ points and so is the electron plasma period. Since the scheme is high order and highly resolved, the discretization errors are expected to be much smaller than a first order scheme with lower resolution. 

The solution shows two plasma sheaths beside the left and the right boundaries of around $6$ -- $7$ Debye lengths width. As shown in the densities, the plasma is quasi-neutral in the rest of the domain, whereas it is positively charged in the sheath. The plasma potential at the wall is very similar to the theory, despite the electron inertial and the ion pressure gradient effects that are neglected in the theory. The flux of ions and electrons are equal in the whole domain. The velocities are also equal in the quasi-neutral region and they differ when the ions reach $|\uionb| = 1$, i.e., the Bohm's velocity in our units. This point is the plasma-sheath transition that agrees with the Bohm's criterion \cite{Bohm49}.

\begin{figure} [H] 
	\centering
		  \includegraphics[trim=0cm 0cm 0cm 0cm, clip=true,width=0.495\textwidth]{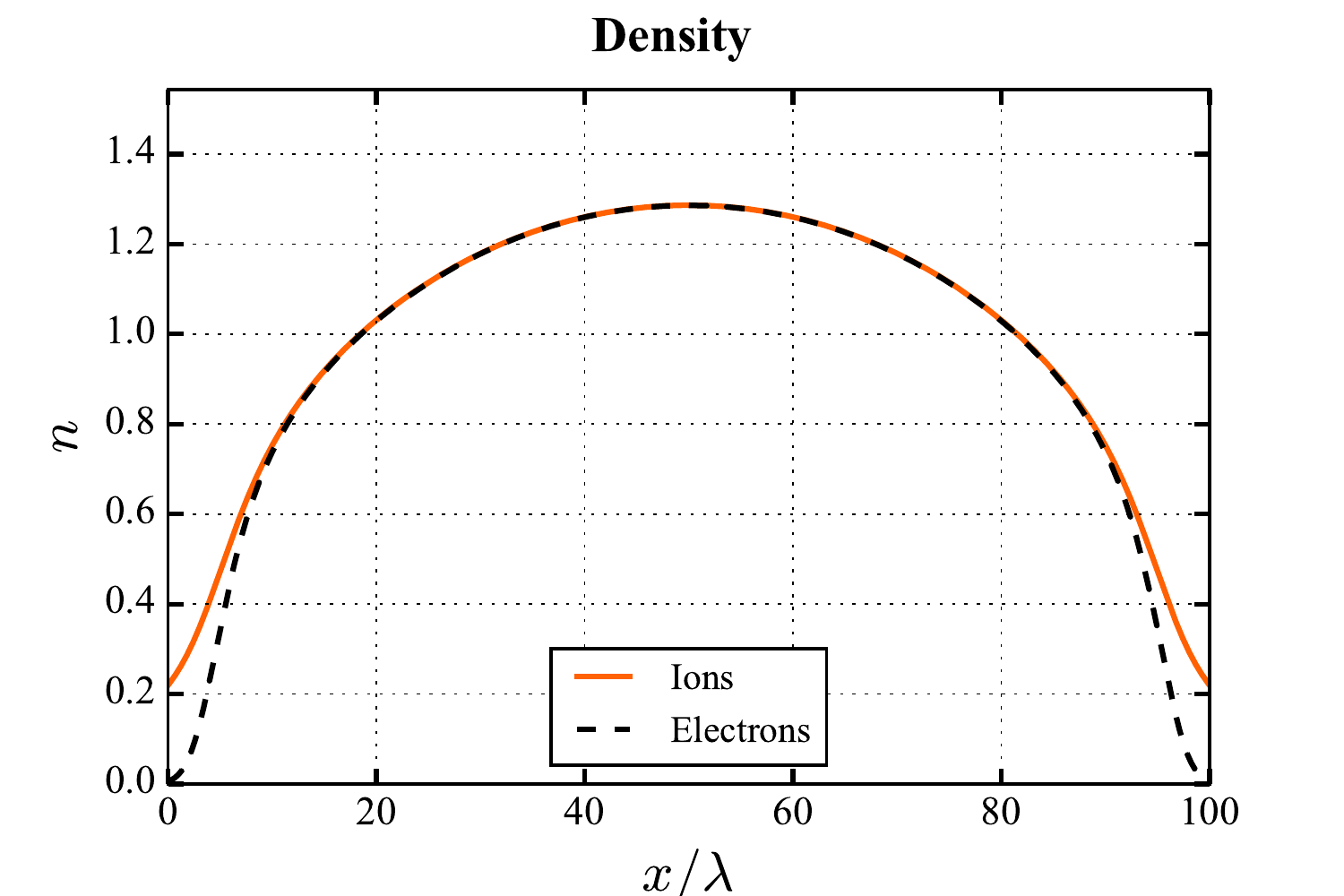} \hfill
		  \includegraphics[trim=0cm 0cm 0cm 0cm, clip=true,width=0.495\textwidth]{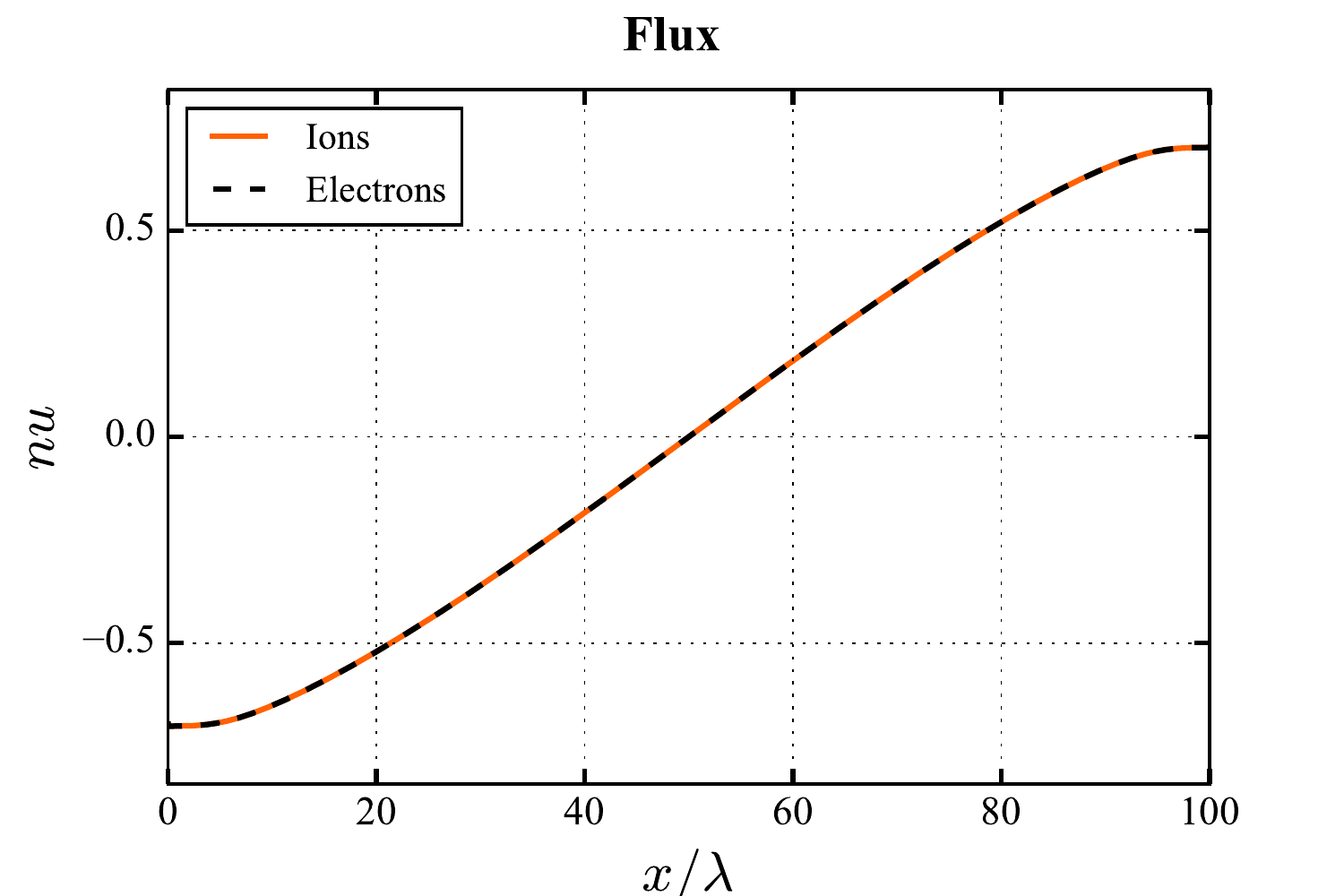}
		  \includegraphics[trim=0cm 0cm 0cm 0cm, clip=true,width=0.495\textwidth]{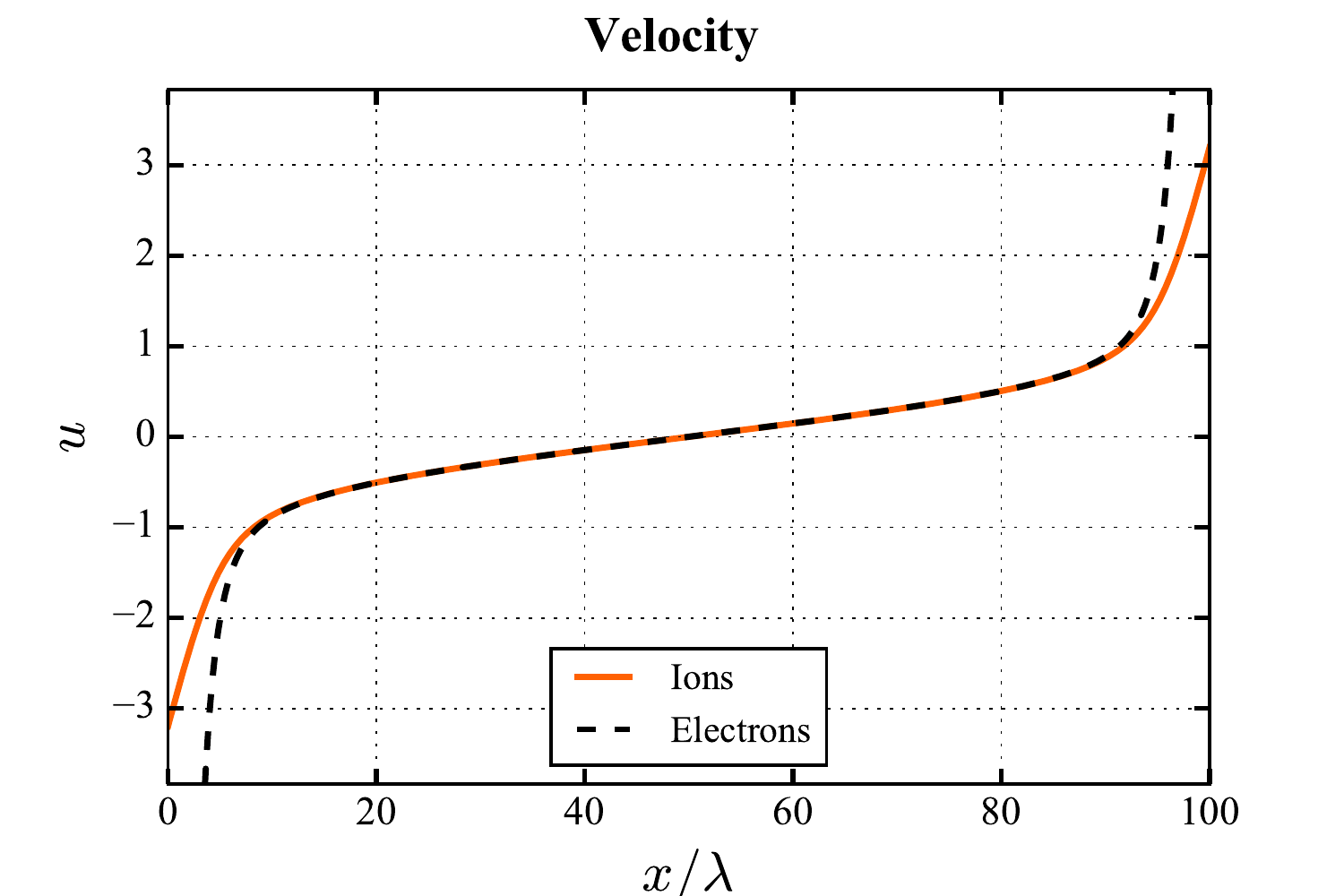} \hfill
		  \includegraphics[trim=0cm 0cm 0cm 0cm, clip=true,width=0.495\textwidth]{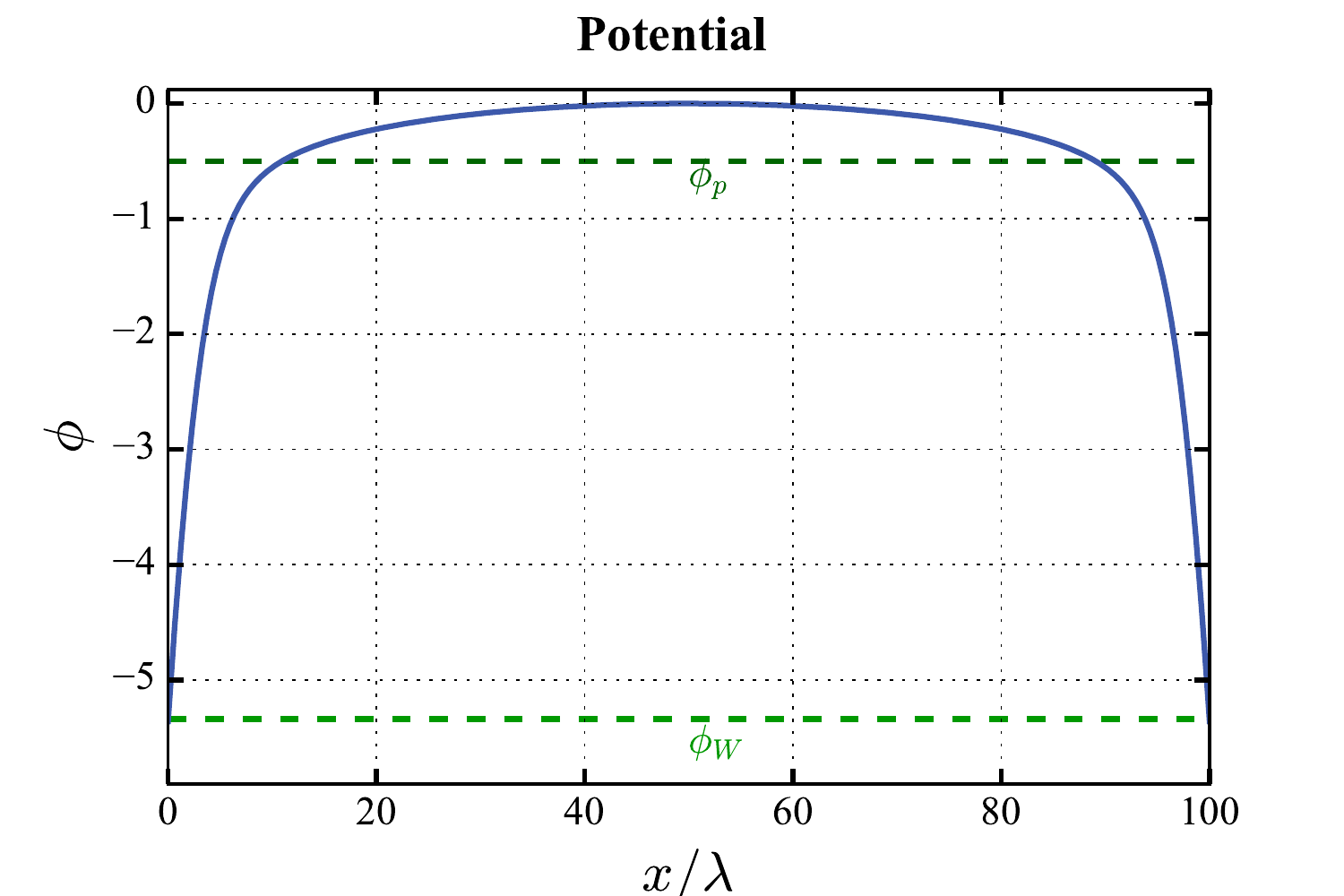}
	\caption{Collisionless isothermal sheath: Solution of a plasma between two floating walls with a third order TVD scheme using $N=10^4$ mesh points. The proposed numerical set-up is able to captures the physics as predicted by the theory \cite{Riemann05}.}
	\label{9_ReferenceSolution}
\end{figure}


In Fig.~\ref{10_FirstOrderHLL}, we present the converged results using a first-order standard discretization with CFL $= 0.1$ and $N=10^3$. The local error is calculated with the reference solution. The main difference with the reference solution can be clearly observed in the electron flux. The error of the electron flux is as large as the actual solution. This is due to the poor behaviour of the HLL scheme under low-Mach conditions, as explained in the previous examples.

\begin{figure} [H] 
	\centering
		  \includegraphics[trim=0cm 0cm 0cm 0cm, clip=true,width=0.355\textwidth]{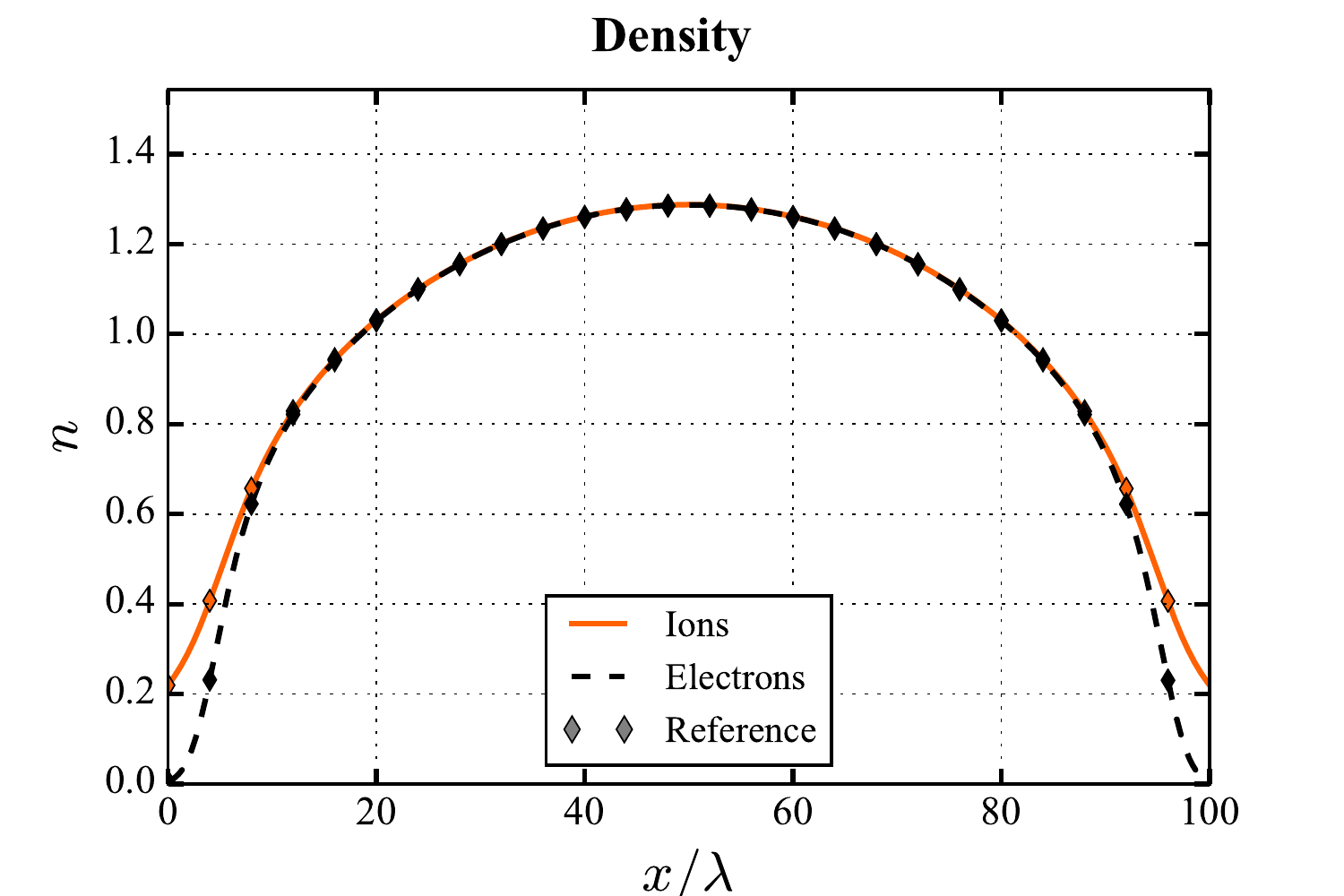} 
		  \includegraphics[trim=0cm 0cm 0cm 0cm, clip=true,width=0.355\textwidth]{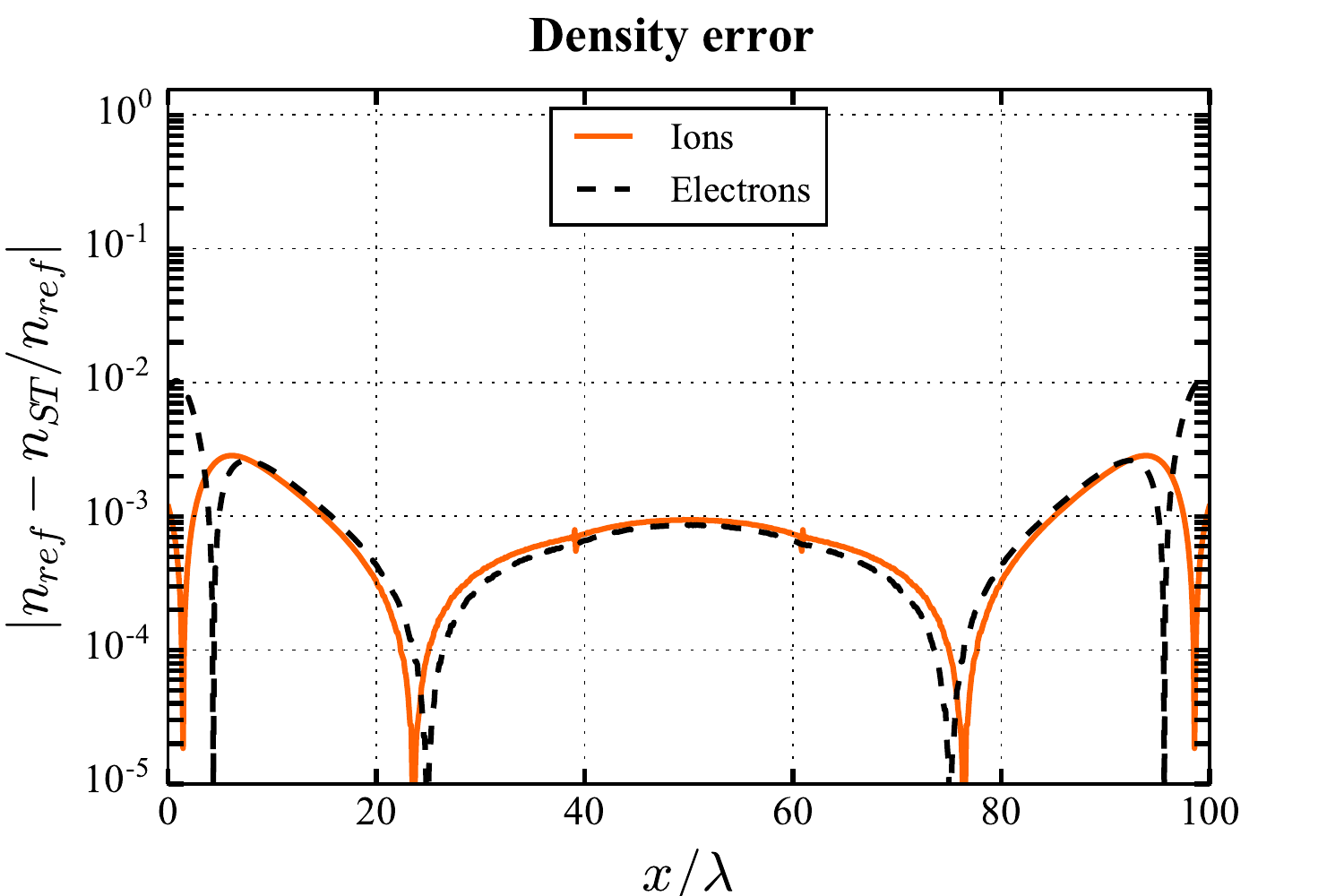}
		  \includegraphics[trim=0cm 0cm 0cm 0cm, clip=true,width=0.355\textwidth]{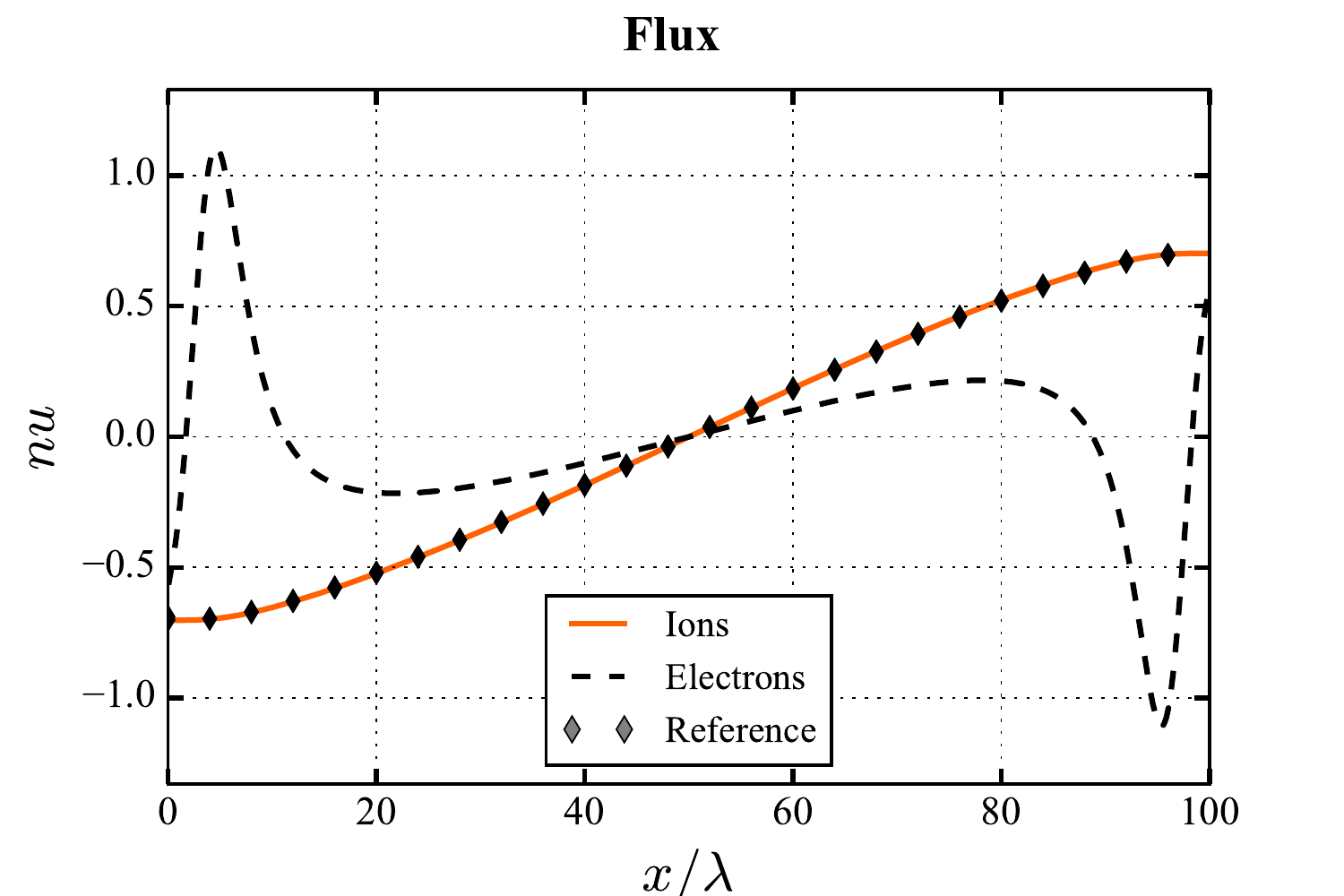} 
		  \includegraphics[trim=0cm 0cm 0cm 0cm, clip=true,width=0.355\textwidth]{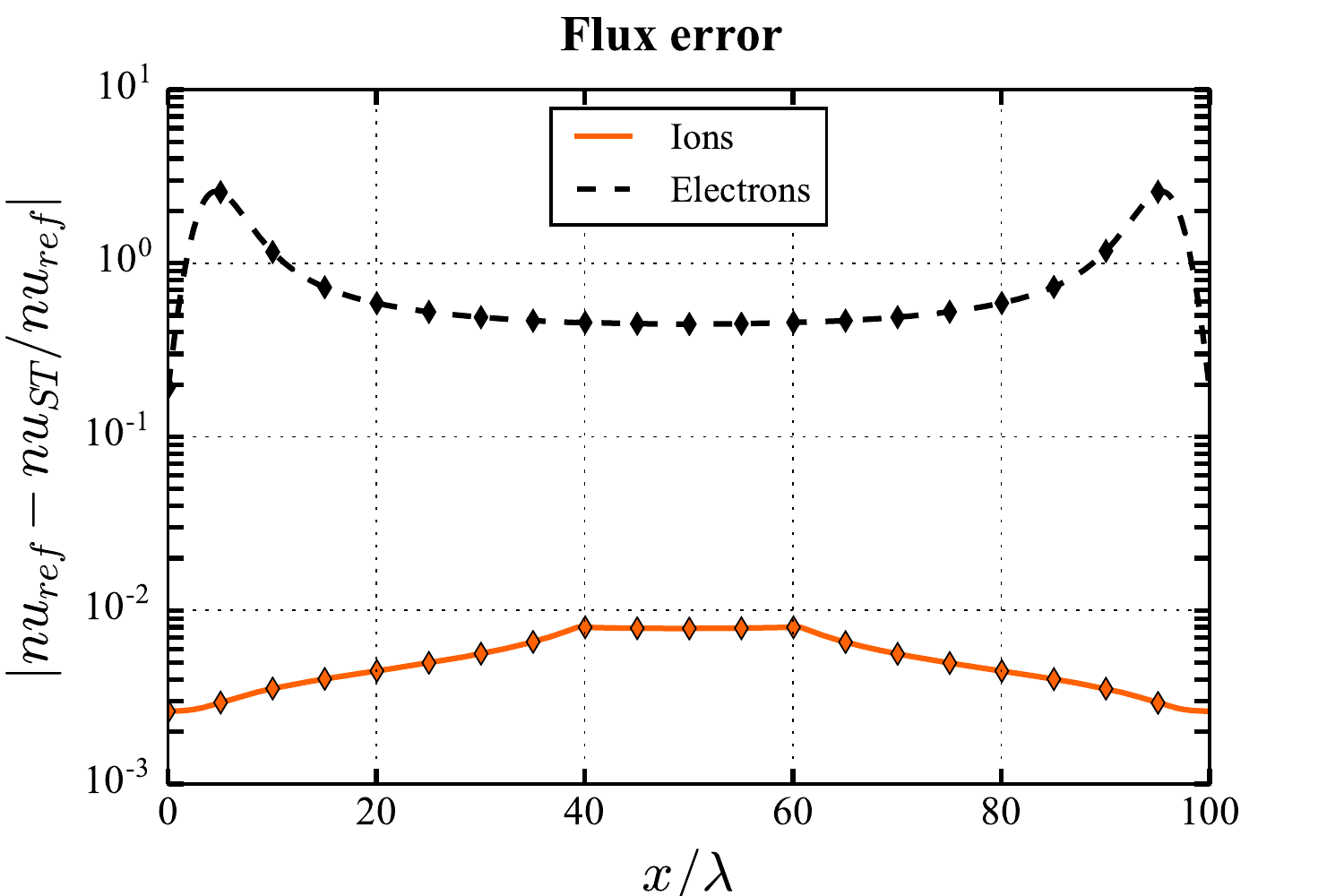}
		  \includegraphics[trim=0cm 0cm 0cm 0cm, clip=true,width=0.355\textwidth]{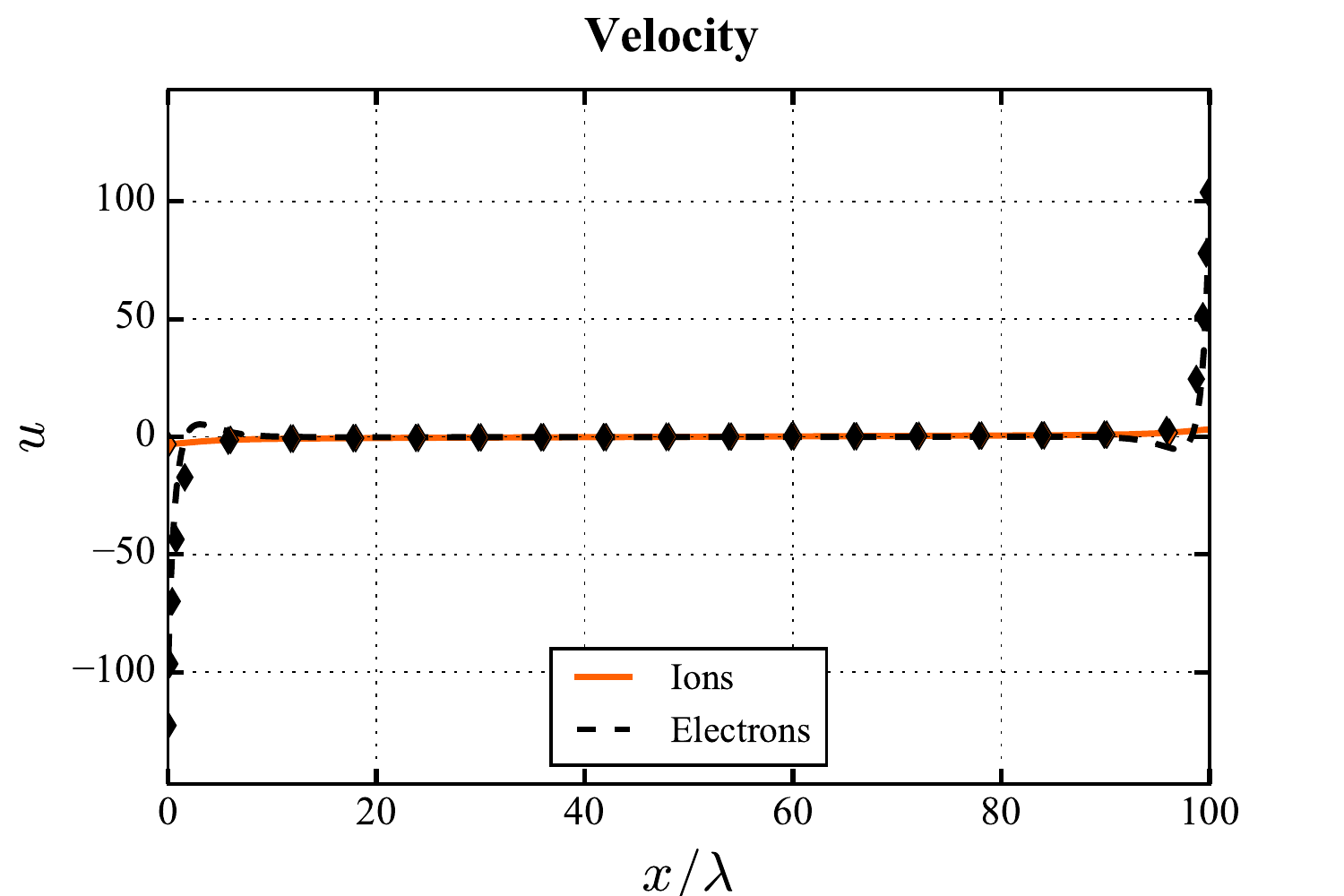} 
		  \includegraphics[trim=0cm 0cm 0cm 0cm, clip=true,width=0.355\textwidth]{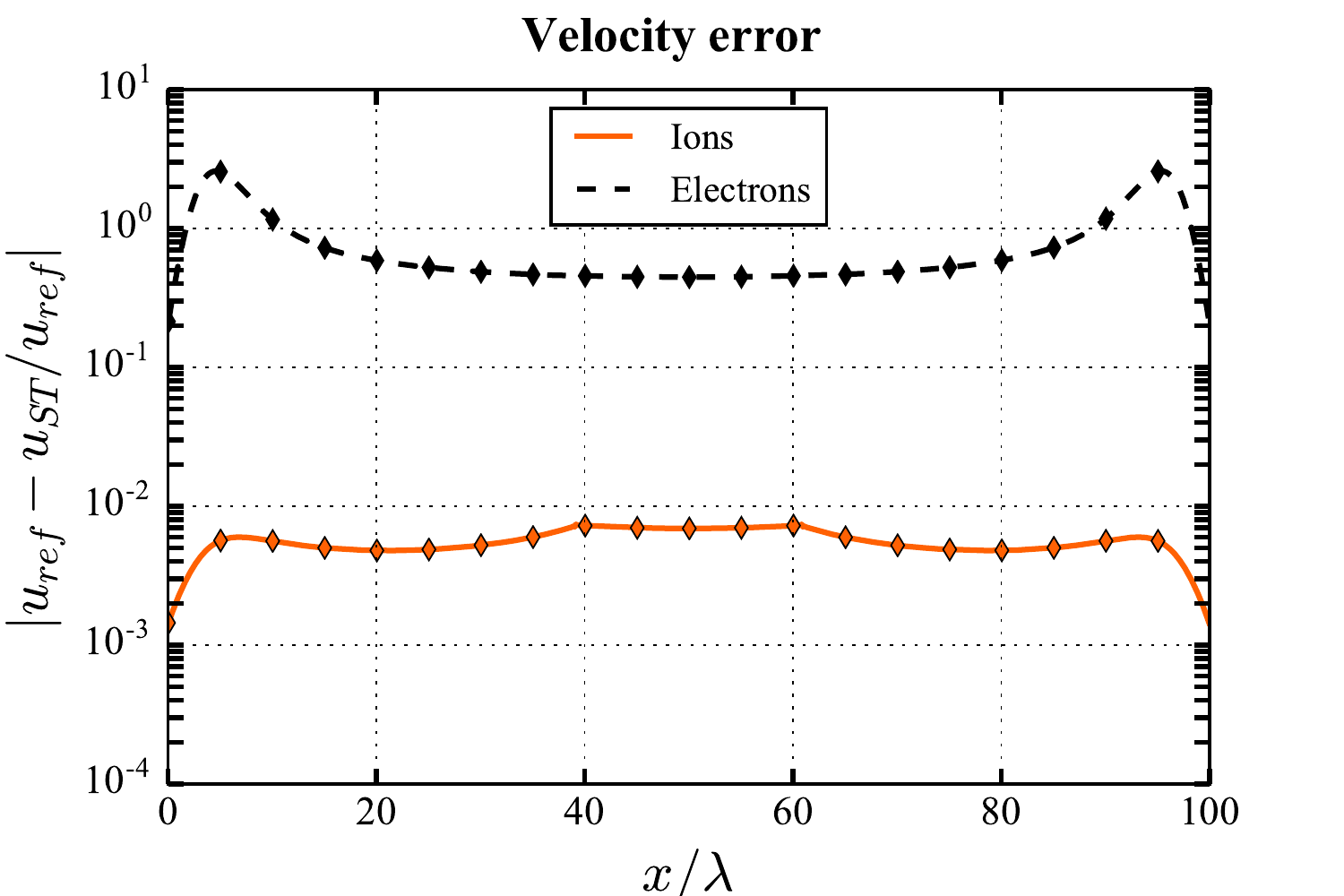}
		  \includegraphics[trim=0cm 0cm 0cm 0cm, clip=true,width=0.355\textwidth]{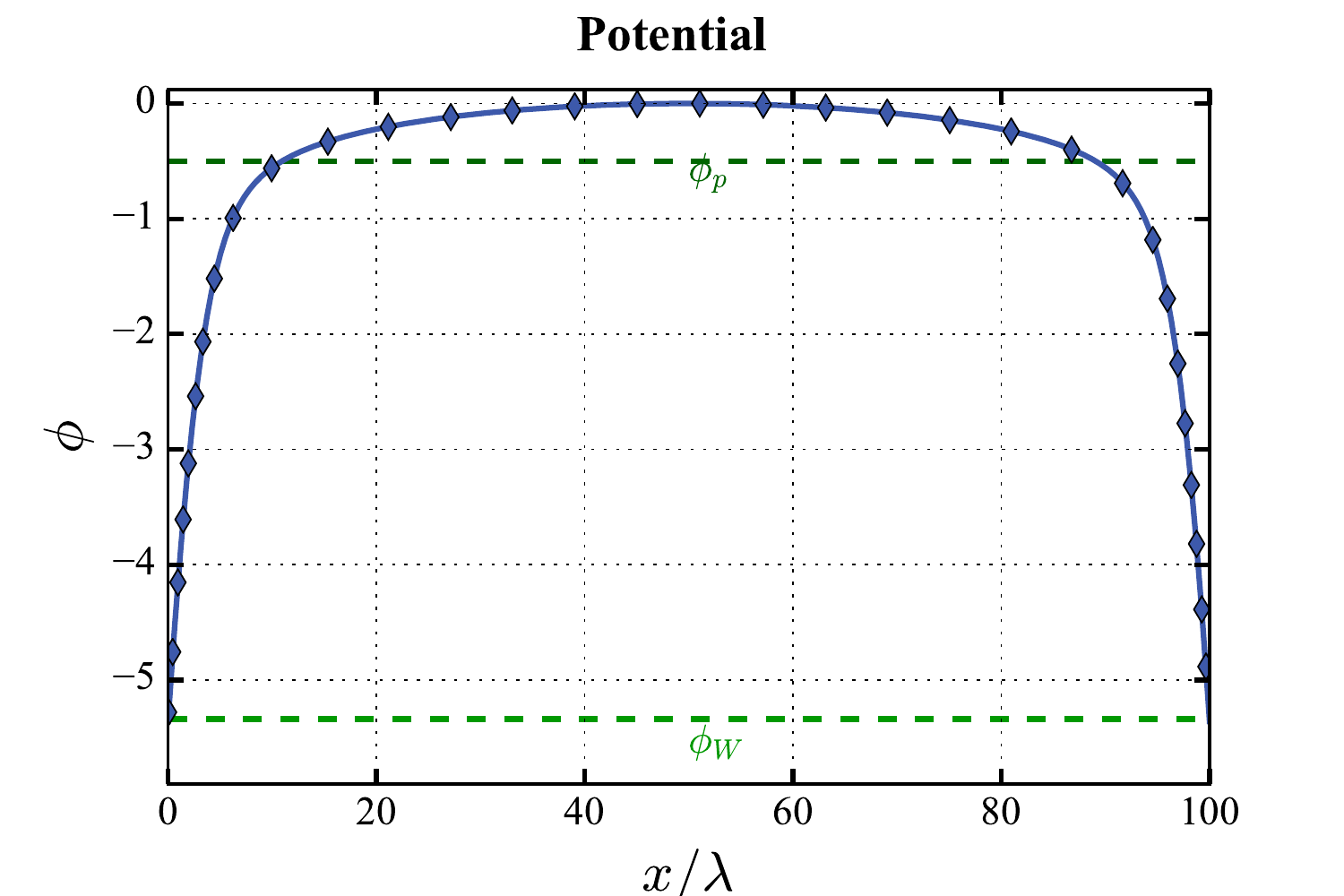} 
		  \includegraphics[trim=0cm 0cm 0cm 0cm, clip=true,width=0.355\textwidth]{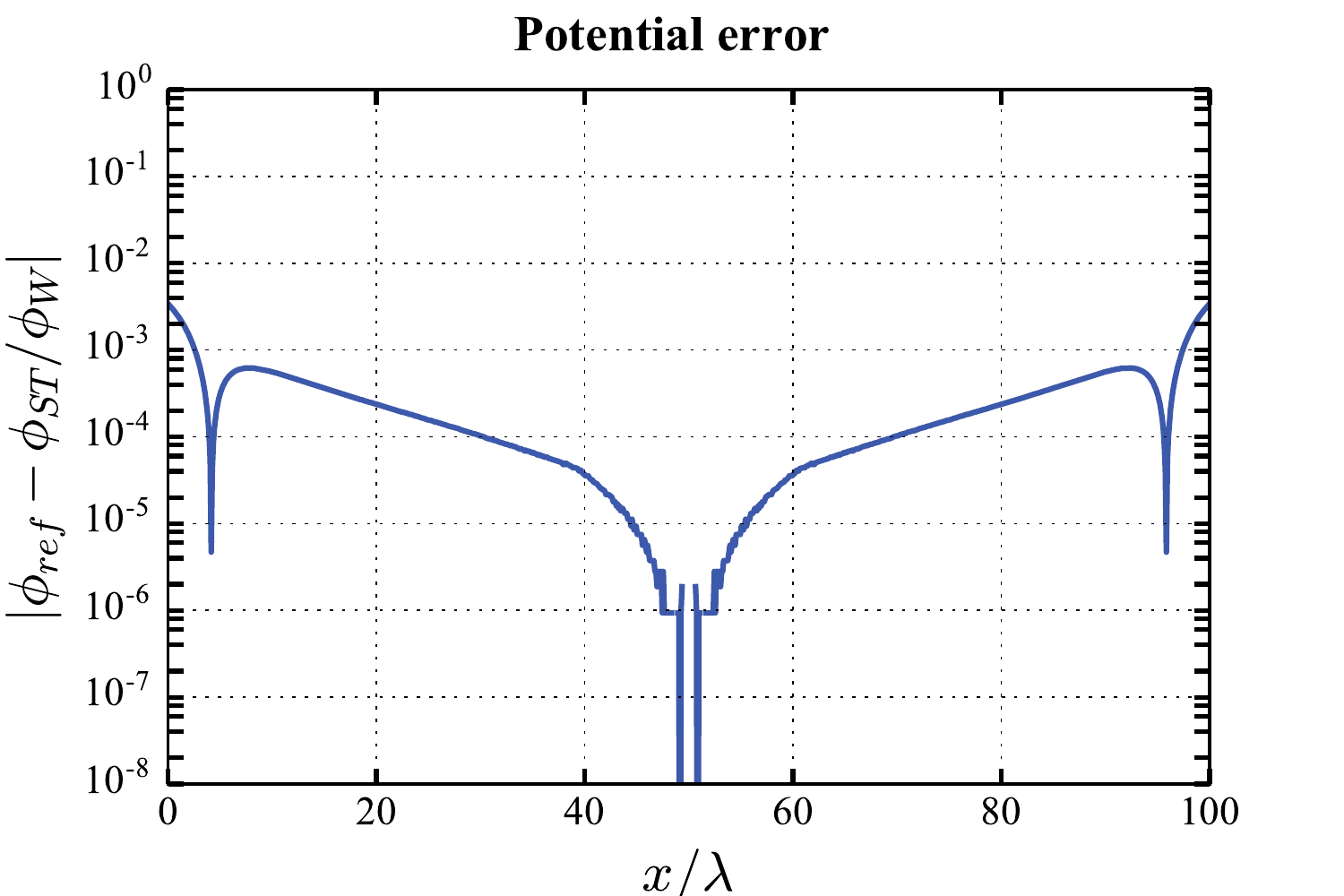}
	\caption{Collisionless isothermal sheath: Converged first-order solution with a HLL scheme and $\text{CFL} =0.1$ and $10^3$ mesh points. The simulation shows a large error in the electron flux due to the large numerical error at low-Mach speeds.}
	\label{10_FirstOrderHLL}
\end{figure}

The results using the AP Lagrange-projection scheme with CFL $^{conv}= 0.4$ and $N=10^3$ are presented in Fig.~\ref{11_APscheme}. We note that, as we want to resolve the plasma sheath, in this case we resolve the Debye length in our discretization. The results show that the plasma sheaths are properly captured with the scheme and the simulation agrees with the reference solution. Additionally, the electron flux is correctly captured, reducing by two orders of magnitude the numerical error. The rest of magnitudes have similar values to the reference solution, with a first-order discretization and ten times lower resolution than the reference solution.

\begin{figure} [H] 
	\centering
		  \includegraphics[trim=0cm 0cm 0cm 0cm, clip=true,width=0.355\textwidth]{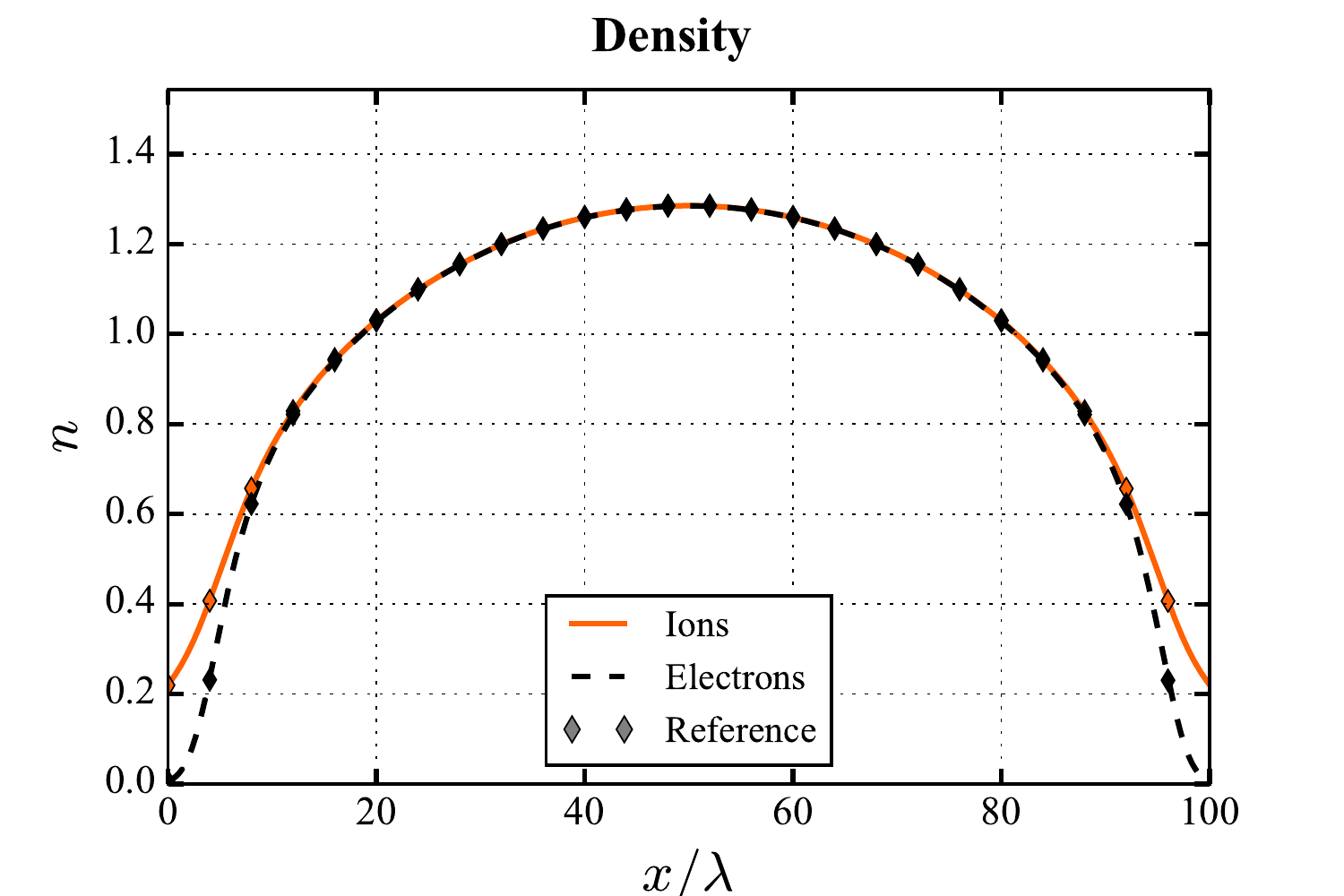} 
		  \includegraphics[trim=0cm 0cm 0cm 0cm, clip=true,width=0.355\textwidth]{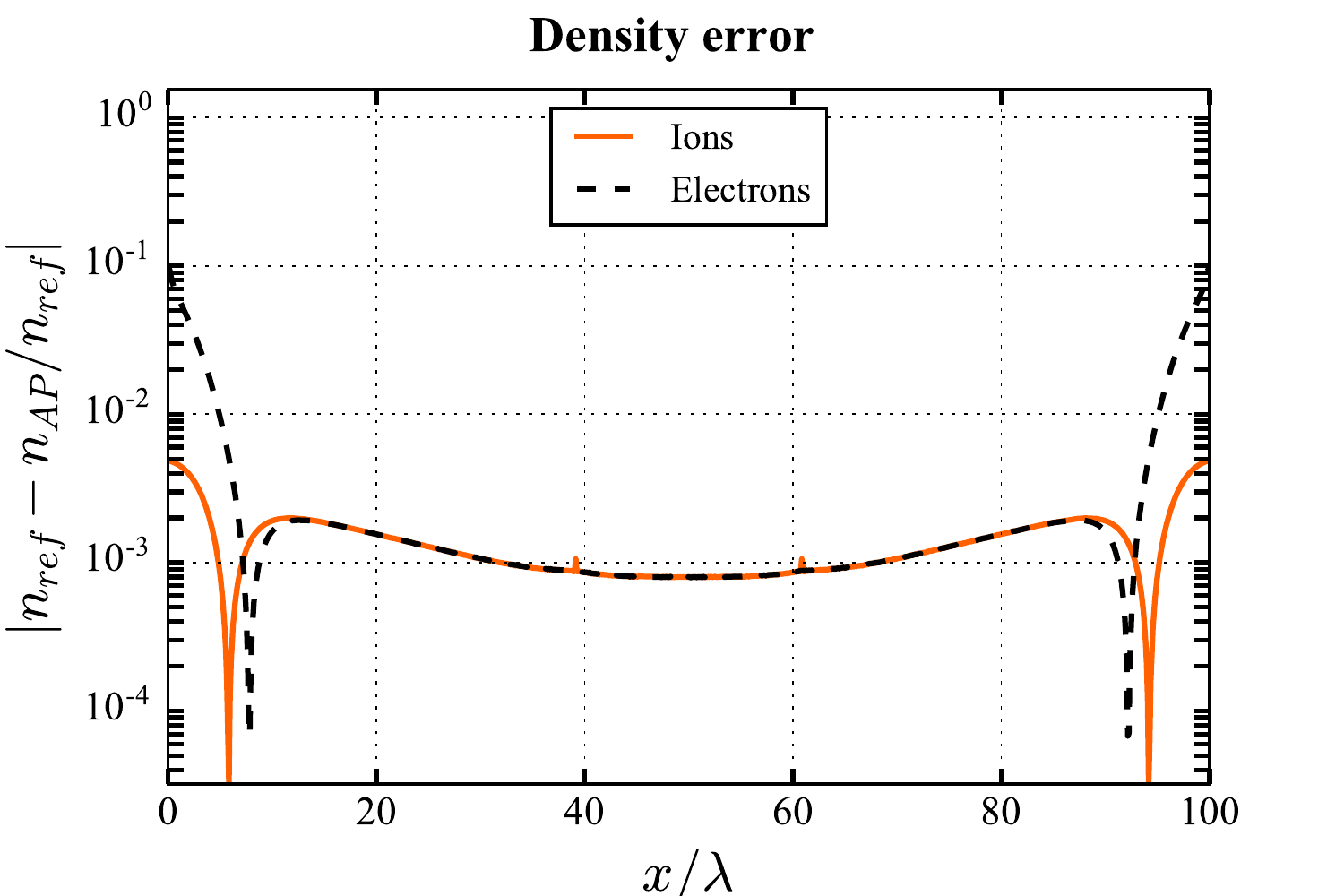}
		  \includegraphics[trim=0cm 0cm 0cm 0cm, clip=true,width=0.355\textwidth]{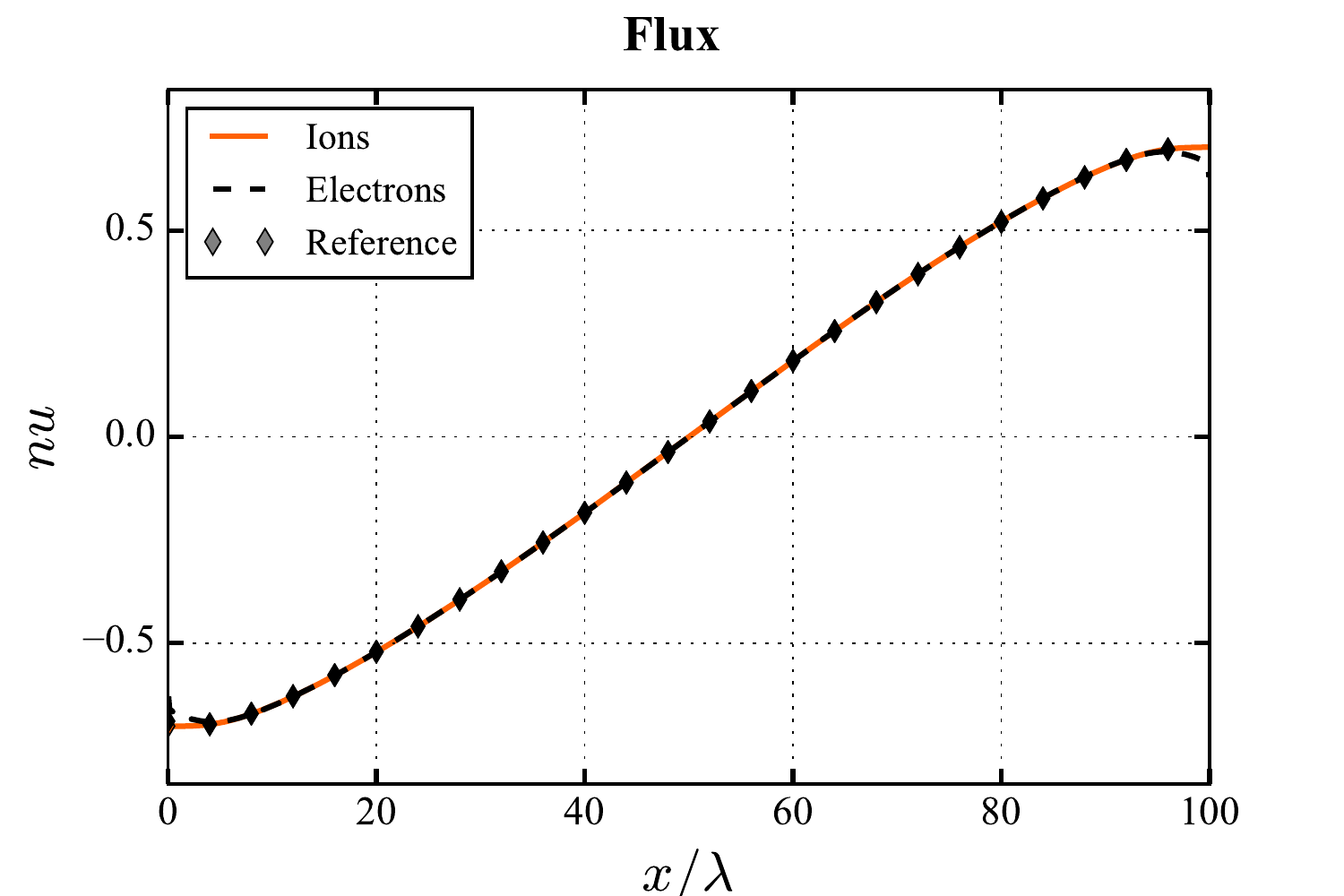} 
		  \includegraphics[trim=0cm 0cm 0cm 0cm, clip=true,width=0.355\textwidth]{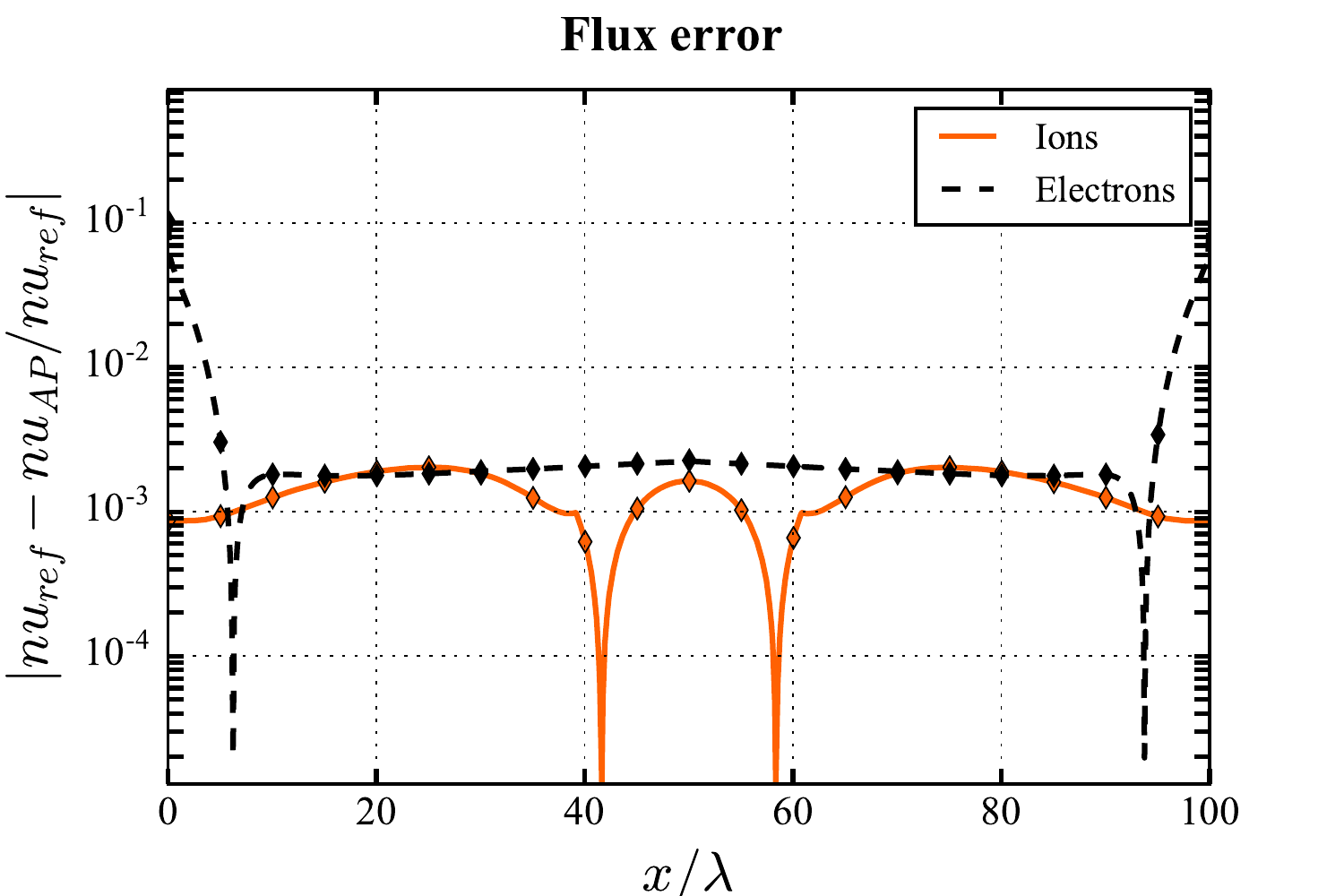}
		  \includegraphics[trim=0cm 0cm 0cm 0cm, clip=true,width=0.355\textwidth]{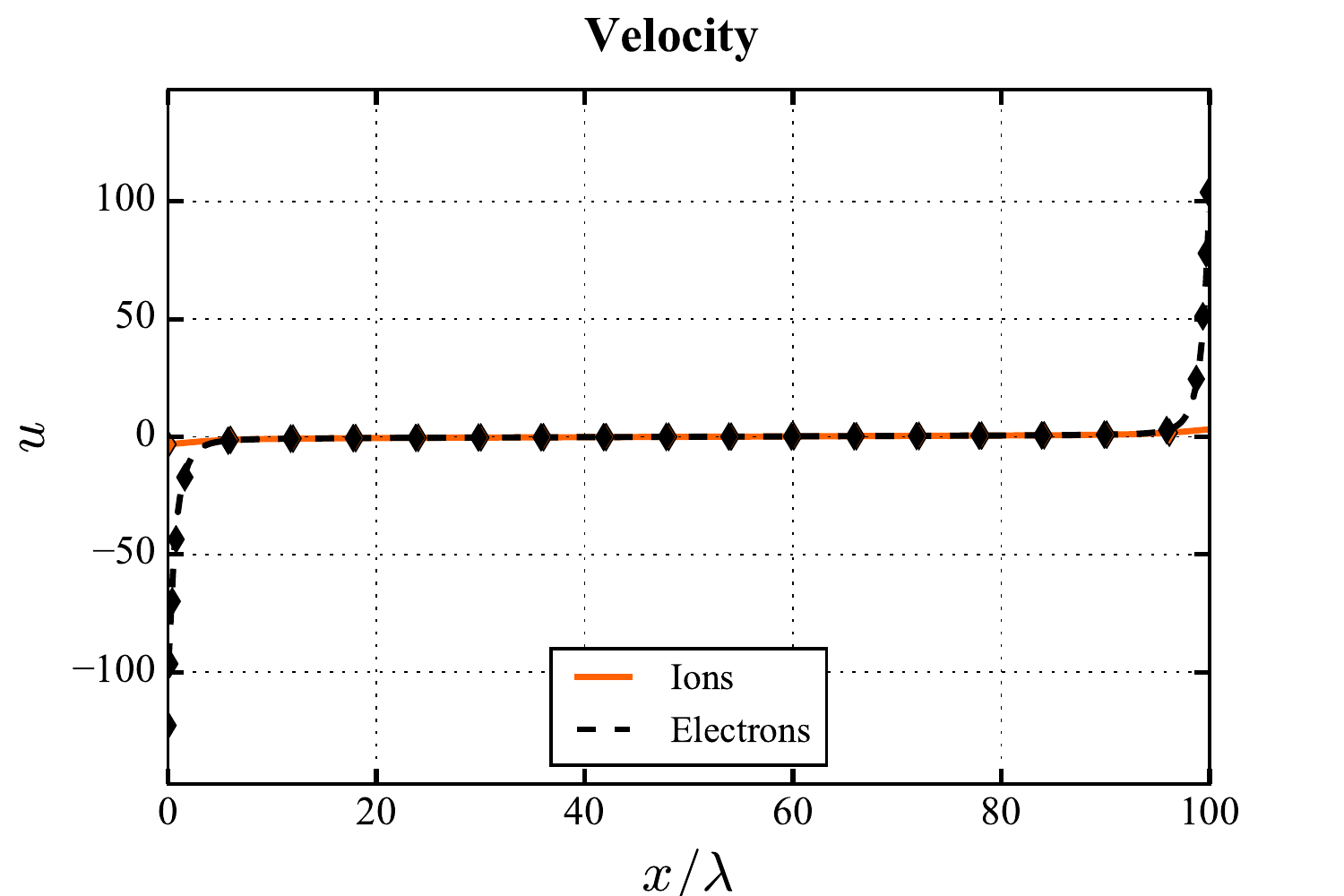} 
		  \includegraphics[trim=0cm 0cm 0cm 0cm, clip=true,width=0.355\textwidth]{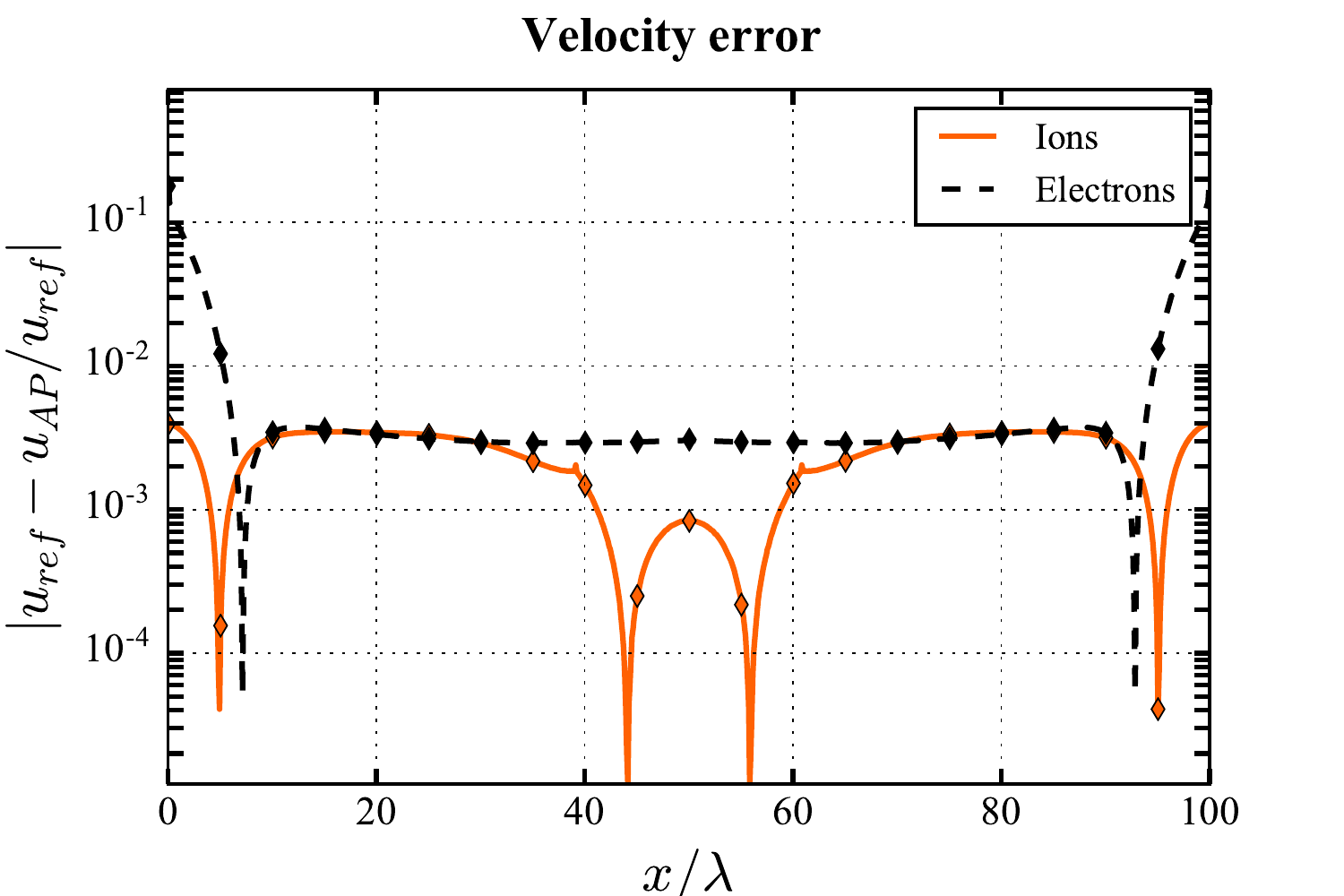}
		  \includegraphics[trim=0cm 0cm 0cm 0cm, clip=true,width=0.355\textwidth]{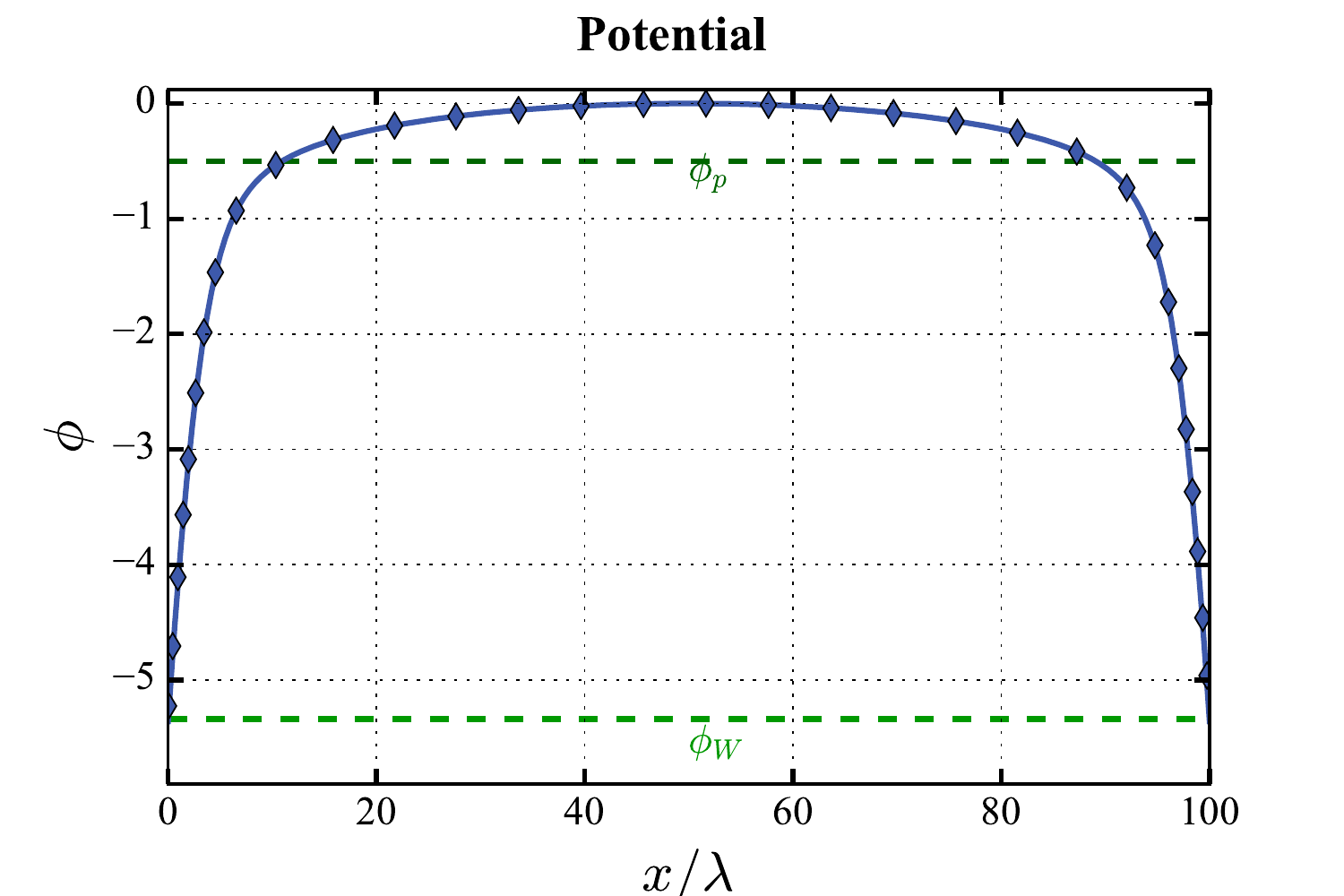} 
		  \includegraphics[trim=0cm 0cm 0cm 0cm, clip=true,width=0.355\textwidth]{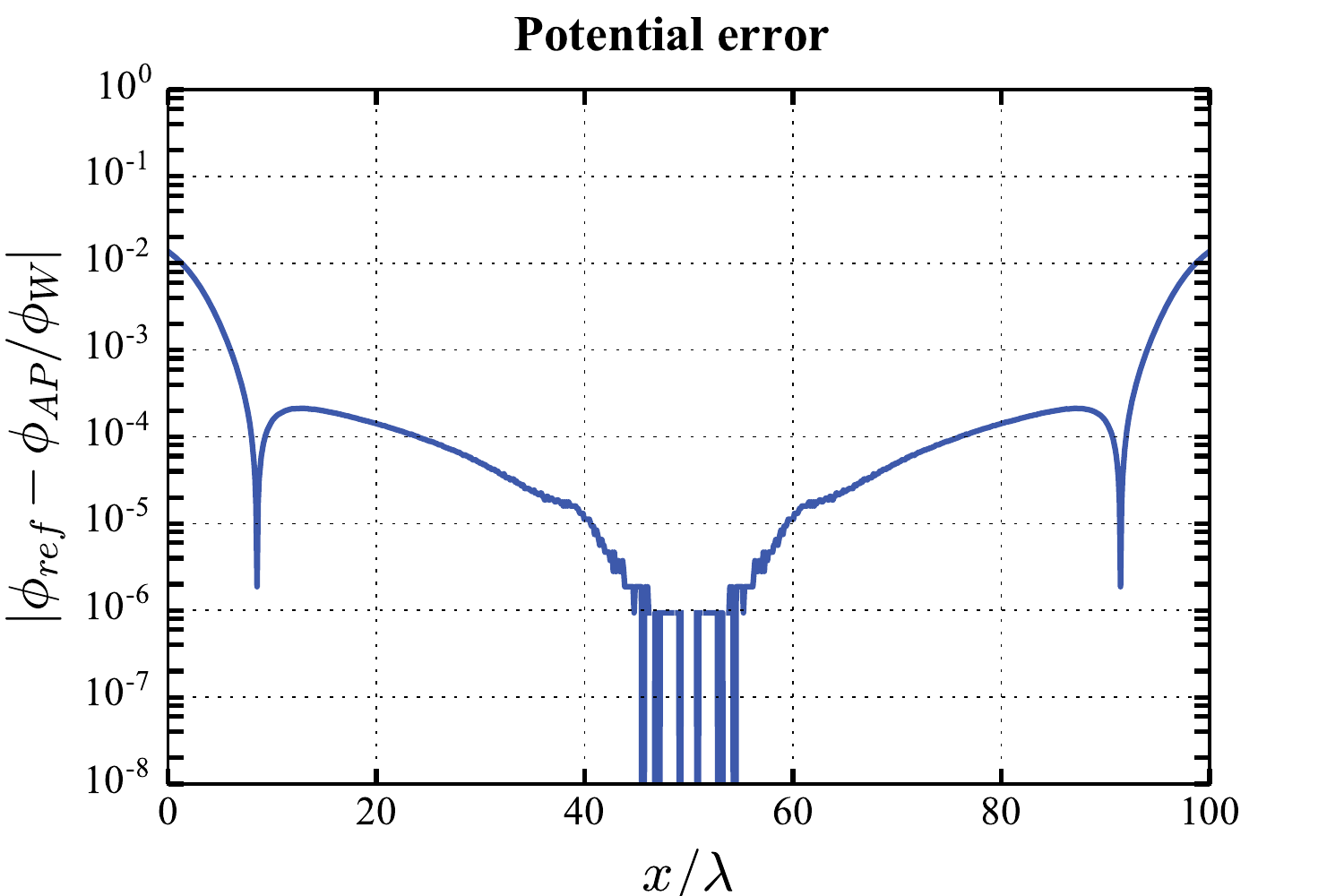}
	\caption{Collisionless isothermal sheath: Converged first-order solution with the asymptotic preserving scheme with low-Mach correction with CFL $= 0.4$ and $10^3$ mesh points. The simulation reduces dramatically the error in the electron flux.}
	\label{11_APscheme}
\end{figure}

Even though the error in the electron flux seems to not have a large impact in the other variables in Fig.~\ref{10_FirstOrderHLL}. This is only true for the steady state solution. However, if we analyze the evolution of the simulation to steady state, the solutions are radically different even though they have similar steady state solution. In Fig.~\ref{12_Transitent}, the solution at $t=75$ is presented for the reference solution, the HLL first order scheme and the AP scheme. We observe that whereas the solution of the AP scheme is very similar to the reference solution, the HLL standard scheme develops a plasma instability. This plasma instability is identified as a two-stream instability that is induced by the numerical error in the electron velocity. The instability is not present in the HLL scheme when the CFL is reduced to CFL$=0.01$ (see \cite{AlvarezLaguna18}. 

\begin{figure} [H] 
	\centering
		  \includegraphics[trim=0cm 0cm 0cm 0cm, clip=true,width=0.325\textwidth]{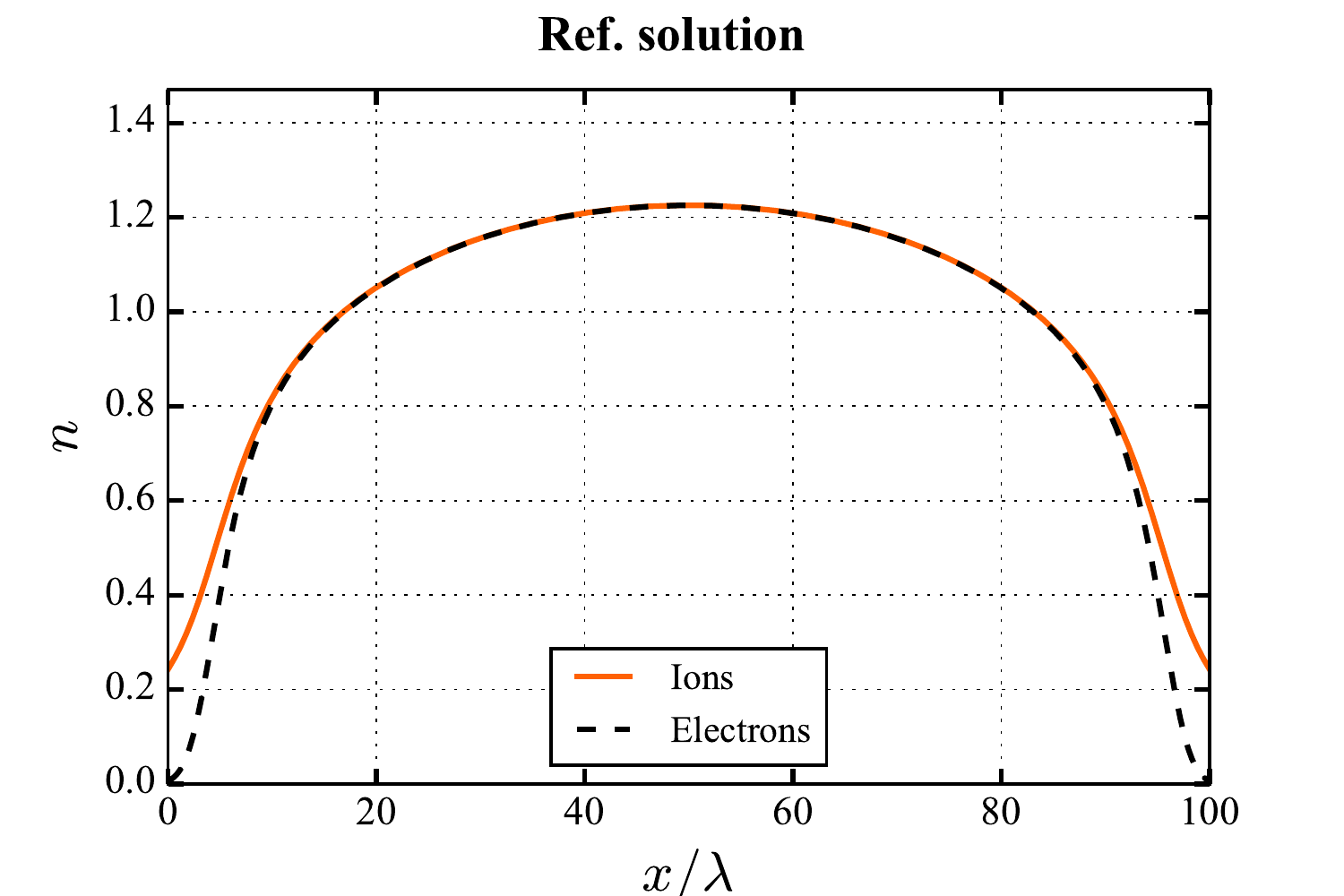} 
		  \includegraphics[trim=0cm 0cm 0cm 0cm, clip=true,width=0.325\textwidth]{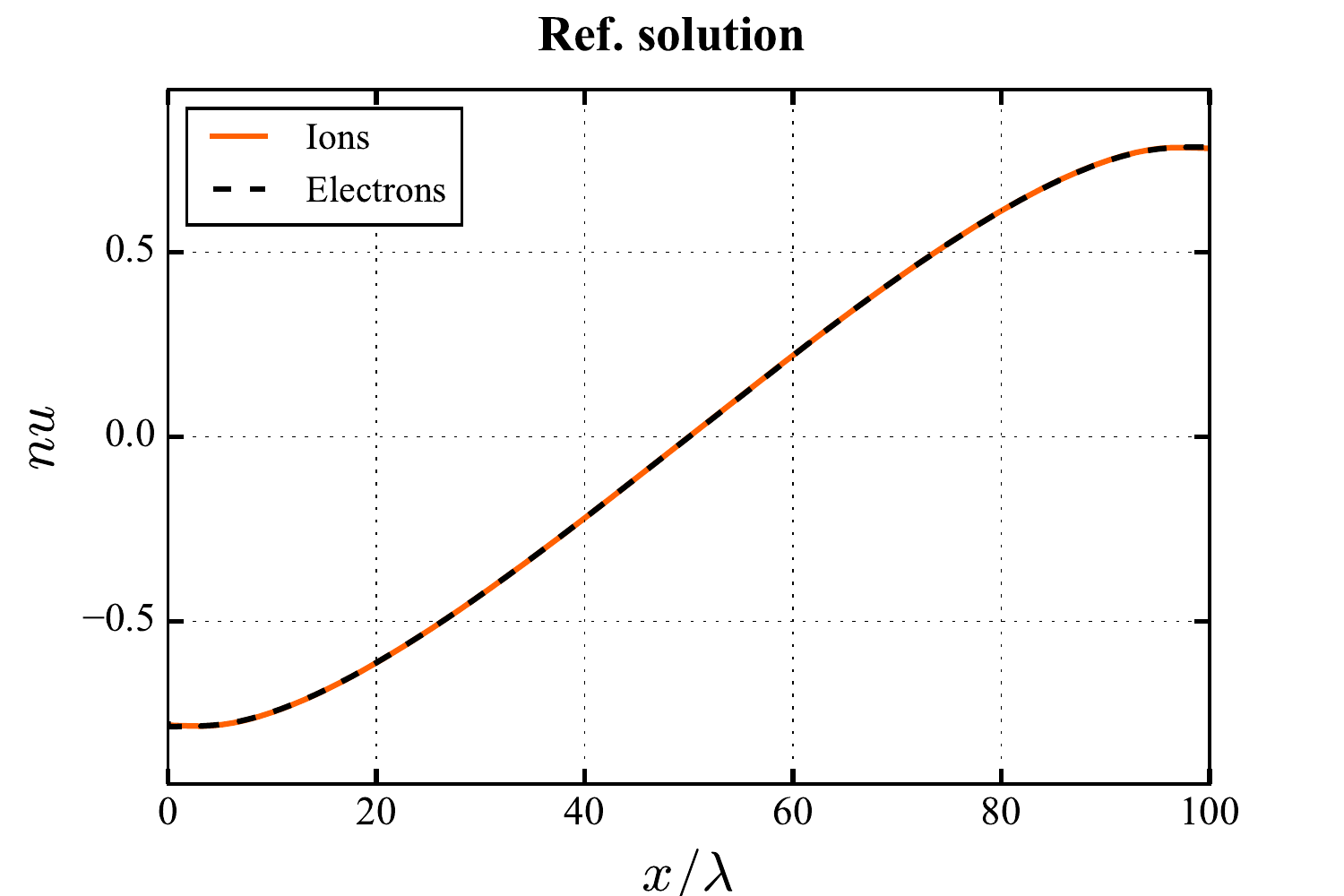}
		  \includegraphics[trim=0cm 0cm 0cm 0cm, clip=true,width=0.325\textwidth]{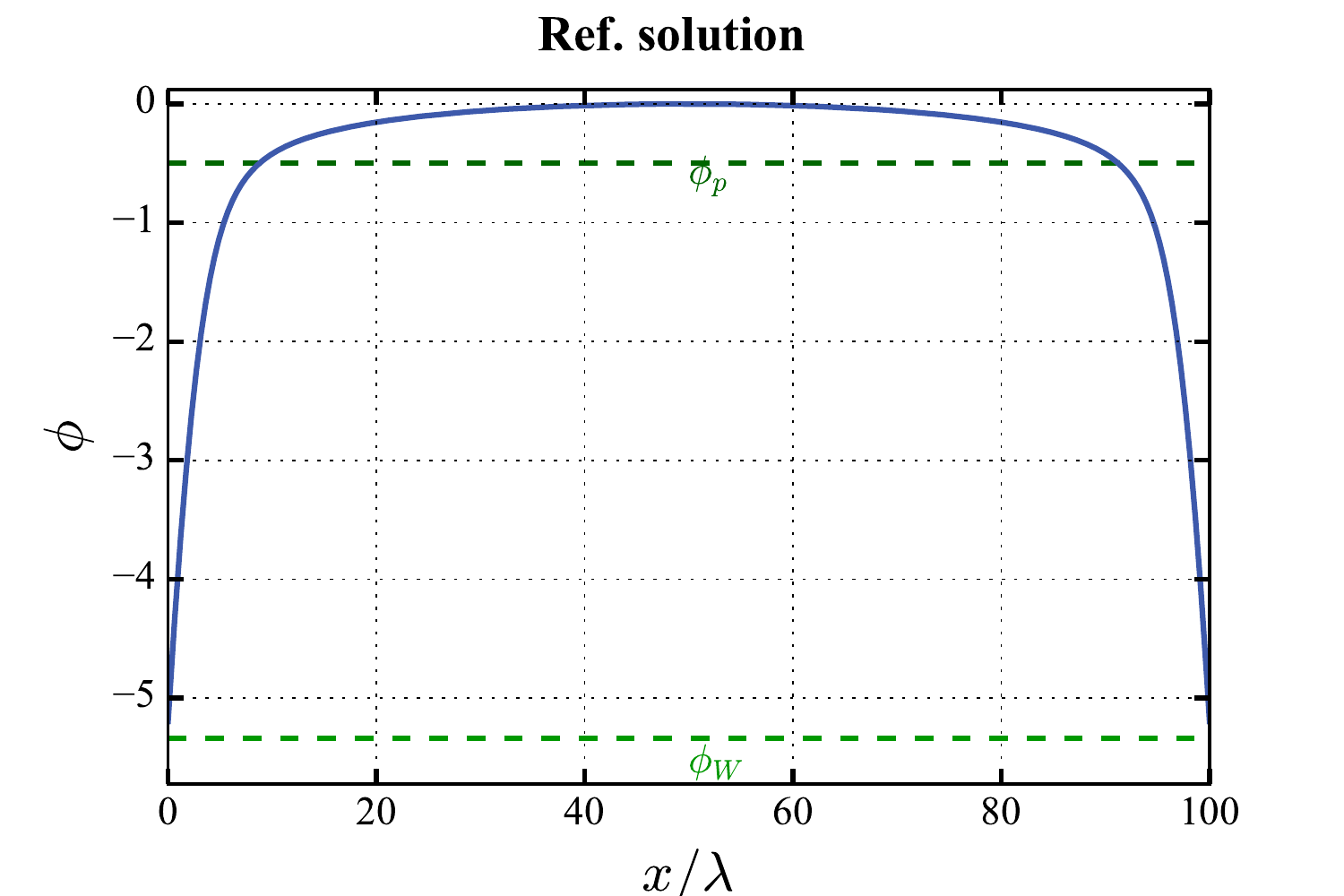}\\ 
		  \includegraphics[trim=0cm 0cm 0cm 0cm, clip=true,width=0.325\textwidth]{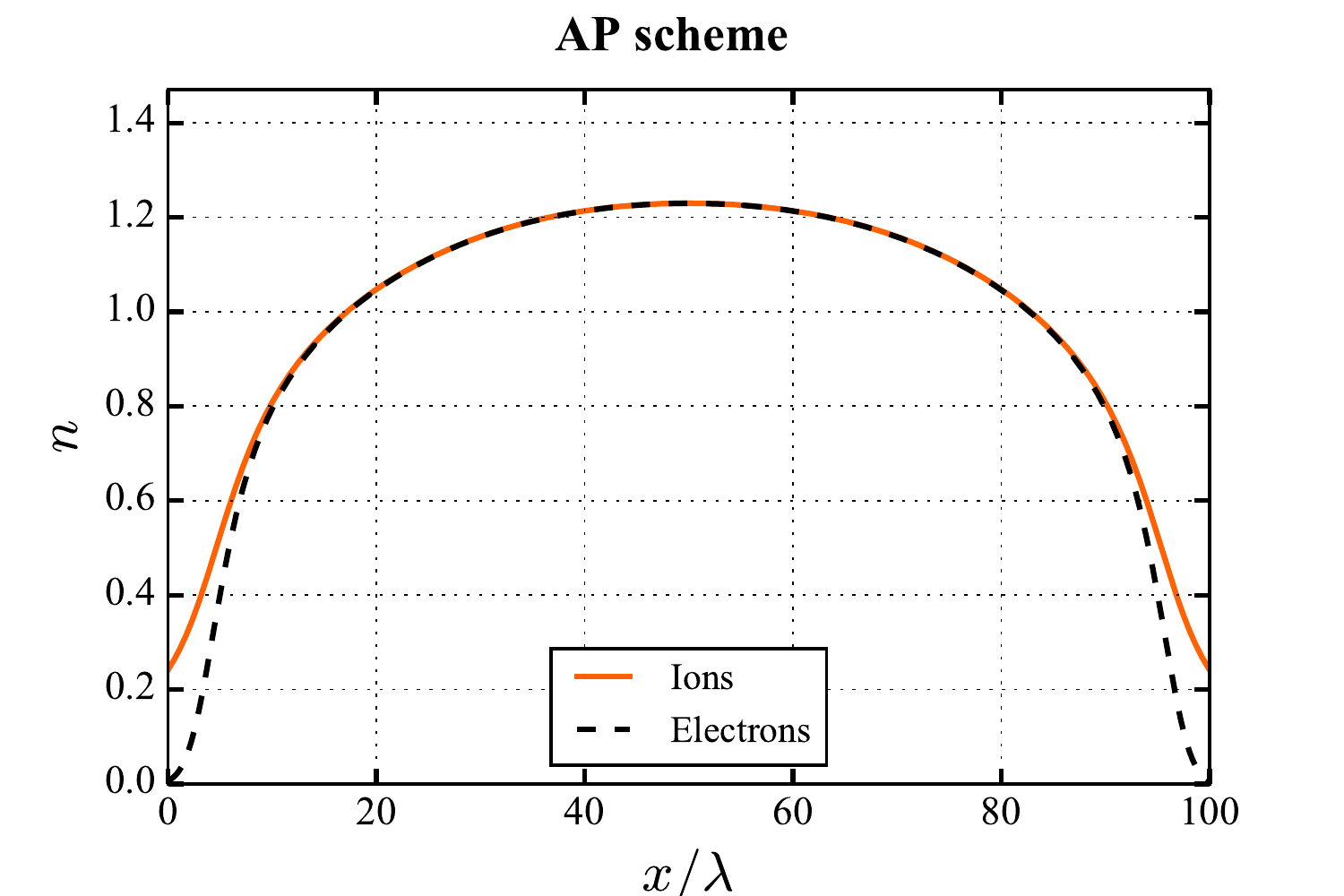}
		  \includegraphics[trim=0cm 0cm 0cm 0cm, clip=true,width=0.325\textwidth]{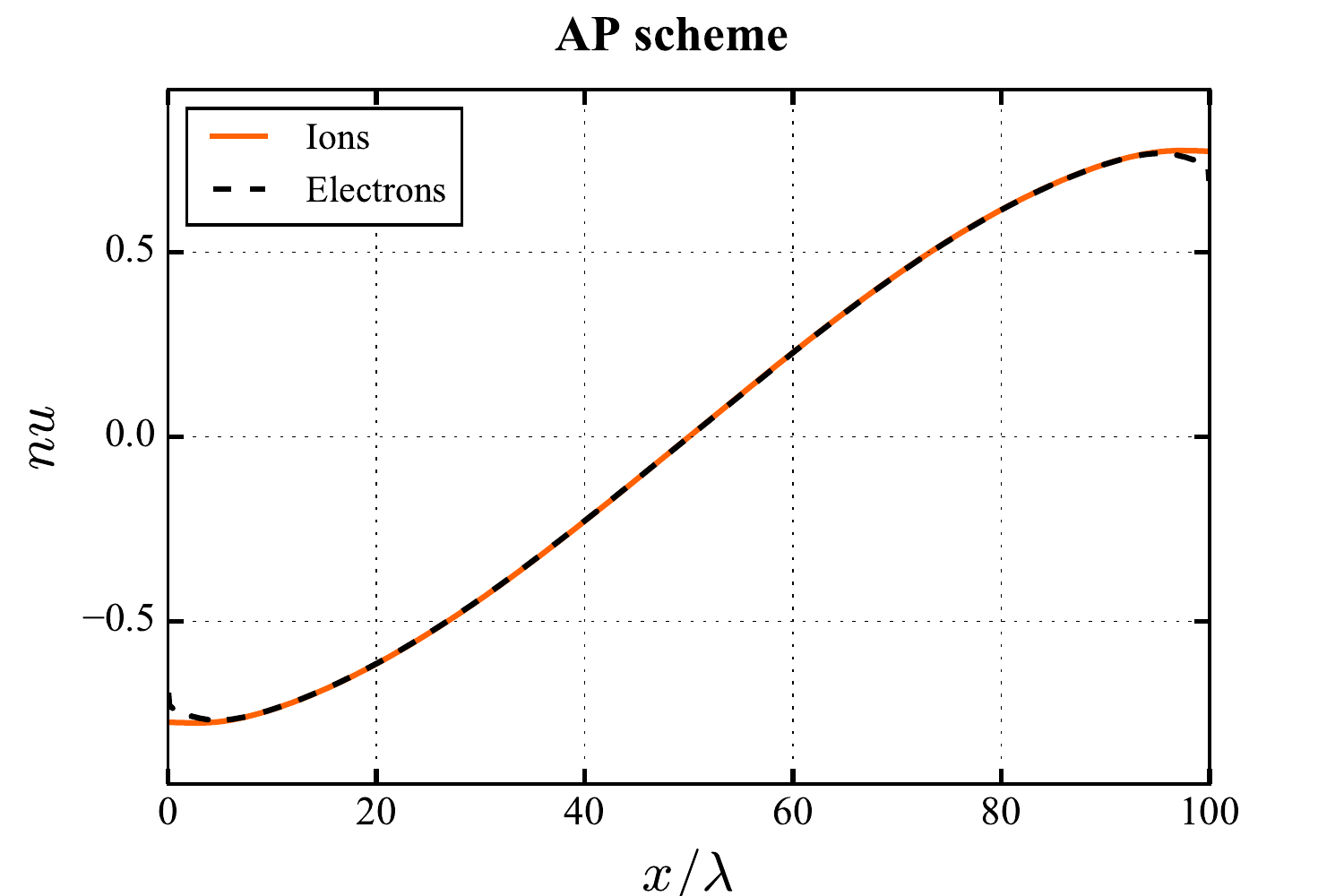} 
		  \includegraphics[trim=0cm 0cm 0cm 0cm, clip=true,width=0.325\textwidth]{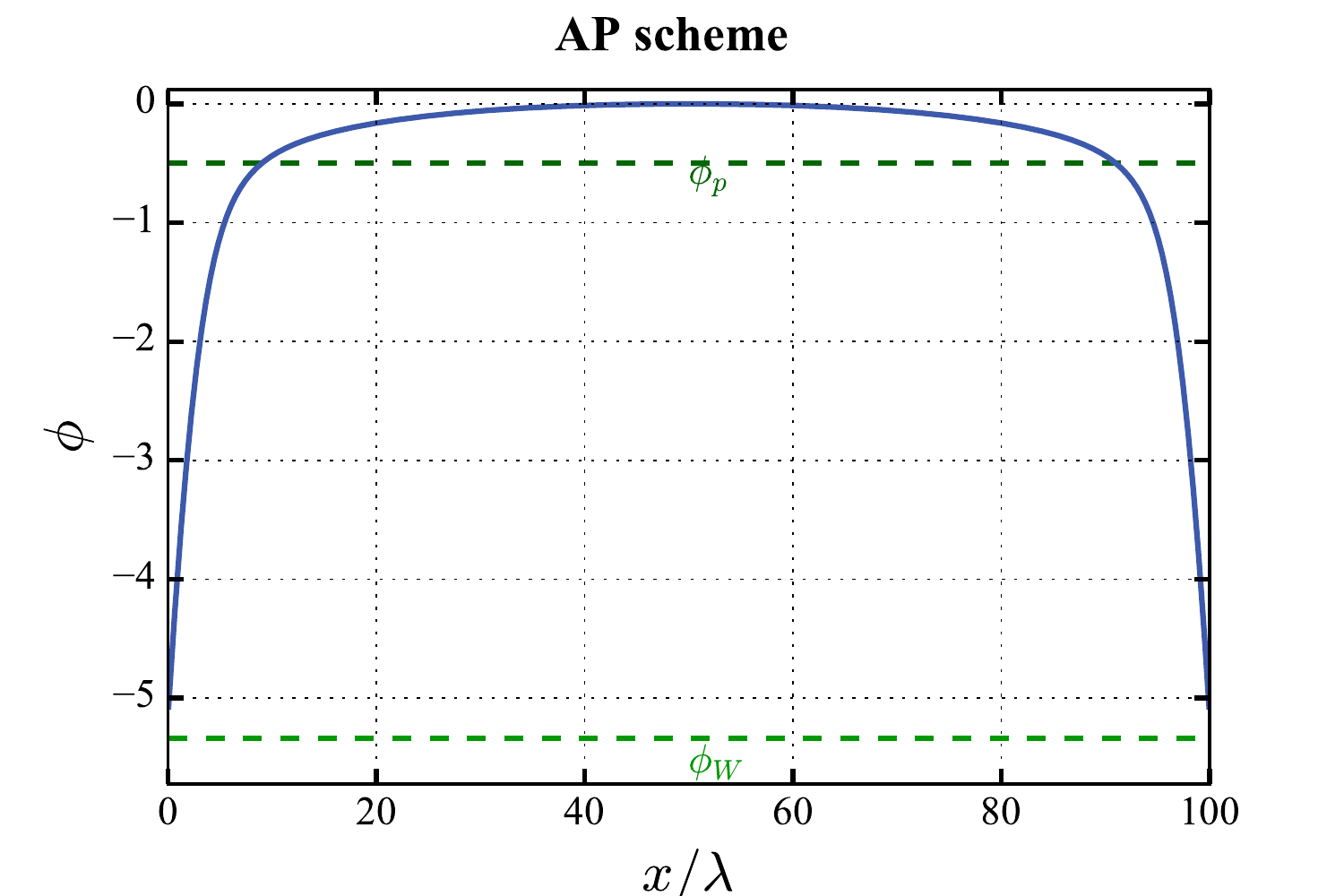}\\
		  \includegraphics[trim=0cm 0cm 0cm 0cm, clip=true,width=0.325\textwidth]{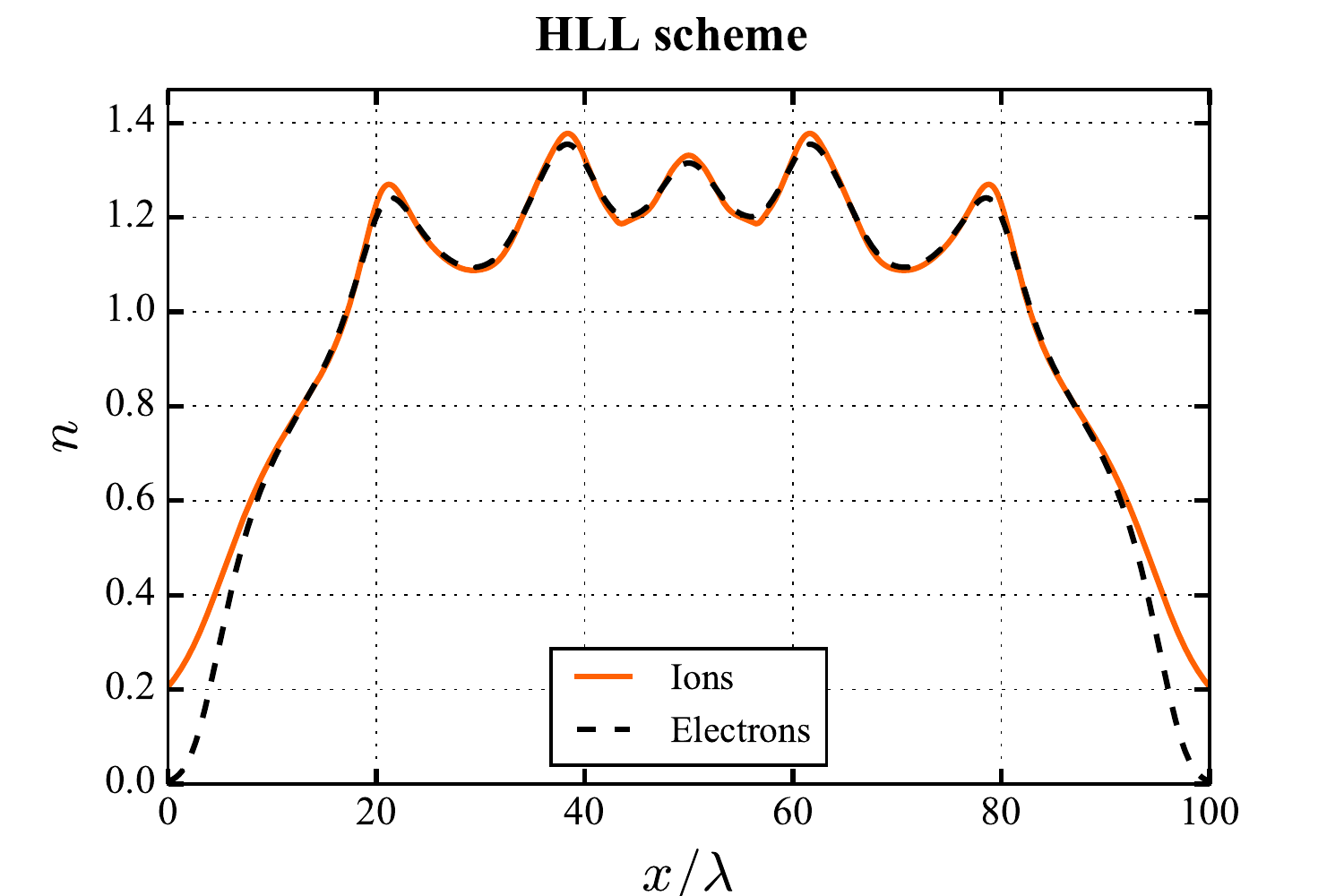}
		  \includegraphics[trim=0cm 0cm 0cm 0cm, clip=true,width=0.325\textwidth]{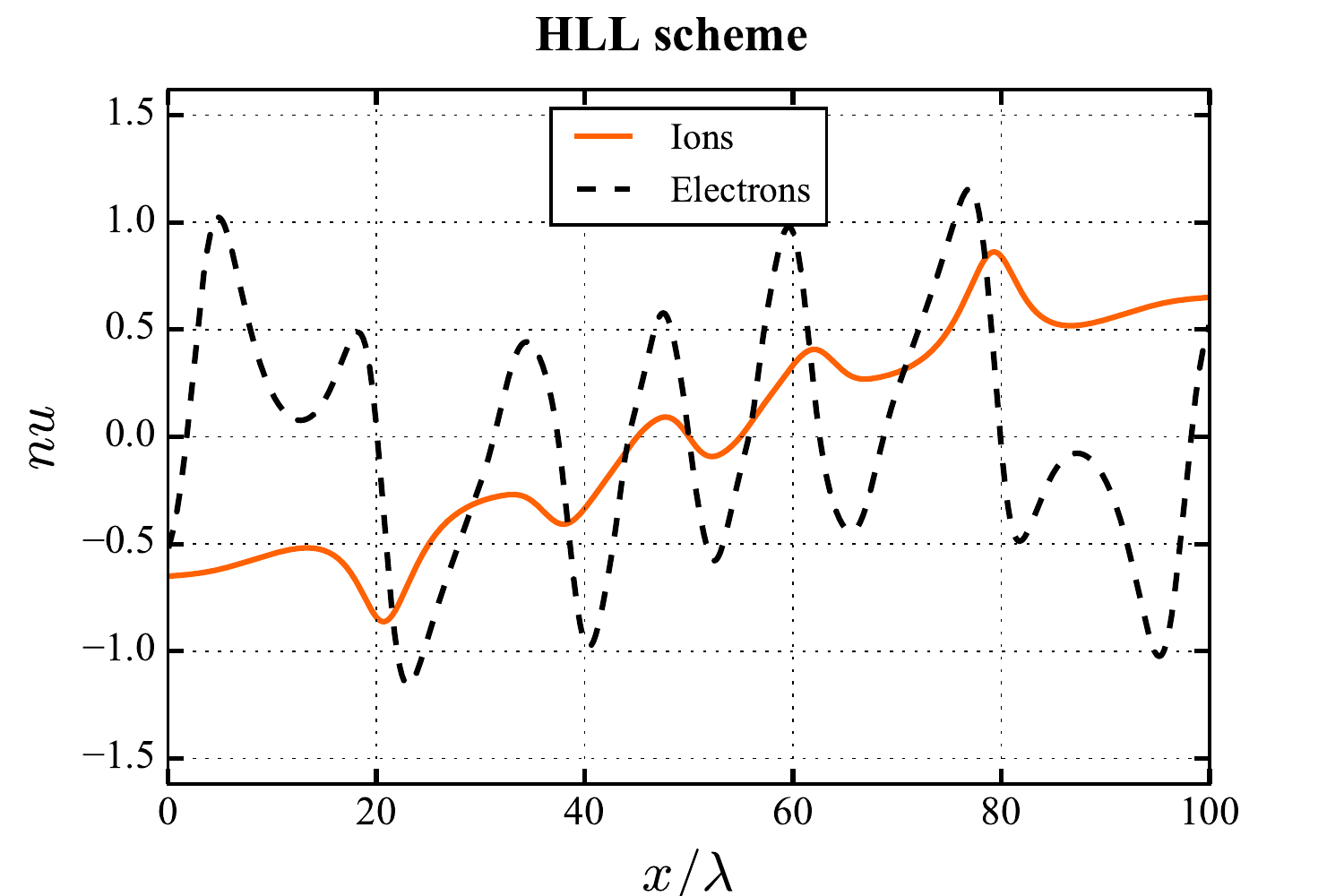} 
		  \includegraphics[trim=0cm 0cm 0cm 0cm, clip=true,width=0.325\textwidth]{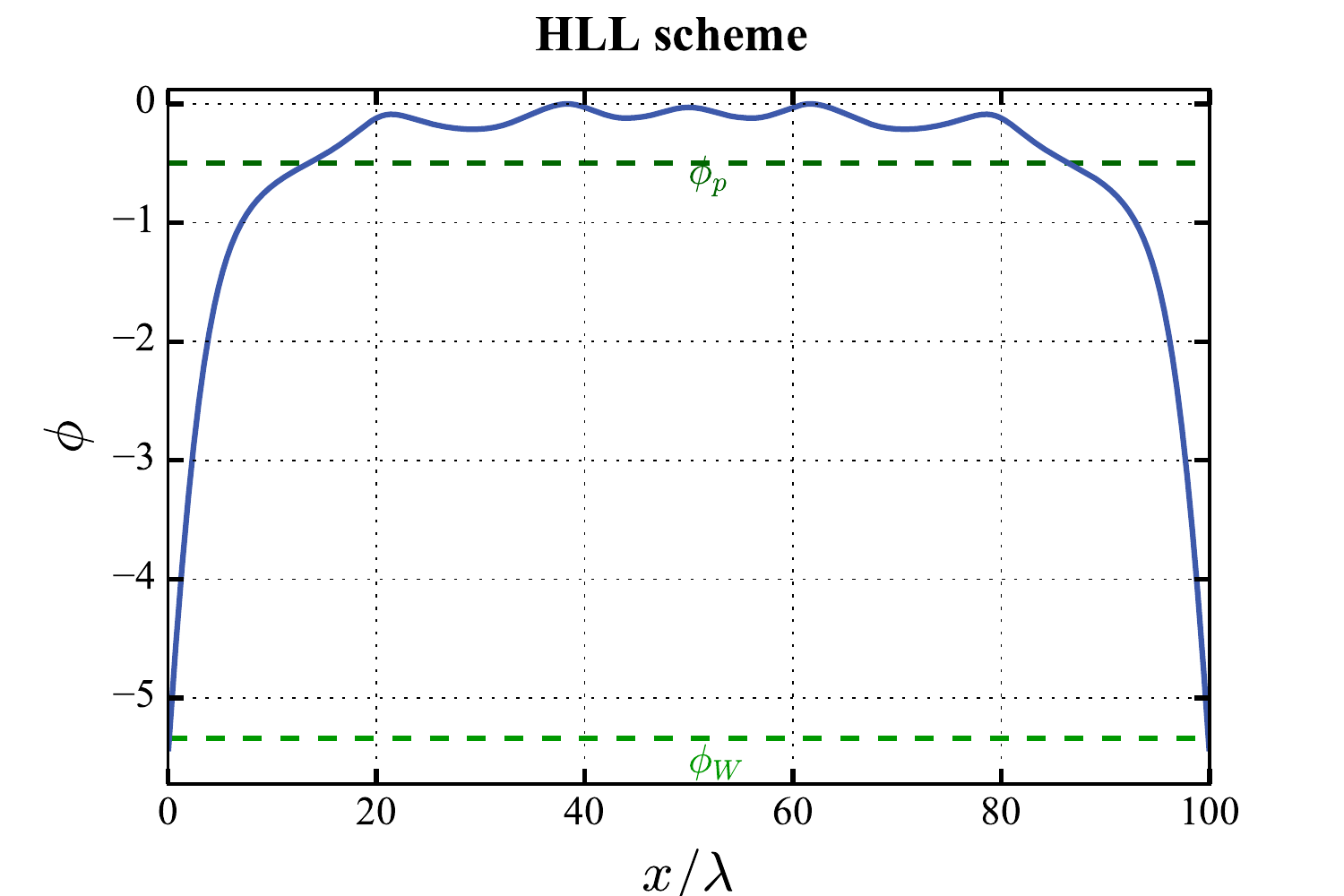}\\
	\caption{Collisionless isothermal sheath: Comparison of the HLL first-order standard discretization and the asymptotic preserving scheme with low-Mach correction discretization at $t=75$. The first-order standard discretization develops a numerical stability that is not present in the reference solution.}
	\label{12_Transitent}
\end{figure}

This numerical example illustrates that the AP scheme provides a more stable discretization that avoids spurious numerical instabilities in the quasi-neutral region, while still being able to accurately capture the regions with charge separation. Additionally, the scheme increases dramatically the accuracy of the computation of electron dynamics as well as increases by several orders of magnitude the computational efficiency. 

\section{Summary and conclusions}

In low-temperature plasma applications, the plasma-wall interaction is fundamental to explain the conditions inside the bulk of the plasma. For this reason, the plasma sheaths need to be included in the computational domain to fully explain the characteristics of the discharge. Consequently, the numerical model needs to be able to resolve the Debye length in order to capture the potential drop that forms beside the walls of the device. Similarly, the electron inertia is seen to play a fundamental role in low-pressure conditions. However, in most parts of the domain, the plasma behaves as quasi-neutral with massless electrons. A numerical method that is able to tackle accurately and efficiently these different regimes is a long-standing problem in computational plasmas. In this paper, we have presented a numerical scheme for the isothermal plasma equations coupled to Poisson's equation that proves to be accurate and stable both in the quasi-neutral and charge regimes, and both when electrons behave as massless and when electron inertia plays a role.

In Section \ref{sec:AsymBehavior}, we have discussed the asymptotic behaviour of the plasma equations when both the electron mass and the Debye length tend to zero. We find a different asymptotic behaviour as compared with previous work and that the zero-th order solution of the electric potential depends only on the electron dynamics when the electron-to-ion mass ratio is included in the asymptotic expansion. This relation is known in plasma physics as Boltzmann density distribution. Alternatively, the electron momentum equation at $\mathcal{O}(1)$ is similar the low-Mach limit of the Euler equations.

Based on this asymptotic behaviour, we have proposed a novel operator splitting strategy that relies on the Lagrange-projection scheme. Our approach solves for the electron acoustic system coupled to the Poisson equation in a first step and the electron transport together with the ion dynamics in a second step. We show that the first step preserves the asymptotic limit when both the electron mass and the Debye length tends to zero. This means that even when these scales are not resolved, the numerical scheme is able to find the correct asymptotic solution, without the need for an implicit scheme. Consequently, the scheme is able to achieve a great improvement in the accuracy of the solution as the numerical dissipation is properly scaled to the physics. More importantly, the numerical method improves dramatically the computational efficiency when these scales are not needed to be resolved. 

The numerical scheme has the advantage of solving a standard Poisson equation, as compared to other AP and semi-implicit methods that solves for more complex parabolic or elliptic equations. Additionally, the acoustic and electrostatic step involves only the electron dynamics and, therefore, the proposed model can be used in hybrid approaches where the electrons are treated as a fluid and the ions with a kinetic approach. Moreover, we have proposed a well-balanced treatment of the Lorentz force in the ions that proves to be fundamental in order to avoid spurious numerical oscillations when the temperature of the ions is much lower than this of electrons.

In the numerical results, we have simulated a quasi-neutral two-stream perturbation in thermal and low-temperature plasmas. Additionally, we have simulated a plasma sheath in a low-temperature plasma. The two-stream perturbation allowed us for assessing the performance of the model when the plasma is quasi-neutral but the electron velocity differs from this of ions. We also have proved that the scheme is able to reproduce the numerical solution even when the Debye length and the electron sound speed are not resolved. This allows us for achieving an efficient numerical scheme as compared to a scheme that needs to resolve the scales related to the Debye length or the electron acoustic waves. In our numerical experiments, the AP scheme is $10^9$ faster to simulate a plasma period of a quasineutral wave than the standard discretization that needs to resolve the Debye length and the electron plasma frequency. Additionally, the electron velocity is properly captured, avoiding the problems associated to the low-Mach limit of electrons.

Finally, we have proposed a numerical set-up to simulate the steady solution of a plasma between two floating plates. The set-up is able to find self-consistently a solution that agrees with the analytical plasma sheath theory. This solution contains three different regions: (i) the quasi-neutral plasma where the ions and electrons have the same density and velocity, (ii) the plasma-sheath transition where ions are accelerated to the Bohm's speed, and (iii) the sheath where the plasma is electrically charged and the electrons are accelerated to speeds comparable to their thermal speed. The proposed AP scheme is able to reproduce accurately these three regions. The standard discretization is not able to reproduce the electron flux as the standard numerical scheme is not able to reproduce the limit when the electron inertia is very small, even when the Debye length is properly resolved. Additionally, the standard scheme develops non-linear instabilities that are triggered by the numerical error in the electron velocity.

In the present paper, we have discussed the fundaments of an asymptotic preserving scheme for a fluid plasma model in low-temperature plasma conditions. We have highlighted that the ideas presented in this paper are the basis for future developments, extending the scheme to multi-dimensions, the inclusion of the energy equation, and the development of high-order schemes including this splitting strategy. Similarly, the scheme presented in the present paper would be specially advantageous for a non-uniform mesh where the quasi-neutral region of the plasma does not need to be resolved, whereas the boundaries resolve the Debye length in order to capture the plasma sheath.

\section*{Acknowledgements}
The first author is funded by the postdoctoral
fellowship from the {\it Fondation Mathematique Jacques Hadamard (FMJH)}, LabEx Math\'ematique Hadamard  - ANR11-LABX-0056 and LabEX LASIPS - ANR10-LABX-0032. The authors would like to thank Dr.~Nagi N.~Mansour and Dr.~Samuel Kokh for very useful discussions during the 2018 NASA Summer Program at NASA Ames Research Center. T.~M.~ is supported by a ``Chaire Jean d'Alembert" of University Paris Saclay at Ecole Polytechnique.
 \newpage

\section*{Appendix A: Approximate Riemann solver of the acoustic step}

We can write the system of eq.~\eqref{acoustic_NewVar_m} as 
\begin{equation}
\dtb\left(\begin{array}{c}
  \tauelec\\
  \ueb
\end{array}\right)
+ 
\dm
\left(\begin{array}{c}
 -\ueb\\
  \varepsilon^{-1}\tauelec^{-1}
\end{array}\right)
=
\left(\begin{array}{c}
  0\\
  \varepsilon^{-1}\dxb\phi
\end{array}\right).\label{acoustic_conservative}
\end{equation}
Note that the pressure is written as function of the conservative variable $\tauelec$, i.e., $\pelec = \varepsilon^{-1}\rhoeb = \varepsilon^{-1}\tauelec^{-1}$. We define the conservative variables of the electron acoustic step, the flux term, and the source term as
\begin{equation}
\UelecAcous = \left(\tauelec,~\ueb\right)^T,~~~\FelecAcous = \left(-\ueb,~\varepsilon^{-1}\tauelec^{-1}\right)^T,~~~\text{and}~~~\SelecAcous = \left(0,~\varepsilon^{-1}\dxb\phi\right)^T.
\end{equation}

The Jacobian matrix of the flux term in the LHS of eq.~\eqref{acoustic_conservative} reads
\begin{equation}
  \frac{\partial\FelecAcous}{\partial \UelecAcous} = \left[ \begin{array}{c c}
    0 & -1\\
    -\varepsilon^{-1}\tauelec^{-2} & 0
 \end{array} \right].
\end{equation}
The eigenvalues of this matrix are $\lambda = \left(-\rhoeb\varepsilon^{-1/2},~ \rhoeb\varepsilon^{-1/2}\right)$. Consequently, the Lax-Friedrich numerical flux for the previous Riemann problem reads
\begin{equation}
\FNumElecAcous(\UelecAcousL,\UelecAcousR) = \left(\begin{array}{c}
 -\left[\frac{\uebR + \uebL}{2} - \frac{ \nelecbjphalf\varepsilon^{-1/2}}{2}\left(\taueR - \taueL\right)\right]\\
  \varepsilon^{-1}\frac{\left(\rhoebR + \rhoebL\right)}{2} - \frac{\nelecbjphalf\varepsilon^{-1/2}}{2}\left(\uebR - \uebL\right)
\end{array}\right)\label{LFNumAcoustic}
\end{equation}
where the density at the interface is computed as $\nelecbjphalf = \left(\rhoebR + \rhoebL\right)/2$. 

\section*{Appendix B: Truncation error of the acoustic system}\label{ap:truncationError}

We study the truncation error of the upwind scheme for the acoustic step. We consider that the variables describe a smooth flow so that we can expand them in Taylor series the conservative variables. By using the Lax-Friedrich scheme of eq.~\eqref{LFNumAcoustic}, we obtain the following approximations in the mass equation

\begin{itemize}
  \item[*] $\frac{\tauelec^{\nPlusOneM} - \tauelec^n}{\Delta t} \approx \dtb \tauelec(t^n,x_j) + \dttb  \tauelec(t^n,x_j) \Delta t + \mathcal{O} (\Delta t^2)$
  \item[*] $[\dm  \ueb]^n_j \approx \frac{1}{\rhoeb}\dxb \ueb(t^n,x_j) + \varepsilon^{-1/2}\dxxb \tauelec(t^n,x_j) \Delta x + \mathcal{O} (\Delta x^2) + \mathcal{O} ( \varepsilon^{-1/2}\Delta x^3)$
    \item[*] $[\dmm \pelecb]^n_j \approx \frac{1}{\varepsilon\rhoeb^2} \dxxb\rhoeb(t^n,x_j) + \mathcal{O} ( \varepsilon^{-1}\Delta x^2)$
\end{itemize}

Therefore, the truncation error of the discretization is

\begin{equation}
  [\text{Trunc. Error}]_{\tauelec} = \left|\left(\varepsilon^{-1/2} - \text{CFL}\right)\dxxb \tauelec \Delta x + \frac{\Delta t}{\varepsilon\lambdaSq}\left( 1 - \frac{\rhoib}{\rhoeb}\right) - \frac{\Delta t}{\varepsilon\rhoeb^2}\dxxb \rhoeb + \mathcal{O}(\varepsilon^{-1}\Delta x^2 \Delta t) + \mathcal{O}(\Delta x^2) + \mathcal{O} ( \varepsilon^{-1/2}\Delta x^3)\right|
\end{equation}

Similarly, we obtain the following approximations for the equation for the velocity, expanded in Taylor series
\begin{itemize}
  \item[*] $\frac{\ueb^{\nPlusOneM} - \ueb^n}{\Delta t} \approx \dtb \ueb(t^n,x_j) + \dttb  \ueb(t^n,x_j) \Delta t + \mathcal{O} (\Delta t^2)$
  \item[*] $[\dm  \pelecb]^n_j \approx \frac{\varepsilon^{-1}}{\rhoeb}\dxb \rhoeb(t^n,x_j) + \varepsilon^{-1/2}\dxxb \ueb(t^n,x_j) \Delta x + \mathcal{O} (\Delta x^2) + \mathcal{O} ( \varepsilon^{-1/2}\Delta x^3)$
    \item[*] $[\dx \phib]_j^{\nPlusOneM} \approx \frac{1}{\varepsilon} \dxb\phib(t^n,x_j) + \mathcal{O} ( \varepsilon^{-1}\Delta x^2)$ 
\end{itemize}
Consequently, the truncation error of the velocity equations reads
\begin{equation}
  [\text{Trunc. Error}]_{\ueb} = \left|\left(\varepsilon^{-1/2} - \text{CFL}\right)\dxxb \ueb \Delta x + \mathcal{O} ( \varepsilon^{-1}\Delta x^2) \right|
\end{equation}

\footnotesize
\bibliographystyle{unsrt}
\bibliography{references}\label{app:bibliography}
\end{document}